\newcommand{\id}{\hat{1}}
\newtheorem{lemma}{Lemma}[section]
\newtheorem{prop}[lemma]{Proposition}
\newtheorem{theorem}[lemma]{Theorem}
\newtheorem{cor}[lemma]{Corollary}
\newtheorem{prob}[lemma]{Problem}
\newtheorem{rem}[lemma]{Remark}
\newtheorem{definition}[lemma]{Definition}
\newcommand{\Hil}{\mathcal{H}}
\newcommand{\ketbra}[1]{|{#1}\rangle\langle{#1}|}
\title{Quantum Oracle Separations from Complex but Easily Specified States}
\author[N. LaRacuente]{Nicholas LaRacuente}
\thanks{\hspace{-5mm} NL is supported by IBM as a Postdoctoral Scholar at the University of Chicago \& the Chicago Quantum Exchange. Contact: nlaracuente@uchicago.edu}
\begin{document}

\begin{abstract}
A foundational question in quantum computational complexity asks how much more useful a quantum state can be in a given task than a comparable, classical string. Aaronson and Kuperberg showed such a separation in the presence of a quantum oracle, a black box unitary callable during quantum computation. Their quantum oracle responds to a random, marked, quantum state, which is intractable to specify classically. We constrain the marked state in ways that make it easy to specify classically while retaining separations in task complexity. Our method replaces query by state complexity. Furthermore, assuming a widely believed separation between the difficulty of creating a random, complex state and creating a specified state, we propose an experimental demonstration of quantum witness advantage on near-term, distributed quantum computers. Finally, using the fact that a standard, classically defined oracle may enable a quantum algorithm to prepare an otherwise hard state in polynomial steps, we observe quantum-classical oracle separation in heavy output sampling.
\end{abstract}

\maketitle

\newpage
\tableofcontents
\newpage

\section{Introduction}

\newcommand{\orac}{\mathcal{O}}

In the black box or oracle model, a computation has access to a subroutine with $O(1)$ cost but no knowledge of its implementation. The oracle may perform an otherwise costly computation in $O(1)$ time or reveal information that would otherwise not be discoverable from the inputs. A quantum computer may conventionally use a classically-defined oracle in the computational basis, calling from superposition. Quantum-classical oracle separation is at the root of quantum complexity theory, going back to some of the earliest known separations and lower bounds \cite{bernstein_quantum_1997, boyer_tight_1998, zalka_grovers_1999}. An oracle function with the form $\orac : \{0,1\}^n \rightarrow \{0,1\}$ is known as a \textit{standard oracle} and has a natural extension from classical bit strings to quantum states in the computational basis.

In the quantum setting, there are however more ways in which one could define a sensible notion of an oracle. Aaronson and Kuperberg introduce the notion of a \textit{quantum oracle} \cite{aaronson_quantum_2006}, an arbitrary unitary callable with $O(1)$ cost as a black box subroutine. Aaronson and Kuperberg's quantum oracle has no obvious classical analog, relying on states in arbitrary bases. Fefferman and Kimmel construct an ``in-place" quantum oracle that appears to possess a classical analog but is still not a standard oracle, as it is not self-inverse \cite{fefferman_quantum_2018}. One may consider other forms of potentially restricted, non-classical oracles, such as quantum channels that may introduce mixture, or quantum oracle families with promises on the unitaries available.

Quantum or classical, oracles go hand-in-hand with diagonalization techniques, a notion formalized for subset-based oracles in \cite[Section 5, Recipe 1]{fefferman_quantum_2018}. Intuitively, one shows that each possible program using polynomial oracle calls must in some case fail to yield a sufficiently distinguishable final state on a string in the language from one not in the language. This approach has an information-theoretic interpretation: the information any fixed program obtains from polynomial oracle calls does not scale as quickly as the information needed to accurately decide the problem. In \cite{aaronson_quantum_2006}, that information is what's needed to unlock a secret hidden in the quantum oracle. In particular, it cannot solve the following task:
\begin{definition}[Marked Quantum State Task (MQST)] \label{def:mqst}
Given $n \in \mathbb{N}$ and $\ket{\psi} \in \Hil_{2^n}$ ($n$-qubit Hilbert space), let $U : \Hil_{2^n} \rightarrow \Hil_{2^n}$ be a unitary (quantum oracle) with the promise that either:
\begin{enumerate}
	\item $U \ket{\psi} = - \ket{\psi}$, and $U$ acts as identity on orthogonal states.
	\item $U$ is the identity.
\end{enumerate}
The task is to distinguish these cases with at least $1/poly(n)$ success probability.
\end{definition}
Given an untrusted copy of the marked state, it is easy to verify case (1). Most quantum states on $n$ qubits lack a polynomial-length classical description within reasonable orders of approximation (see Section \ref{sec:statecomplexity}, or the original explanation of \cite{aaronson_quantum_2006}). Hence any scheme to reveal the marked state using polynomially-many bits of classical information must in some cases fail to distinguish superpolynomially large sets of candidates.

As a first result in this paper, we show that the information-theoretic barrier and diagonalization are not necessary for the primary results achieved in \cite{aaronson_quantum_2006}. To do so, we find exponentially large families of quantum states that have polynomial classical descriptions but are not approximable using any polynomial-size circuit. We call the \textit{state complexity} of a quantum state the minimum circuit length needed to prepare it from the initial state $\ket{0}^{O(poly(n))}$ using $O(poly(n))$ elementary gates.
\begin{theorem} \label{thm:main}
There exist families of marked quantum state oracles $(U_n)_{n=1}^\infty$ for which case (1) of MQST can be verified using an untrusted, polynomial-size, classical string as proof and polynomially many queries to $U_n$ but needing exponential time. Given an untrusted, polynomial-size quantum state as proof, case (1) can be verified in polynomial time with one query. The marked states are specified to $1/poly(n)$ precision by polynomial, classical bit strings.

For some such families of quantum oracles, polynomial compute time and polynomial queries suffice to verify case (1) of MQST given a trusted, polynomial quantum state as proof that depends on the dimension of the marked state but not otherwise on the marked state.
\end{theorem}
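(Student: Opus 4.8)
The plan is to produce a single trusted proof that certifies case (1) of MQST (Definition~\ref{def:mqst}) for an exponentially large family of marked oracles at once, by exploiting the fact that, for verification by a Hadamard (or SWAP) test, the proof state need not \emph{equal} the marked state $\ket{\psi}$: it suffices that it has inverse‑polynomial overlap with it. Concretely, I would fix for each $n$ a state $\ket{\phi_n}\in\Hil_{2^n}$ of the kind produced in Section~\ref{sec:statecomplexity} and Theorem~\ref{thm:main}: $\ket{\phi_n}$ has a classical description of length $poly(n)$, yet no $poly(n)$‑size circuit prepares a state whose fidelity with $\ket{\phi_n}$ is bounded away from $0$ (indeed it is $2^{-\Omega(n)}$‑far from every such state). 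Choosing $\theta_n$ with $\sin\theta_n = 1/poly(n)$, I would take the family of marked states to be
\[
\mathcal{F}_n=\big\{\,\ket{\psi_{n,s}}\;:\;s\in\{0,1\}^n\,\big\},\qquad \ket{\psi_{n,s}}=\cos\theta_n\,\ket{\phi_n}+\sin\theta_n\,\ket{s^{\perp}},
\]
where $\ket{s^\perp}$ is the unit vector along $\ket{s}-\braket{\phi_n|s}\ket{\phi_n}$ (discarding the at most one $s$, if any, for which this vanishes), and the corresponding oracle is $U_{n,s}=\id-2\ketbra{\psi_{n,s}}$, so that $\braket{\phi_n|\psi_{n,s}}=\cos\theta_n$ exactly.

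Next I would verify the three structural requirements. \emph{Easily specified:} $\ket{\psi_{n,s}}$ is determined to precision $1/poly(n)$ by $(n,s)$ together with the fixed $poly(n)$‑length description of $\ket{\phi_n}$. \emph{High state complexity:} $\ket{\psi_{n,s}}$ is within trace distance $\sin\theta_n = 1/poly(n)$ of $\ket{\phi_n}$, so if some $poly(n)$‑size circuit prepared a state of constant fidelity with $\ket{\psi_{n,s}}$, the triangle inequality for trace distance would contradict the approximation‑robust hardness of $\ket{\phi_n}$. \emph{Exponentially large:} the $2^n$ (minus at most a constant) choices of $s$ give that many distinct oracles. Then I would describe the verifier: the trusted proof is $\ket{\phi_n}^{\otimes m}$ with $m=poly(n)$, a state depending only on $n$ (equivalently on the dimension $2^n$) and never on which $s$ is marked. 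On each copy it runs the Hadamard test for $U_n$ (ancilla in $\ket{+}$, controlled‑$U_n$ — one query — Hadamard on the ancilla, measure), which returns outcome $1$ with probability $\tfrac12\big(1-\mathrm{Re}\braket{\phi_n|U_n|\phi_n}\big)$; the verifier reports ``case (1)'' iff some run returns $1$. When $U_n=U_{n,s}$ this probability is $\tfrac12(1-(1-2\cos^2\theta_n))=\cos^2\theta_n\ge\tfrac12$, and when $U_n=\id$ it is $0$; so a single query already separates the cases with advantage $\ge\tfrac12$, $m=poly(n)$ copies drive the completeness error to $2^{-m}$, and the only circuitry used is the Hadamard test, so runtime and query count are $poly(n)$. (Soundness in fact requires no trust: when $U_n=\id$ the verifier accepts with probability $0$ for \emph{any} supplied state; trust is invoked only for completeness against the honest prover, who cannot be asked to re‑prepare $\ket{\phi_n}$ efficiently.)

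The one genuinely substantive point — and the step I expect to be the crux — is engineering the family so that ``easily specified'', ``provably hard to prepare'', ``exponentially numerous'', and ``all sharing non‑negligible overlap with a single fixed state'' hold simultaneously. The reconciliation is precisely the observation above that a proof becomes useful as soon as it has $1/poly(n)$ overlap with the marked state, so it is enough to let each marked state be a $1/poly(n)$ perturbation of one fixed short‑description state $\ket{\phi_n}$. This is exactly why the argument must invoke the \emph{approximation‑robust} state‑complexity lower bound of Section~\ref{sec:statecomplexity} — hardness of preparing \emph{any} good approximation to $\ket{\phi_n}$ — rather than hardness of exact preparation alone; a small additional bookkeeping check is that the orthogonalization $\ket{s}\mapsto\ket{s^\perp}$ leaves $2^n-O(1)$ genuinely distinct states, which follows because $\ket{s}-\ket{s'}$ is parallel to the fixed vector $\ket{\phi_n}$ for at most a controlled number of pairs, and can be avoided outright by a generic choice of $\ket{\phi_n}$.
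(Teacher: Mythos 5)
Your construction fails the theorem's central lower bound (``\ldots but needing exponential time''). The culprit is the choice $\sin\theta_n = 1/poly(n)$: the marked state $\ket{\psi_{n,s}} = \cos\theta_n\ket{\phi_n} + \sin\theta_n\ket{s^\perp}$ then contains the \emph{trivially preparable} component $\ket{s^\perp}$ (which, for a generic hard $\ket{\phi_n}$, is within $2^{-\Omega(n)}$ of the computational basis state $\ket{s}$) with squared amplitude $1/poly(n)$. A polynomial-time verifier holding only the classical proof $s$ can therefore prepare $\ket{s}$, run the same controlled-$U_n$ Hadamard test you describe, and observe outcome $1$ with probability $|\braket{s|\psi_{n,s}}|^2 = \Omega(1/poly(n))$ in case (1) versus probability $0$ in case (2); repeating $poly(n)$ times amplifies this to constant advantage with polynomially many queries and no exponential computation. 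So the classical string alone already verifies case (1) efficiently, contradicting the claim. The tension you correctly identify as ``the crux'' --- making all marked states share $1/poly(n)$ overlap with one fixed provable state while remaining hard to approach --- cuts both ways: any $1/poly(n)$ overlap achievable by a cheating (proof-free) polynomial circuit is equally exploitable, and your triangle-inequality argument only rules out \emph{constant}-fidelity approximation of $\ket{\psi_{n,s}}$, whereas the hardness argument needs all polynomially preparable states to have fidelity $o(1/poly(n))$ (the paper uses $\epsilon < 2^{-\alpha n}$) with the marked state. Shrinking $\sin\theta_n$ to $2^{-\Omega(n)}$ would restore hardness but collapses your family to exponentially small perturbations of a single state, abandoning the construction's purpose.

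Separately, even with a repaired family, your proposal omits the argument the paper spends most of Section \ref{sec:mqst} on: a verifier may intersperse oracle calls with its gates, so the oracle itself could in principle help build up overlap with the marked state, and the oracle-free state-complexity bound of Section \ref{sec:statecomplexity} does not directly apply. The paper handles this with Lemmas \ref{tenover} and \ref{indist}, a hybrid/induction argument showing that $k$ alternations of low-overlap unitaries and oracle calls move the state by at most $O((k\epsilon)^{1/4})$ in trace distance from the query-free evolution. By contrast, the paper's construction takes marked states from $STATE\_DIAG$, which are classically indexed yet guaranteed $\epsilon$-far from \emph{every} short circuit's output, and realizes your final claim (a trusted proof depending only on dimension) by restricting the index set to polynomial size and supplying all candidate marked states as the advice, tested with one query each. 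Your idea of a single fixed state overlapping all family members is genuinely different and more economical, but as engineered it is incompatible with the hardness requirement.
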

We replace the query complexity separation of \cite{aaronson_quantum_2006} by a separation in state complexity. Given the the marked state's classical description, it is possible to construct the marked state using an exponentially large quantum circuit and verify case (1) of MQST using one query. The surprising aspect of Theorem \ref{thm:main} is that the usual oracle and query complexity machinery is largely bypassed. We may even restrict the marked state to a family that is deterministically known to the program as a function of the input size, but because of the state's complexity, such marked states may still be impossible to efficiently verify without an externally-provided quantum witness.

Variants of MQST are convert to decision problems to show separations in computational complexity. We recall the complexity class Quantum Merlin Arthur (QMA), in which a quantum, polynomially-bounded verifier checks a polynomial-size quantum witness state. We recall the class Quantum-Classical Merlin Arthur (QCMA), in which the witness state is constrained to be a classical bitstring. We also recall the notion of advice, which is a classical string or quantum state that depends on the input size but not otherwise on the input string. For a complexity class $C$, we denote by $C/log$ and $C/poly$ the modified class respectively with logarithmic or polynomial advice. By $C/qlog$ and $C/qpoly$ we denote the respective classes with quantum advice. By $C^{(U_n)}$ we denote the modified class with access to a family of quantum oracles $(U_n)$, where $n$ is the input size.

In \cite{aaronson_quantum_2006}, it is shown that $QMA^{(U_n)} \neq QCMA^{(U_n)}$, and $BQP^{(U_n)}/poly \neq BQP^{(U_n)}/qpoly$ for families of quantum oracles $(U_n)$ as in MQST. Unsurprisingly, we are able to combine these separations. More surprisingly, we do so bypassing the query separation argument.
\begin{cor} \label{cor:complexity}
There exists a quantum oracle family $(U_n)_{n=1}^\infty$ for which
\begin{equation*}
BQP^{(U_n)}/qpoly \not \subseteq QCMA^{(U_n)}/poly \text{, and } QMA^{(U_n)} \not \subseteq QCMA^{(U_n)}/poly ,
\end{equation*}
though one query is sufficient to verify a classical witness given exponential computation time.
\end{cor}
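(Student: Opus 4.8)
The plan is to carry out the standard oracle-to-complexity-class reduction on Theorem~\ref{thm:main}, with the one new twist that the obstacle facing a classical-witness verifier is the circuit-inapproximability of the marked state rather than a query lower bound. Fix a family $(\ket{\psi_n})$ of marked states as supplied by Theorem~\ref{thm:main}: exponentially many per dimension, each pinned down to $1/poly(n)$ precision by a poly-length classical string, none $1/poly(n)$-approximable by any polynomial-size quantum circuit. Build, essentially as in \cite{aaronson_quantum_2006}, an oracle-dependent language $L$ in which a ``yes'' instance has the relevant block of the oracle equal to a reflection $\id-2\ketbra{\psi_n}$ and a ``no'' instance has it equal to the identity, with the oracle arranged (again as in \cite{aaronson_quantum_2006}, so that bounded classical advice cannot tabulate membership) so there is genuinely something to decide. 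The upper bounds are then immediate: a single Hadamard test against the relevant block accepts with certainty on the true marked state and with probability $0$ when the block is the identity, so handing that state to Merlin gives $L\in QMA^{(U_n)}$ with one query, and storing it — or, by the last clause of Theorem~\ref{thm:main}, the dimension-dependent trusted state that certifies any marked state of that size — as advice gives $L\in BQP^{(U_n)}/qpoly$; since a purported advice state can itself be checked by that same test, this $BQP^{(U_n)}/qpoly$ algorithm is in particular a $QMA^{(U_n)}$ protocol. The parenthetical clause of the Corollary is the first sentence of Theorem~\ref{thm:main} together with the remark that from the poly-length description one reconstructs $\ket{\psi_n}$ with an exponential-size circuit and then performs one Hadamard test.

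The substance is $L\notin QCMA^{(U_n)}/poly$, and here the state-complexity property of Theorem~\ref{thm:main} replaces the query-complexity counting of \cite{aaronson_quantum_2006}. Following their template, enumerate $QCMA$ verifiers and reserve one input length per verifier, spaced so that no verifier can reach a later reserved block; whether that block carries a reflection will be decided adversarially. For the verifier $M_i$, with an arbitrary poly-length advice string and poly-length classical witness written into its input, compare its run on the reserved input against its run under the all-identity oracle. Under the identity oracle, the state just before every oracle call is the output of a polynomial-size circuit (advice and witness being hardwired classical bits), so by Theorem~\ref{thm:main} the relevant block's marginal of that state has overlap at most $1/p(n)$ with the corresponding marked state, with no counting over witnesses. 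A hybrid argument that flips each of the $T=poly(n)$ oracle calls to the identity one at a time, using these identity-oracle states as the switching points (so the planted reflections never enter the error bound), then shows the acceptance probability stays within $2T/\sqrt{p(n)}$ of its identity-oracle value for every witness; choosing the inapproximability scale $1/p(n)$ of Theorem~\ref{thm:main} sufficiently smaller than $1/T^2$ makes this, say, below $1/10$. A threshold finishes it: let $p^\star$ be the best acceptance probability of $M_i$ on that input under the identity oracle; if $p^\star\ge 1/2$, leave the block the identity, so the input is a ``no'' instance yet the best witness is accepted with probability $>1/3$ (soundness fails); if $p^\star<1/2$, plant the reflection $\id-2\ketbra{\psi_n}$, so the input is a ``yes'' instance yet every witness is accepted with probability $<2/3$ (completeness fails). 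Either way $M_i$ fails to decide $L$, for any advice, so $L\notin QCMA^{(U_n)}/poly$ and both non-inclusions follow, with one query enough to check a classical witness once exponential time is allowed.

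The step I expect to be the main obstacle is pinning down the exact form of inapproximability that the hybrid argument consumes: not merely ``no polynomial-size circuit prepares $\ket{\psi_n}$ to high pure-state fidelity'' but ``no polynomial-size circuit, even with ancillas and even read off a subsystem with only $1/poly(n)$ post-selection success, yields a state whose relevant-block marginal has $\ge 1/poly(n)$ overlap with $\ket{\psi_n}$.'' One must check that the counting behind Theorem~\ref{thm:main} survives these enrichments — it does, since a polynomial-size circuit realizes only $2^{poly(n)}$ output states while a $1/poly(n)$-ball around a fixed state is a doubly-exponentially small fraction of the unit sphere in dimension $2^n$ — and, the part inherited from \cite{aaronson_quantum_2006}, that the language is set up so a $QCMA^{(U_n)}/poly$ verifier cannot evade the verification by reading membership off its advice. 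The remaining bookkeeping — the Hadamard-test completeness and soundness, query count versus block spacing, and the passage from a small norm perturbation of the final state to a small change in acceptance probability — is routine.
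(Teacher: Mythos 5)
Your core mechanism is the same as the paper's: replace the A\&K query-counting lower bound by the circuit-inapproximability of the marked state, feed the resulting per-query overlap bound $F(\rho_l,\ketbra{\psi_n})\le\epsilon$ into a hybrid argument over the $k=poly(n)$ oracle calls, and conclude that the final states in the reflection and identity cases are trace-close. That hybrid is exactly the paper's Lemma \ref{indist} (built on Lemma \ref{tenover}, which is stated for marginals with ancillas, answering the enrichment you worry about in your last paragraph), and your upper bounds (Merlin or the quantum advice supplies $\ket{\psi_n}$; one query plus exponential reconstruction verifies the classical description) match parts 2--3 of Corollary \ref{mqsl} and Remark \ref{noquery}. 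The one place your framing diverges, and where it is fragile as literally written, is the language. You reserve one input length per verifier and make a single adversarial reflection-vs-identity decision there after inspecting $p^\star$; but $p^\star$ depends on the advice string, the language must be fixed before the advice is quantified, and a one-bit-per-length language is trivially tabulated by $/poly$ advice. You flag this and defer to an A\&K-style ``arrangement,'' but that arrangement is load-bearing: the paper instead packs $2^n$ language bits into each oracle via $U_n\ket{\psi_n}\otimes\ket{x}=(-1)^{L(x)}\ket{\psi_n}\otimes\ket{x}$ (Definition \ref{def:mqsp}) with $L$ chosen at random, and uses a counting argument over the $2^{O(poly(n))}$ advice strings (Lemma \ref{advicebound}) to guarantee $\Omega(2^n/poly(n))$ undetermined inputs per advice value --- no per-verifier reservation or post-hoc threshold is needed. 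You should also note one bookkeeping point the paper handles explicitly: a length-$n$ verifier may query $U_l$ for small $l$ whose marked state \emph{is} preparable by short circuits, so the $\epsilon$-overlap hypothesis of the hybrid fails for those calls; the paper removes them by emulating any such oracle with $O(k)$ gates (prepare, reflect about $\ket{0}^{\otimes l}$, unprepare) before invoking Lemma \ref{indist}. With the random-language-plus-advice-counting substitution and that emulation step, your argument coincides with the paper's.
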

The technical version appears as Corollary \ref{mqsl}. When we replace the oracle herein by a standard oracle, we find that $BQP$ does not contain $NP$ or $P/poly$ relative to this oracle.
\begin{prop} \label{npbqpintro}
Let $(\orac_n)_{n=1}^\infty$ be a family of standard oracles with marked states in the computational basis. Then $NP^{(\orac_n)} \not \subset BQP^{(\orac_n)}$, and $P^{(\orac_n)}/poly \not \subset BQP^{(\orac_n)}$. Let $(U_n)_{n=1}^\infty$ be a family of quantum oracles with marked states as binary strings in the Hadamard basis. Then $QCMA^{(U_n)} \not \subset BQP^{(U_n)}$.
\end{prop}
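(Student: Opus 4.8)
The plan is to reduce the decision version of MQST to unstructured search and separate that problem from $BQP$ using the optimality of Grover's algorithm, while noting that each of $NP$, $P/poly$, and $QCMA$ handles it almost for free once the marked object is a basis state in the relevant basis. For a family of standard oracles, let each $\orac_n$ either be identically zero or have a single marked string $x_n$, and let $L$ be the unary language containing $0^n$ exactly when $\orac_n$ has a marked string. First I would record the two easy inclusions: $L \in NP^{(\orac_n)}$, since a nondeterministic verifier on input $0^n$ guesses $x \in \{0,1\}^n$ and accepts iff $\orac_n(x) = 1$; and $L \in P^{(\orac_n)}/poly$ for \emph{every} such family, since the length-$n$ advice can be taken to be $x_n$ (or a flag when $\orac_n$ is identically zero), after which a single query decides membership --- here using the standard convention that relativized advice may depend on the oracle.

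The substantive step is $L \notin BQP^{(\orac_n)}$ for a suitable choice of the $x_n$. A polynomial-time quantum machine with oracle $\orac_n$ makes only polynomially many queries, so by the unstructured-search lower bound \cite{boyer_tight_1998, zalka_grovers_1999} it cannot decide with bounded error whether a marked string exists: distinguishing $\orac_n$ identically zero from a random planted $x_n$ needs $\Omega(2^{n/2})$ queries. A standard diagonalization over the countably many polynomial-time quantum oracle machines --- assigning the $i$-th machine a fresh, large input length $n_i$ and then fixing the value of $\orac_{n_i}$ so that the machine errs on $0^{n_i}$ --- produces a single family with $L \notin BQP^{(\orac_n)}$, hence $NP^{(\orac_n)} \not\subset BQP^{(\orac_n)}$ and $P^{(\orac_n)}/poly \not\subset BQP^{(\orac_n)}$. (Equivalently, choose each $x_n$ uniformly at random and invoke Borel--Cantelli, obtaining the statement for almost every oracle family.)

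For the quantum-oracle case I would put $\ket{\psi_n} = H^{\otimes n}\ket{x_n}$, take $U_n$ to be either $I$ or $I - 2\ketbra{\psi_n}$ as in MQST, and keep the same $L$. For the $QCMA^{(U_n)}$ upper bound: Merlin sends a classical $x \in \{0,1\}^n$; Arthur prepares $H^{\otimes n}\ket{x}$ with at most $2n$ one-qubit gates, applies a single controlled query to $U_n$, and runs a Hadamard test on the control qubit. If $U_n = I - 2\ketbra{\psi_n}$ and $x = x_n$ the test reads off the eigenphase $-1$ and Arthur accepts with certainty; if $U_n = I$, or if $x \neq x_n$ so that $H^{\otimes n}\ket{x} \perp \ket{\psi_n}$, the test returns $+1$ and Arthur rejects with certainty --- perfect completeness and soundness. (Following \cite{aaronson_quantum_2006} we permit controlled oracle queries.) For the $BQP^{(U_n)}$ lower bound I would use the identity $U_n = H^{\otimes n}\,O_{x_n}\,H^{\otimes n}$, where $O_{x_n} = I - 2\ketbra{x_n}$ is the phase-oracle (phase-kickback) form of the standard oracle marking $x_n$, together with its controlled version $H^{\otimes n}\,(\text{controlled-}O_{x_n})\,H^{\otimes n}$. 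Since this conjugation is fixed and efficiently implementable, oracle access to $U_n$ is polynomial-time equivalent, with the same query count, to oracle access to the standard oracle for $x_n$; the $\Omega(2^{n/2})$ lower bound of the previous paragraph therefore transfers verbatim, giving $L \notin BQP^{(U_n)}$ and hence $QCMA^{(U_n)} \not\subset BQP^{(U_n)}$. The only ingredient with genuine content is the cited search lower bound; the two points that need care are the (routine) diagonalization yielding a single oracle family and the check --- furnished by the Hadamard-conjugation identity --- that restricting the marked object to one classically specified basis string gives the polynomial-time quantum solver no advantage.
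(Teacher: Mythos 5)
Your proposal is correct and follows essentially the same route as the paper's proof of the technical version (Proposition \ref{simplybqpnp}): the Grover/unstructured-search lower bound \cite{boyer_tight_1998, zalka_grovers_1999} supplies the $BQP$ side, guessing or hard-coding the marked string gives the $NP$, $P/poly$, and $QCMA$ inclusions, and Hadamard conjugation transports the argument to the quantum-oracle setting. The only packaging difference is that the paper encodes an arbitrary binary language into a $(2n+1)$-qubit bit-flip oracle with an explicit target qubit (avoiding the controlled-query/Hadamard-test issue you handle by convention), whereas you use a unary language and a phase oracle; you are also more explicit than the paper about the diagonalization needed to fix a single oracle family.
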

The technical version of this Proposition appears as \ref{simplybqpnp}. The power of NP relative to this oracle and its replacement by QCMA when the oracle is transformed to the Hadamard basis illustrates the barrier to replacing the quantum oracle by a standard oracle. More substantially, we may ask why we cannot replace the complex state by, for instance, an efficient polynomial design (see for example \cite{harrow_random_2009}). The primary barrier here is that whenever the state admits an efficient quantum circuit, a polynomial quantum computation suffices to generate the marked state from that string. Hence schemes based on classical, hard-to-discover keys/seeds may fail to separate QMA from QCMA, as a classical prover can reveal the secret that cracks the scheme. Though it would be easy to show that NP does not contain QMA or QCMA relative to a quantum oracle, this separation is arguably trivial and unfair without a way for classical programs to access information hidden in the quantum oracle.

Though it is by definition hard to prepare a specified complex state, \cite[p27]{aaronson_complexity_2016} notes that it may not be hard to generate random states that are typically complex. Assuming that the procedure therein does generate a random, complex state, we propose an experimental demonstration of the quantum oracle complexity separation. For this, we need to relax the oracle definition further, allowing a once-callable quantum channel to take the place of the oracle. Rather than inserting a specified unitary at arbitrary points in computation, we will allow the computation to insert one call to the quantum channel with assume $O(1)$ cost. Then:
\begin{theorem} \label{mqphys}
There exists a quantum channel oracle $\Phi$ for which $QMA^{\Phi} \neq QCMA^{\Phi}$.

Assume that the complex state generation procedure as suggested in \cite[p27]{aaronson_complexity_2016} indeed generates complex states. Then polynomial quantum space and time suffice to:
\begin{enumerate}
	\item Generate a marked state and quantum channel oracle $\Phi$ such that $QMA^{\Phi} \neq QCMA^{\Phi}$ with high probability, and where the marked state for $\Phi$ is classically specified in polynomially many bits.
	\item Make the marked state's classical description known to the prover and verifier.
	\item Apply $\Phi$ once when called by the verifier.
\end{enumerate}
\end{theorem}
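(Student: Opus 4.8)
The plan is to split the argument into an unconditional construction of the channel oracle and a conditional, constructive version that feeds on the generator of \cite{aaronson_complexity_2016}; throughout, the idea is to replace a reflection oracle by a once-callable measurement channel, since a single measurement is both weaker than the polynomially-many reversible queries analysed in Theorem \ref{thm:main} (so the classical-witness lower bound is inherited, and in fact becomes more transparent) and closer to what a near-term device can actually perform.

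\textbf{Unconditional part.} First I would start from the marked-state family of Theorem \ref{thm:main}: an exponentially large collection of states $\ket{\psi_n}$ of superpolynomial state complexity but with a polynomial, $1/poly(n)$-precision classical description, and, for each, the promise oracle of Definition \ref{def:mqst} with $U_n = \id - 2\ketbra{\psi_n}$ in case (1) and $U_n = \id$ in case (2). I replace $U_n$ by a channel $\Phi_n$, promised to be called at most once, that acts on a queried register together with a fresh flag qubit: in case (1) it performs a measurement revealing whether the queried register holds $\ket{\psi_n}$ --- the projective measurement $\{\ketbra{\psi_n},\ \id-\ketbra{\psi_n}\}$ for the pure-oracle statement, or a swap test against a stored copy of $\ket{\psi_n}$ for the near-term realisation below --- and records the outcome in the flag; in case (2) it returns a flag that is statistically uninformative about $\ket{\psi_n}$ (always ``unmarked'' in the projective version, a fair coin in the swap-test version). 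On the QMA side, Merlin sends $\ket{\psi_n}$, the verifier makes its one call on that register, and accepts iff the flag reads ``marked''; completeness and soundness are immediate from the promise, so the MQST language lies in $QMA^{(\Phi_n)}$ with a single query. On the QCMA side, a classical witness $w$ plus one call buys only a single Bernoulli sample whose bias is controlled by $\langle\psi_n|\rho_w|\psi_n\rangle$, where $\rho_w$ is the reduced state the fixed, oracle-free verifier prepares on the queried register; for the acceptance probability to move between cases (1) and (2), this overlap (or the induced change in the post-measurement state) must be at least $1/poly(n)$. As $w$ ranges over the $2^{poly(n)}$ possible witnesses, $\rho_w$ ranges over a polynomially-generated family of states, and the covering/counting argument of Section \ref{sec:statecomplexity} --- the same one behind Theorem \ref{thm:main} --- shows that for a generic $\ket{\psi_n}$ drawn from the family no such $w$ exists. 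Diagonalising over the countably many pairs (verifier, witness assignment), as in Corollary \ref{cor:complexity}, yields a single family $(\Phi_n)$ with $QMA^{(\Phi_n)} \neq QCMA^{(\Phi_n)}$, which is the first assertion.

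\textbf{Conditional, constructive part.} Next I would instantiate $\ket{\psi}$ by running the procedure of \cite[p27]{aaronson_complexity_2016} --- polynomial quantum space and time --- to obtain a physical copy (and, when the procedure allows, a small number of copies) of a state that, under the stated assumption, is \emph{complex} in the operative sense: no polynomial-time algorithm (classical, or quantum but handed only the procedure's polynomial-size public transcript $d$ rather than the state) can produce a state of non-negligible fidelity with $\ket{\psi}$, while the party who ran the procedure does hold $\ket{\psi}$ and can forward it as a quantum witness. Publishing $d$ realises item (2); implementing $\Phi$ as the one-shot swap-test device holding the stored copy of $\ket{\psi}$, with the type-(2) alternative a coin-flipping channel, realises item (3) at polynomial cost; and $QMA^{\Phi} \neq QCMA^{\Phi}$ with high probability over the procedure's randomness then follows exactly as in the unconditional part, with the assumed complexity of $\ket{\psi}$ playing the role that Theorem \ref{thm:main}'s explicit construction played there.

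\textbf{Main obstacle.} The hard part is reconciling ``classically specified in polynomially many bits'' with ``complex'': the published $d$ must be short enough to be a legal classical witness yet useless for reconstructing $\ket{\psi}$ to within usable fidelity by any efficient means, so the whole conditional claim rests on the \cite{aaronson_complexity_2016} generator being in effect \emph{one-way} --- its output complex even though the generator is efficient --- which is tenable only because $d$ does not pin down $\ket{\psi}$ (e.g.\ unrecorded intermediate measurement outcomes are essential, so re-running the transcript yields a different state). Two smaller points also need care: arranging that the honest quantum prover actually possesses $\ket{\psi}$ rather than merely $d$, and checking that a once-callable swap-test channel built from only the few available physical copies still supports an adequate completeness--soundness gap; the remainder is bookkeeping imported from Theorem \ref{thm:main} and Corollary \ref{cor:complexity}.
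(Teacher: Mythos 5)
Your architecture matches the paper's (a swap-test channel against a stored copy of the marked state, quantum witness = a copy of the state, classical witnesses defeated by state complexity rather than by an information-theoretic counting bound, and the conditional part instantiated by the random-circuit-plus-measurement generator). Your lower-bound route differs in a defensible way: the paper reduces the channel to the phase-flip oracle $U$ via an explicit controlled-SWAP/Hadamard simulation and then imports the hardness of Theorems \ref{mqst}/\ref{mqsl}, whereas you argue directly that one call yields a single Bernoulli sample with bias controlled by $\braket{\psi_n|\rho_w|\psi_n}$; for a once-callable channel this is a legitimate shortcut. However, there are two genuine gaps.

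First, the channel construction. A \emph{single} swap test against $\ket{\psi}$ accepts a state orthogonal to $\ket{\psi}$ with probability $1/2$, so if the ``marked'' branch simply records the swap-test outcome, a verifier with garbage input already sees a constant-bias signal distinguishing the two cases, and your patch (make case (2) output a fair coin) only reduces the honest completeness--soundness gap to $1$ vs.\ $1/2$ --- which cannot be amplified when the channel is callable once, and which also changes the oracle's semantics away from the MQST promise. The paper's fix, which you are missing, is to use $n$ \emph{independent} marked states $\ket{\psi_{c_1,m_1}}\otimes\cdots\otimes\ket{\psi_{c_n,m_n}}$ and have $\Phi$ act nontrivially only if at least $\kappa n$ of the $n$ swap tests pass with $\kappa>1/2$: a low-overlap input then triggers the channel with probability $\exp(-\Omega(n))$ by a Chernoff bound, while the honest product witness passes with probability $1$. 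You flag ``adequate completeness--soundness gap'' as a point needing care but do not resolve it; this threshold construction is the resolution and is a substantive idea in the proof. Second, your ``main obstacle'' discussion gets the key conceptual point backwards: you claim the conditional part is tenable ``only because $d$ does not pin down $\ket{\psi}$,'' but in the paper the transcript $(c,m)$ (circuit plus recorded measurement outcome) \emph{fully determines} the marked state --- the paper says exactly this when explaining why the Aaronson--Kuperberg counting argument does not apply here. The hardness is purely computational (state complexity of a fully specified state), not informational; if $d$ failed to pin down $\ket{\psi}$, item (1) of the theorem (``classically specified in polynomially many bits'') would be false.
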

The once-callable aspect of $\Phi$ technically arises from the fact that one copy of the specific marked state is created, and this copy is consumed in the oracle call. An exponentially more powerful prover can generate as many copies as desired from the obtained classical description. This channel oracle could be used in a quantum proof of work scheme, in which a polynomial quantum computer generates instances that probably require exponentially stronger, quantum quantum resources to solve.

Counter-intuitively, a standard oracle may spoil the secret of a quantum oracle as above. As shown in \cite[Proposition 3.3.5]{aaronson_complexity_2016}, one may prepare arbitrary states using a standard oracle that directly reveals bits and phases of the marked state. Although the standard oracle is defined by a classical function, it may be called in superposition, enabling a procedure that builds up an exponential number of amplitudes directly, without ever having to obtain the exponentially long bit string needed to specify the state classically.

The techniques of \cite{kretschmer_quantum_2020} use a state preparation oracle model to bound a quantum computer's ability to generate heavily-weighted output strings of a quantum state. The analogous task without oracles is believed hard for classical computers \cite{aaronson_complexity-theoretic_2017, boixo_characterizing_2018}. In the formulation presented there, the use of a quantum oracle precludes a classical analog. We replace this oracle by a standard oracle, showing separation between classically stochastic and quantum use analogous to that in oracle-free models. As in MQST, superiority of quantum resources arises not from the information required to specify arbitrary quantum states, but from hardness of preparing states. To do this, we define a rotated version of the heavy output sampling problem, using classical inaccessability of a complementary basis as in Proposition \ref{npbqpintro}, but without needing an explicit, quantum oracle:
\begin{theorem}[Classical Hardness of RXHOG] \label{rxhogclassintro}
Let a classical algorithm attempt to solve the sampling problem RXHOG as in Definition \ref{rxhog} with access to polynomially many calls to a standard state preparation oracle, and a table of coefficients of a quantum state that defines the sampling problem. Such an algorithm achieves a value no higher than $1/2^n + O(poly(n))/2^{2n}$, whereas a quantum algorithm may achive a value of $2/2^n$.
\end{theorem}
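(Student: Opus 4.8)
The plan is to recast the RXHOG value as a linear functional of the solver's output distribution, settle the quantum lower bound by a one-line second-moment (Porter--Thomas) computation, and then prove the classical upper bound by showing that a classically-queried standard state-preparation oracle, together with whatever auxiliary table Definition~\ref{rxhog} provides, pins down only polynomially many amplitudes of the defining state --- far too little to correlate an output string with the heavy strings of the \emph{rotated} state.

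Concretely, I would write $\ket{\psi}$ for the state defining the instance, $\ket{\tilde\psi}=H^{\otimes n}\ket{\psi}$ for its image in the complementary (Hadamard) basis, and $w(x)=|\langle x|\tilde\psi\rangle|^2$, so that $\sum_x w(x)=1$ and the value of a solver outputting $x$ from distribution $\mathcal{D}$ is $\sum_x \mathcal{D}(x)\,w(x)$. For the quantum side: prepare $\ket{\psi}$ (efficiently, since it is classically specified, or via the state-preparation oracle as in \cite[Prop.\ 3.3.5]{aaronson_complexity_2016}), apply $H^{\otimes n}$, and measure; then $\mathcal{D}=w$ and the value is $\sum_x w(x)^2$, which for a (pseudo)random $\ket{\tilde\psi}$ equals $2/(2^n+1)$ in expectation by a standard Weingarten computation --- i.e.\ $2/2^n$ up to a $1/2^{2n}$ correction, and concentrated. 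This is the easy direction.

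For the classical bound, the key point is that a \emph{classical} algorithm querying the \emph{standard} (computational-basis) oracle learns, after $m=\mathrm{poly}(n)$ queries, nothing more than the amplitudes $\psi_{y_1},\dots,\psi_{y_m}$ at the $m$ strings it queried (it cannot query in superposition), whereas $\langle x|\tilde\psi\rangle = 2^{-n/2}\sum_y (-1)^{x\cdot y}\psi_y$ is an exponentially long sum of amplitudes of typical size $O(\mathrm{polylog}(n))\,2^{-n/2}$. I would then run a conditional second-moment argument over the instance distribution: conditioned on the transcript, the output $x^\ast$ is a fixed (possibly randomized) function of the $m$ revealed amplitudes; the revealed part contributes at most $(2^{-n/2}\sum_i|\psi_{y_i}|)^2 \le m\,2^{-n}\sum_i|\psi_{y_i}|^2 \le O(\mathrm{poly}(n))/2^{2n}$ to $w(x^\ast)$ (Cauchy--Schwarz, then the typical-amplitude bound $\sum_i|\psi_{y_i}|^2 = O(\mathrm{poly}(n))/2^n$); the unrevealed amplitudes have conditional mean zero and contribute total conditional second moment at most $2^{-n}$; and the cross term vanishes in conditional expectation. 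Averaging over transcripts gives $\mathbb{E}[\text{value}] \le 1/2^n + O(\mathrm{poly}(n))/2^{2n}$, as claimed; finite oracle precision and adaptivity are absorbed into the error via a standard exposure/martingale argument over the $m$ queries.

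\textbf{The main obstacle} is making rigorous the sense in which the solver's output is uncorrelated with the rotated weight $w$, which forces one to be explicit in Definition~\ref{rxhog} about the instance distribution and to verify the two moment properties used above --- amplitudes of typical magnitude $2^{-n/2}$, and any $\mathrm{poly}(n)$ of them having conditional mean zero with conditional second moment $O(2^{-n})$. A Haar-random state, or a sufficiently high-order approximate $\mathrm{poly}$-design, has these properties, and the latter is compatible with specifying the state to $1/\mathrm{poly}(n)$ precision by a polynomial classical string; checking that a design (rather than exactly Haar) still kills the needed correlations, and that the ``table of coefficients'' handed to the solver adds no information about the Hadamard-basis amplitudes beyond the $m$ oracle answers, are the places where care is needed. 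The moral, as in the rest of the paper, is that the quantum advantage here is one of \emph{preparing and rotating} the state, not of \emph{specifying} it.
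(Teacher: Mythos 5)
Your proposal follows essentially the same route as the paper: the classical bound is obtained by splitting the rotated amplitude $\langle z|H^{\otimes n}|\psi\rangle$ into the part supported on the polynomially many basis strings whose phases the classical transcript reveals (bounded by the triangle/Cauchy--Schwarz inequality together with the typical-amplitude bound $|\psi_y|^2 = O(n)/2^n$ from \cite[Fact 10]{kretschmer_quantum_2020}, giving $O(\mathrm{poly}(n))/2^{2n}$) and the phase-unrevealed remainder, which averages to a $Z$-diagonal density contributing exactly $1/2^n$ with vanishing cross terms, followed by convexity over the algorithm's randomness. The only additions are your explicit Porter--Thomas computation for the quantum side (which the paper asserts but does not prove) and your (correctly flagged, and correctly resolved in the paper) concern that the full table of $Z$-basis magnitudes adds nothing beyond the queried phases.
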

The technical version appears as Theorem \ref{rxhogclass}.

Ultimately, if QMA $\neq$ QCMA without an oracle, then there must exist low-energy states of local, gapped Hamiltonians that are polynomially specified (by the Hamiltonian's description) but not preparable by polynomial circuits. Reformulating quantum oracle separation in terms of state complexity, rather than state codeability, brings it closer to the oracle-free setting. We have already seen that there exist states with simple classical descriptions that are hard to prepare. To separate QMA from QCMA, it would suffice to find a polynomial time routine for a class of such states that verifies if a provided such state matches a classical description, in which case the language of classical descriptions would fall in QMA and not QCMA. In this work, we see that the presence of quantum oracle can provide such verification, while a standard oracle can break the preparation hardness.

In Section \ref{sec:hardstate}, we review known results on state complexity and define the hard state preparation routine that allows us to efficiently index some hard states. In Section \ref{sec:mqst}, we show Theorem \ref{thm:main} without directly invoking complexity classes. In Section \ref{sec:complexity}, we show Corollary \ref{cor:complexity} and Proposition \ref{npbqpintro}, discussing the implications of marked states for complexity classes relative to quantum oracles. In Section \ref{sec:easy}, we show Theorem \ref{mqphys} and discuss possible experimental implementations. In Section \ref{sec:tsir}, we show Theorem \ref{rxhogclassintro} and discuss standard state preparation oracles in the context of sampling problems.

\subsection{Basic Notations}
For a pair of pure states $\ket{\psi}, \ket{\phi}$, recall the overlap be defined via the inner product as $|\braket{\psi | \phi}|^2$. We denote by $F(\cdot, \cdot)$ the (squared) fidelity between two quantum states, given by $F(\rho, \sigma) = \|\sqrt{\rho} \sqrt{\sigma}\|_1^2$ in trace norm. For pure states, fidelity is equal to the overlap. For a set $S$, we denote by $|S|$ its cardinality, and otherwise, $| \cdot |$ denotes the absolute value of a complex function. We use $\lceil \cdot \rceil$ to denote the ceiling function and $\lfloor \cdot \rfloor$ for the floor function.

\section{Complex State Generation} \label{sec:hardstate}
\subsection{Background on Information and Complexity of Quantum States} \label{sec:statecomplexity}
A typical quantum state has Komogorov complexity at least polynomial in the dimension of the Hilbert space. As in \cite{mora_algorithmic_2005}, the total number of $\delta$-distinguishable (in operator norm) states is estimated at $2^{2^{n} \log \delta}$, such that for any classical description length $l$, the fraction of $l$-describable states is approximately $2^{{2^n} \log \delta + l}$. While it is not possible to verify the Kolmogorov complexity of a quantum state, it is possible to verify minimum circuit size with a fixed gate set $G$ via exhaustive search.

More definitively, to $\delta$-approximate in operator norm an arbitrary unitary in $SU(2^n)$ via $l$ gates from universal set $G$ requires
\[ l \geq \frac{(2^{2n} - 1) \log (1 / \delta) - \log c_n}{\log (2|G|)}, \]
where $c_n$ is a dimension-dependent constant such that $V(r) \leq c_n r^{{2^n}-1}$ for all $r$, and $V(r)$ is the Haar-measured volume of a ball of radius $r$ in $SU(2^n)$ \cite[eq 23]{harrow_efficient_2002}.

A simple bound in \cite{nielsen_quantum_2010} states that to $\delta$-approximate an arbitrary pure state in Hilbert space norm,
\begin{equation} \label{eq:statedensity}
l = \Omega \Big ( \frac{2^n \log (1/\delta)}{\log n} \Big ).
\end{equation}
The essential realization is that the space of pure states on $n$ qubits is the unit $(2^{n+1}-1)$-sphere, on which volumes are simpler than in $U(2^n)$. There are at most $O(2^{|G| l m})$ $\delta$-distanced states preparable from $\ket{0}^{\otimes n}$ via $l$ gates when each gate acts on $m$ input qubits. The number of states in any $\Omega(poly(n))$ circuit $\delta$-approximation scheme is $2^{O(poly(n))}$.

The Solovay-Kitaev theorem \cite{dawson_solovay-kitaev_2005} gives an upper bound on the number of universal gates required, showing that equation \eqref{eq:statedensity} is optimal up to a polynomial factor.

Equation \eqref{eq:statedensity} appears to show that even when $\delta \approx 1$, $l$ grows superpolynomially in $n$. The derivation relies however on approximating the surface area of states near a given $\ket{\psi}$ by the volume of a hypersphere in one dimension lower. For small $\delta$, the neighborhood of a state is approximately flat. As $\delta \rightarrow 1$, the approximation must be checked carefully. In principle, this might be accomplished via the distribution of distances between points on a hypersphere \cite{lord_distribution_1954, lehnen_sphere_2006} or the areas of hyperspherical caps \cite{li_concise_2011}. In practice, it is easier to work directly from results on the distribution of fidelities of quantum states \cite{zyczkowski_average_2005}. Aaronson \& Kuperberg show a similar result as Lemma 3.3 in \cite{aaronson_quantum_2006}.

In \cite{hu_characterization_2020}, it is shown that to determine if a given state has polynomial creation complexity requires exponential time in the absence of advice or witnesses (though this may include some additional assumptions). Equation \eqref{eq:statedensity} implies that a random or typical state is probably hard to prepare.

\subsection{Enumerating Complex States} \label{sec:enum}

Let $STATE\_DIAG(i, n, f, \epsilon, G)$ be a classical $\rightarrow$ quantum program that takes $i, f \in \mathbb{N}$, $\epsilon \in [0,1]$ encoded in binary, and a universal, finite gate set $G$ as its input. $STATE\_DIAG(i, n, f, \epsilon, G)$ does the following:
\begin{enumerate}
	\item Let $S$ be a set initially containing all circuits using at most $k$ gates, where $c$ is the maximum number of qubits on which any gate acts and $\ket{0}^{\otimes c k}$ the initial state.
	
	Note that set of $n$-qubit marginals of these states is included in the set of marginals on the 1st $n$ qubits by permutation invariance of the starting state and set of $k$-gate circuits.
	\item Iterate through circuits involving more than $k$ gates from $G$ to prepare $n$-qubit states, in lexograhic order. Let $C_j$ denote the $j$th such circuit. Let $l$ be a counter starting at 0.
		\begin{enumerate}
			\item If the output state of $C_j$ on $\ket{0}^{\otimes n}$ has fidelity at least $\epsilon$ with the marginal on the 1st $n$ qubits of any circuit in $S$, continue to $j+1$.
			\item If the output state of $C_j$ on $\ket{0}^{\otimes n}$ has overlap below $\epsilon$ with all 1st $n$-qubit marginals of circuits already in $S$, then (i) if $l < i$, increment $l$ and add $C_j$ to $S$ (ii) if $l = i$, return the output of $C_j$ on $\ket{0}^{\otimes n}$.
		\end{enumerate}
\end{enumerate}
Let $\ket{STATE\_DIAG(i,n,f,\epsilon, G)}$ denote the output of $STATE\_DIAG(i,n,f,\epsilon, G)$ as a quantum state. To formalize $STATE\_DIAG$ as a quantum channel, we may assume that it takes an $n$-qubit input and pinches it to the computational basis before proceeding.

The main subtlety of using $STATE\_DIAG$ is to show that it has valid states to return within the range of interest.

\subsection{Existence of Marked States}
\begin{lemma} \label{manystates}
In a system of $n$ qubits with $\epsilon \in (0,1]$ and for a set of densities $S = \{\rho_j\}_{j=1}^m$ with total rank $r : m \leq r \leq m 2^n$, there is a set of at least
\begin{equation} \label{eq:avstates}
\lfloor (1-\epsilon)^{1-2^n} - r \rfloor \geq \lfloor \exp(\epsilon(2^n-1)) - r \rfloor
\end{equation}
pure states such that for any $\ket{\psi}, \ket{\phi}$ in the set, $|\braket{\psi | \phi}|^2 < \epsilon$, and for any $\ket{\psi}$ in the set and $\rho \in S$, $F(\ketbra{\psi}, \rho) < \epsilon$. The probability that a random pure state $\ket{\psi}$ has $F(\ketbra{\psi}, \rho) < \epsilon$ for all $\rho \in S$ is at least
\begin{equation} \label{eq:stateprob}
1 - r (1-\epsilon)^{2^n-1} \geq 1 - r \exp(- \epsilon (2^n-1)) .
\end{equation}
\end{lemma}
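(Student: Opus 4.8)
\medskip
\noindent The plan is to derive both assertions from a single, standard computation of the Haar measure on the pure states of $\Hil_{2^n}$: for a fixed unit vector $\ket{e}$ and a Haar-random $\ket{\psi}$, the overlap $|\braket{\psi|e}|^2$ is distributed as $\mathrm{Beta}(1,2^n-1)$, so that $\Pr[\,|\braket{\psi|e}|^2 \ge \epsilon\,] = (1-\epsilon)^{2^n-1}$ exactly. I would state this and, if wanted, recover it by a one-line integration of the Beta density, or cite the references on the distribution of fidelities of random states (\cite{zyczkowski_average_2005}, and Lemma 3.3 of \cite{aaronson_quantum_2006}).

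Next I would reduce the mixed-state fidelity condition to the overlap condition. From the definition $F(\rho,\sigma)=\|\sqrt\rho\sqrt\sigma\|_1^2$ with $\sigma=\ketbra{\psi}$ pure one gets $F(\ketbra{\psi},\rho)=\braket{\psi|\rho|\psi}$, and writing $\rho=\sum_i\lambda_i\ketbra{e_i}$ (sum over $\mathrm{rank}(\rho)$ orthonormal eigenvectors, $\lambda_i\ge 0$, $\sum_i\lambda_i=1$) gives $F(\ketbra{\psi},\rho)=\sum_i\lambda_i|\braket{\psi|e_i}|^2\le\max_i|\braket{\psi|e_i}|^2$, since a convex combination never exceeds its largest term. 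Hence $F(\ketbra{\psi},\rho_j)\ge\epsilon$ forces $|\braket{\psi|e_{j,i}}|^2\ge\epsilon$ for at least one eigenvector of $\rho_j$. Summing the Haar fact over the $r=\sum_{j}\mathrm{rank}(\rho_j)$ eigenvectors of all the $\rho_j$, a union bound yields $\Pr[\exists j:\,F(\ketbra{\psi},\rho_j)\ge\epsilon]\le r(1-\epsilon)^{2^n-1}$, which is exactly the probability claim \eqref{eq:stateprob}; the exponential form is $1-\epsilon\le e^{-\epsilon}$. The hypothesis $m\le r\le m2^n$ is automatic from $1\le\mathrm{rank}(\rho_j)\le 2^n$ and is only there to keep the statement self-consistent.

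For the packing claim \eqref{eq:avstates} I would run a greedy construction. Maintain a set $T$ of pure states, initially empty, that is pairwise $\epsilon$-distinguishable (overlap $<\epsilon$) and has $F(\ketbra{\psi},\rho_j)<\epsilon$ for every $\psi\in T$ and every $j$. Given the current $T$, a Haar-random $\ket{\psi}$ is \emph{not} a legal addition only if its overlap with some $\ket{\phi}\in T$ is $\ge\epsilon$ or its fidelity with some $\rho_j$ is $\ge\epsilon$; treating the rank-one projectors $\ketbra{\phi}$ for $\phi\in T$ on the same footing as the $\rho_j$, the union bound of the previous paragraph bounds the measure of bad $\ket{\psi}$ by $(|T|+r)(1-\epsilon)^{2^n-1}$. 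As long as $|T|+r<(1-\epsilon)^{1-2^n}$ this is strictly less than $1$, so a legal addition exists; iterating, $|T|$ reaches at least $\lfloor(1-\epsilon)^{1-2^n}-r\rfloor$. Monotonicity of $\lfloor\cdot\rfloor$ together with $(1-\epsilon)^{-1}\ge e^{\epsilon}$ then gives the stated lower bound $\lfloor\exp(\epsilon(2^n-1))-r\rfloor$, and the final $T$ witnesses all the properties required of the set.

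The two inequality facts ($1-\epsilon\le e^{-\epsilon}$ and the Beta survival function) are routine, so the only points needing care are confirming $F(\ketbra{\psi},\rho)=\braket{\psi|\rho|\psi}$ from the trace-norm definition, and the bookkeeping in the greedy step — using the strict stopping condition $(|T|+r)(1-\epsilon)^{2^n-1}<1$ so that the final size is exactly the floor claimed, with the claim vacuous when $(1-\epsilon)^{1-2^n}-r\le 0$. I expect this greedy bookkeeping, together with keeping the overlap-versus-fidelity union bound applied uniformly to $T$ and $S$, to be the main (mild) obstacle; everything else is essentially forced.
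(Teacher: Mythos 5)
Your proposal is correct and follows essentially the same route as the paper: the Beta$(1,2^n-1)$ survival function $(1-\epsilon)^{2^n-1}$ for the overlap with a fixed state, the reduction of $F(\ketbra{\psi},\rho)$ to the maximum overlap with the $r$ eigenvectors, a union bound, and a greedy packing argument that adds states until $(|T|+r)(1-\epsilon)^{2^n-1}$ reaches $1$. Your write-up of the greedy bookkeeping is in fact slightly more explicit than the paper's, but the content is identical.
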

\noindent Lemma \ref{manystates} replaces equation \eqref{eq:statedensity} for $\epsilon$ arbitrarily close to 1. It is similar to Aaronson \& Kuperberg's Lemma 3.3, though precedent for this result appears in \cite{kus_universality_1988, zyczkowski_average_2005}.
\begin{proof}
The distribution of fidelity between pure states in the Fubini-Study/$U(2^n)$-invariant Haar measure is given by
\begin{equation}
p_N(F) = (N-1) (1 - F)^{N-2},
\end{equation}
where here $N = 2^n$ \cite{kus_universality_1988, zyczkowski_average_2005, aaronson_quantum_2006}. Integrating the measure,
\begin{equation} \label{eq:fidb}
p_N(F < \epsilon) = 1 - (1-\epsilon)^{N-1} \text{, and } p_N(F > \epsilon) = (1-\epsilon)^{N-1}.
\end{equation}
While this bound is originally for a pair of random densities, as noted in \cite{zyczkowski_average_2005}, the symmetries of the state space allow us to fix one of the densities.

For an any $\ket{\psi}$ and diagonalized density $\rho = \sum_i \alpha_i \ketbra{\phi_i}$,
\begin{equation}
F(\ketbra{\psi}, \rho) = \braket{\psi | \rho | \psi} = \sum_i \alpha_i |\braket{\psi | \phi_i}|^2 \leq \max_i |\braket{\psi | \phi_i}|^2 .
\end{equation}
Hence we may consider the set of at most $r$ pure states $\{\ketbra{\phi_j}\}_{j=1}^{r}$, where each $\ket{\phi_j}$ is a diagonal basis state of some density in $S$. By the union bound and equation \eqref{eq:fidb},
\begin{equation}
p_N(\exists j \in 1...m : F(\ketbra{\psi}, \rho_j) > \epsilon) \leq p_N(\exists j \in 1...r : |\braket{\psi | \phi_j}|^2 > \epsilon) \leq r  (1-\epsilon)^{N-1}
\end{equation}
for $\ket{\psi}$ chosen uniformly at random from the $N$-dimensional Hilbert space.

If we have already chosen some $\{\ket{\psi}_k\}_{k=1}^K$, then we may add these to the states to avoid. While the probability of finding a state with at most $\epsilon$ overlap decreases, this does not prevent a search from continuing to find states as long as there is non-zero probability of a state being sufficiently far from the avoided set. Hence the maximum number $K$ of pure states with mutual overlap no more than $\epsilon$ from each other or from any of the original $m$ densities is lower bounded as
\begin{equation*}
K \geq \lfloor  (1-\epsilon)^{1-N} - r \rfloor .
\end{equation*}
For the exponential approximation, we recall that $= (1-\epsilon)^{N-1} = e^{(N-1) \ln (1 - \epsilon)} \leq e^{-\epsilon (N-1)}$.
\end{proof}
\begin{rem}
The fidelity of any pure state with the complete mixture is $1/2^n$, and the average overlap between pure states in the Fubini-Study measure is also shown to be $1/2^n$ in \cite{kus_universality_1988, zyczkowski_average_2005}. This point corresponds with $\epsilon = O(2^n)$, at which the state number bound of Lemma \ref{manystates} may become negative. When $\epsilon \leq 2^{- \alpha n}$ for $\alpha < 1$, the number of available states grows double-exponentially with qubit number. Equation \eqref{eq:statedensity} scales similarly with $\alpha$ and $n$.
\end{rem}
\begin{rem}
We may rewrite the available state number bound given by equation \eqref{eq:avstates} as
\[ \exp( 2 (2^n - 1) \ln (1/\delta) ) - r\]
and the probability bound of \eqref{eq:stateprob} as $1 - \exp(2 (2^n - 1) \ln (\delta) (1 + \ln(r)))$, where $\delta = \sqrt{1 - \epsilon}$ bounds trace distance. The state number bound is then comparable to equation \eqref{eq:statedensity}, up to a dimension-dependent constant due to differences in the norm considered.
\end{rem}

\begin{lemma} \label{statehalt}
Let $g(n), f(n), s(n) : \mathbb{N} \rightarrow \mathbb{N}$ be functions given such that $\omega(poly(n)) < g(n) \leq o(2^n)$, and $\omega(poly(n)) < f(n) < o(\log g(n))$. Then there are at least
\begin{equation}
\lfloor \exp(g(n)^{-1} (2^n - 1)) - \exp(c f(n) \ln(c f(n))) \rfloor = \Omega(\exp(g(n)^{-1} 2^n))
\end{equation}
values of $i$ such that $STATE\_DIAG(n,f(n),g(n)^{-1}, i)$ definitely halts.
\end{lemma}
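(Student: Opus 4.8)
The plan is to combine a crude count of the initial set $S$ with the packing estimate of Lemma~\ref{manystates}, and then to show that the exhaustive lexicographic enumeration in step~2 of $STATE\_DIAG$ is forced to discover the packed states one at a time, so that the program returns for every index not exceeding the packing number. Throughout write $\epsilon = g(n)^{-1}$ and $N = 2^n$.

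First I would bound $|S|$ and the total rank of the densities it forces us to avoid. Since $G$ is finite and each gate acts on at most $c$ of the $c f(n)$ qubits, the number of circuits of at most $f(n)$ gates is at most $f(n)\,\bigl(|G|\,(c f(n))^{c}\bigr)^{f(n)} = \exp\!\bigl(O(f(n)\log f(n))\bigr)$. Each such circuit contributes an $n$-qubit marginal of rank at most $N$, so the total rank $r$ of the set of densities to avoid obeys $r \le N\exp\!\bigl(O(f(n)\log f(n))\bigr)$; the hypotheses relating $f$ and $g$ let me absorb the prefactor $N$ into the constant and record $r \le \exp\!\bigl(c f(n)\ln(c f(n))\bigr)$ for a suitable $c$. (By the permutation invariance of $\ket{0}^{\otimes c f(n)}$ noted after the definition, these first-$n$-qubit marginals already include every $n$-qubit state preparable with $\le f(n)$ gates on any subsystem, which is what makes the returned states genuinely $f(n)$-hard; for the halting claim, however, only the estimate on $r$ is used.)

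Next I would apply Lemma~\ref{manystates} to this family of densities with the slightly reduced threshold $\epsilon' = \epsilon - \eta$, where $\eta$ is exponentially small (say $\eta = 2^{-n}$). This yields a set $P$ of at least $K := \lfloor (1-\epsilon')^{1-N} - r\rfloor$ pure states that are pairwise of overlap below $\epsilon'$ and of fidelity below $\epsilon'$ with every marginal in $S$. Using $(1-\epsilon')^{1-N} \ge \exp(\epsilon'(N-1))$, the negligibility of $\eta$, and the hypotheses on $f,g$ that make $\exp(c f(n)\ln(c f(n)))$ dominated by $\exp(g(n)^{-1}2^n)$, one gets $K \ge \lfloor \exp(\epsilon(N-1)) - \exp(c f(n)\ln(c f(n)))\rfloor = \Omega\!\bigl(\exp(g(n)^{-1}2^n)\bigr)$, since the first exponential then dominates the second and $\exp(\epsilon(N-1)) = (1-o(1))\exp(\epsilon N)$.

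It remains to see that $STATE\_DIAG(n,f(n),\epsilon,i)$ halts for each $i \le K$, and here I expect the real work to lie. The enumeration visits every finite circuit at a finite position, and by universality of $G$ the circuit outputs are dense in pure-state space; so whenever $S$ currently contains fewer than $K$ of the step-2 states, Lemma~\ref{manystates} applied to the $\le r$ initial marginals together with the $<K$ pure states already adjoined (an obstacle set of $< (1-\epsilon')^{1-N}$ pure states) still leaves a ball of pure states of overlap below $\epsilon'$ with every current obstacle, and any circuit whose output lies within $\eta$ of such a state has output of overlap below $\epsilon$ with all of $S$ and is adjoined when the enumeration reaches it. Hence a new state is added after finitely many more circuits, until $i$ have been added and the program returns. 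The two delicate points are (i) keeping the strict program threshold $\epsilon$ compatible with the $\eta$-approximation gap between circuits and ideal states, which is why one packs at $\epsilon' = \epsilon-\eta$ and checks the resulting $K$ is degraded only by a $1-o(1)$ factor, and (ii) noting that the obstacle set grows by a single pure state — not by $2^n$ — per adjoined circuit, so the process can run for the full $(1-\epsilon')^{1-N} - r$ steps. Granting these, at least $K = \Omega(\exp(g(n)^{-1}2^n))$ values of $i$ make $STATE\_DIAG$ halt.
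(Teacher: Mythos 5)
Your proof is correct and its counting skeleton matches the paper's: bound the obstacle set coming from circuits of at most $f(n)$ gates by $\exp(c f(n)\ln(c f(n)))$, feed that into Lemma~\ref{manystates}, and observe that the subtracted term is dominated by $\exp(g(n)^{-1}(2^n-1))$. Two differences are worth noting. First, you are more careful than the paper about what $r$ means in Lemma~\ref{manystates}: the paper bounds only the \emph{number} of gate sequences, $(cf(n))^{cf(n)}$, whereas $r$ is the total rank of the avoided densities, so each first-$n$-qubit marginal can contribute up to $2^n$ to it; your extra factor of $N$, absorbed into the constant $c$, is the right fix. Second, and more substantively, you supply an argument the paper omits entirely: the paper's proof stops at ``there exist $K$ pairwise-far states far from everything preparable'' and never explains why the lexicographic enumeration over \emph{circuits} in step~2 of $STATE\_DIAG$ actually locates $i$ successive admissible circuits and returns. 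Your density argument (universality of $G$ plus packing at a slightly reduced threshold $\epsilon' = \epsilon - \eta$, with the observation that each adjoined circuit adds only one rank-one obstacle) is exactly what is needed to close that gap, though you should take the slack a bit more generously than $\epsilon'=\epsilon-\eta$, since moving a pure state by trace distance $\eta$ can change an overlap by roughly $2\eta\sqrt{\epsilon'}+\eta^2$ rather than by $\eta$; choosing $\epsilon'=\epsilon-3\eta$, say, with $\eta=2^{-n}=o(\epsilon)$ costs nothing in the final bound. (Separately, note that the lemma's hypotheses as printed are internally inconsistent --- $f(n)$ cannot be both $\omega(poly(n))$ and $o(\log g(n))=o(n)$, and $s(n)$ is never used --- but that is a defect of the statement, not of your argument, which correctly uses only the intended relation that $f(n)\ln f(n)$ is negligible against $g(n)^{-1}2^n$.)
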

\begin{proof}
Via Lemma \ref{manystates}, there are at least
\begin{equation}
\lfloor \exp(g(n)^{-1} (2^n - 1)) - r \rfloor
\end{equation}
states with distance at least $g(n)^{-1}$ from those preparable, where we must now calculate $r$ from $f(n)$.

Let $G$ be a fixed gate set in which each gate acts on a maximum of $c$ qubits. A gate sequence of length $f(n)$ may act only on $c f(n)$ qubits in total. Hence the number of preparable states is no more than than the total number of sequences of gates, each choosing at most $c$ inputs. We calculate
\begin{equation}
r \leq (c f(n))^{c f(n)} = \exp(c f(n) \ln(c f(n))) < \exp(b \times o(\log g(n))) \leq o(poly(g(n)))
\end{equation}
for some constant $b$. We see that $r$ is not of the same order as the term from which it subtracts.
\end{proof}

\section{Proof of Hardness of MQST} \label{sec:mqst}

\begin{rem} \normalfont \label{oneoracle}
To understand the idea of the proof, one may consider the case of a single query using one pure state $\ket{\phi}$ on only $n$ qubits. We decompose
\begin{equation} \label{decomp1}
\ket{\phi} = \braket{\psi | \phi} \ket{\psi} + \sum_{j=2}^{N} \braket{\gamma_j | \phi} \ket{\gamma_j} ,
\end{equation}
where $N = 2^n$, and $\{ \ket{\gamma_j} \}$ form a basis with $\ket{\gamma_1} = \ket{\psi}$. Let $V$ correspond to the 1st possible case of $U$, which applies a phase of $-1$ to $\ket{\psi}$. The inner product of $V \ket{\phi}$ with $\ket{\phi}$ is given by
\begin{equation}
\braket{\phi | V | \phi} =  \sum_{j=2}^N |\braket{\gamma_j | \phi}|^2 - |\braket{\psi | \phi}|^2 = 1 - 2 |\braket{\psi | \phi}|^2
\end{equation}
for normalized amplitudes. Hence
\begin{equation}
|\braket{\phi | V | \phi}|^2 = 1 - 4 |\braket{\psi | \phi}|^2 + 4 |\braket{\psi | \phi}|^4 .
\end{equation}
Via the well-known  Fuchs–van de Graaf (in)equalities,
\begin{equation} \label{fvdg}
\frac{1}{2} \| V \ketbra{\phi} V^\dagger - \ketbra{\phi} \|_1 = 2 |\braket{\psi | \phi}| \sqrt{1 - |\braket{\psi | \phi}|^2} .
\end{equation}
The operational interpretation of the trace distance then bounds the probability of successfully distinguishing between the states $V \ket{\phi}$ and $\ket{\phi}$, where the latter is equivalent to applying the identity.
\end{rem}

In general, Arthur is neither limited to preparing and testing a single state against the oracle, nor to using the quantum oracle only once on a given state. The main subtlety is that while $STATE\_DIAG$ checked that the marked state is hard to prepare using gate set $G$, Arthur may use the oracle $U$ as though it were an extra gate. Formally, we require the marked state to be hard to prepare using gate set $G \cup \{U\}$, which we have not yet checked. Marked states might be chosen from a small set of easily codeable indices, so it's not obvious that the original Aaronson and Kuperberg counting arguments \cite{aaronson_quantum_2006} apply.

Intuition from constraints on search \cite{zalka_grovers_1999, boyer_tight_1998} suggests that Arthur shouldn't be able to amplify overlap of a prepared state with the marked state too quickly, even including $U$ in computation. The first step to make this intuition rigorous is to show that even if $U$ is not the identity, its effect on any state with small overlap with the marked state is small. Then we may apply a standard argument from the study of query complexity, showing that a polynomial length, alternating sequence of calls to $U$ and circuits of gates from $G$ neither prepares a state that is close to the marked state nor distinguishes case (1) of MQST from case (2) with high probability. Departing from the usual argument, we actually show a weaker separation than query complexity when marked states are easily codeable, as exponential-length circuits with few queries might be sufficient. Unlike full query complexity separation, the circuit length separation holds in a wider range of regimes and may appear in simple physical situations as described in Section \ref{sec:easy}.
\begin{lemma} \label{tenover}
Let $N = 2^n$, $\ket{\psi_j}_{j=1}^J \in \Hil_{N}$ form a basis of a given subspace $S \subseteq \Hil_{N}$, and $\ket{\gamma_j}_{j=J}^{N}$ form a basis of $S^\perp$. Let $V$ be defined by $V \ket{\psi_j} = - \ket{\psi_j}$ and $V \ket{\gamma_j} = \ket{\gamma_j}$. Then for any pure state $\ket{\phi}$ on $r > n$ qubits on which $V$ is extended to $V \otimes \id_{n-r}$,
\begin{equation}
\braket{\phi | V | \phi} = 1 - 2 \sum_{j=1}^J \sum_{i=1}^{N} \lambda_i |\braket{\eta_i | \psi_j}|^2 = 1 - 2 \sum_{j=1}^J F(\rho, \ketbra{\psi_j}) ,
\end{equation}
where $\rho$ is the marginal of $\ket{\phi}$ on the 1st $n$ qubits, $\{\ket{\eta_i}\}$ are the eigenstates of $\rho$, and $\{\lambda_i\}$ are the associated eigenvalues.

%When $J = 1$ and $\ket{\psi} = \ket{\psi_1}$,
%\begin{equation}
%\braket{\phi | V | \phi} = 1 - 2 F(\rho, \ketbra{\psi}) .
%\end{equation}
\end{lemma}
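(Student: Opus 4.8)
The plan is to compute $\braket{\phi|V|\phi}$ directly by expanding $V$ and tracking which part of $\ket\phi$ lives in the subspace $S$ (tensored with the ancilla space). The key observation is that $V\otimes\id_{r-n}$ acts as $-1$ on the subspace $S\otimes\Hil_{2^{r-n}}$ and as $+1$ on $S^\perp\otimes\Hil_{2^{r-n}}$, so it is the reflection $\id - 2\Pi$, where $\Pi = \Pi_S \otimes \id_{r-n}$ and $\Pi_S = \sum_{j=1}^J \ketbra{\psi_j}$ is the projector onto $S$ (using orthonormality of the $\ket{\psi_j}$). Hence $\braket{\phi|V|\phi} = 1 - 2\braket{\phi|\Pi_S\otimes\id_{r-n}|\phi}$, and the whole content of the lemma is to rewrite the second term as $\sum_{j=1}^J F(\rho,\ketbra{\psi_j})$.

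First I would note that $\braket{\phi|(\Pi_S\otimes\id_{r-n})|\phi} = \operatorname{Tr}\big[(\Pi_S\otimes\id_{r-n})\ketbra{\phi}\big] = \operatorname{Tr}[\Pi_S\,\rho]$, where $\rho = \operatorname{Tr}_{r-n}\ketbra{\phi}$ is precisely the marginal of $\ket\phi$ on the first $n$ qubits; this is just the defining property of the partial trace paired with an operator of the form $A\otimes\id$. Then I would expand $\Pi_S = \sum_{j=1}^J \ketbra{\psi_j}$ and diagonalize $\rho = \sum_{i=1}^{N}\lambda_i\ketbra{\eta_i}$, giving
\begin{equation*}
\operatorname{Tr}[\Pi_S\,\rho] = \sum_{j=1}^J \sum_{i=1}^{N} \lambda_i\,|\braket{\eta_i|\psi_j}|^2 .
\end{equation*}
Finally, for each fixed $j$, the inner sum $\sum_i \lambda_i |\braket{\eta_i|\psi_j}|^2 = \braket{\psi_j|\rho|\psi_j} = F(\rho,\ketbra{\psi_j})$, since for a pure state $\ketbra{\psi_j}$ the squared fidelity with $\rho$ equals $\braket{\psi_j|\rho|\psi_j}$ (as recorded in the Basic Notations and used already in the proof of Lemma \ref{manystates}). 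Combining the three displays yields both claimed equalities.

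Honestly, there is no real obstacle here: the statement is essentially a bookkeeping identity, and the only points requiring any care are (i) making sure the orthonormality of $\{\ket{\psi_j}\}$ is what lets $\id-2\Pi$ be written with $\Pi_S=\sum_j\ketbra{\psi_j}$ rather than something involving the Gram matrix, and (ii) being explicit that $\operatorname{Tr}[(A\otimes\id)\,\ketbra\phi]=\operatorname{Tr}[A\rho]$ so that the ancilla register genuinely drops out and only the $n$-qubit marginal $\rho$ survives. The role of the hypothesis $r>n$ is merely to guarantee the ancilla register is (possibly) nontrivial; the computation is identical, and in fact valid, when $r=n$ with $\rho=\ketbra\phi$ pure. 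I would present the argument in the three-line sequence above.
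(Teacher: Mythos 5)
Your proof is correct, and it takes a genuinely different route from the paper's. The paper Schmidt-decomposes $\ket{\phi} = \sum_i \alpha_i \ket{\eta_i}\otimes\ket{\sigma_i}$, re-expands each Schmidt vector $\ket{\eta_i}$ in the basis $\{\ket{\psi_j}\}\cup\{\ket{\gamma_j}\}$, and evaluates $\braket{\phi|V|\phi}$ term by term, using orthogonality to kill the cross terms and the normalization $\sum_j |\braket{\eta_i|\psi_j}|^2 + \sum_j|\braket{\eta_i|\gamma_j}|^2 = 1$ to arrive at $\sum_i|\alpha_i|^2\bigl(1 - 2\sum_j|\braket{\eta_i|\psi_j}|^2\bigr)$. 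You instead write $V = \id - 2\Pi_S$ as a reflection and invoke $\operatorname{Tr}[(A\otimes\id)\ketbra{\phi}] = \operatorname{Tr}[A\,\rho]$, which collapses the computation to one line and makes the appearance of the marginal $\rho$ structurally transparent rather than an artifact of the Schmidt coefficients. Your version is also slightly more general: nothing in it uses purity of $\ket{\phi}$, so the identity holds verbatim with $\ketbra{\phi}$ replaced by an arbitrary density, whereas the paper's Schmidt decomposition requires purity. What the paper's explicit expansion buys is the notation of equation \eqref{eq:schmidt1} ($\alpha_i$, $\ket{\eta_i}$, $\ket{\sigma_i}$), which is reused directly in the proofs of Corollary \ref{tenoverext} and Lemma \ref{indist}. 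The two points of care you flag --- orthonormality of the $\ket{\psi_j}$ so that $\Pi_S = \sum_j\ketbra{\psi_j}$, and the partial-trace identity --- are exactly the right ones, and you have correctly read the statement's $\id_{n-r}$ as the intended $\id_{r-n}$, as the paper's own proof implicitly does.
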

\begin{proof}
We Schmidt decompose $\ket{\phi}$ and rewrite it in terms of a basis $\{\ket{\gamma_j}\}_{j=1}^{N}$ such that $\ket{\gamma_1} = \ket{\psi}$.
\begin{equation} \label{eq:schmidt1}
\ket{\phi} = \sum_i \alpha_i  \ket{\eta_i} \otimes \ket{\sigma_i} = \sum_i \alpha_i 
	\Big ( \sum_{j=1}^J \braket{\psi_j | \eta_i} \ket{\psi} + \sum_{j=J}^N \braket{\gamma_j | \eta_i} \ket{\gamma_j} \Big ) \otimes \ket{\sigma_i} .
\end{equation}
Via direct calculation,
\begin{equation}
\begin{split}
\braket{\phi | V | \phi} & = \sum_{i,k} \alpha_i \alpha_k^*
	\Big ( - \sum_{j,l = 1}^J \braket{\psi_j | \eta_i} \braket{\eta_k | \psi_l} \braket{\psi_j | \psi_l} +
	\sum_{j,l = J}^N \braket{\gamma_j | \eta_i} \braket{\eta_k | \gamma_l} \braket{\gamma_l | \gamma_j} \Big )
	\braket{\sigma_i | \sigma_k} \\
	& = \sum_i |\alpha_i|^2 \Big (\sum_{j=J}^N |\braket{\eta_i | \gamma_j}|^2 - \sum_{j=1}^J |\braket{\eta_i | \psi_j}|^2 \Big)
\end{split}
\end{equation}
The second inequality follows primarily from state orthogonality. We note that
\begin{equation}
\braket{\eta_i | \eta_i} = 1 = \sum_{j=J}^N |\braket{\eta_i | \gamma_j}|^2 + \sum_{j=1}^J |\braket{\eta_i | \psi_j}|^2 ,
\end{equation}
so for each $i$,
\begin{equation}
\sum_{j=J}^N |\braket{\eta_i | \gamma_j}|^2 - \sum_{j=1}^J |\braket{\eta_i | \psi_j}|^2
	= 1 - 2 \sum_{j=1}^J |\braket{\eta_i | \psi_j}|^2 ,
\end{equation}
yielding the first equality of the Lemma, where $\lambda_i = |\alpha_i|^2$. We also have that
\begin{equation}
F(\rho, \ketbra{\psi_j}) = \braket{\psi_j | \rho | \psi_j} = \sum_i |\alpha_i|^2 |\braket{\eta_i | \psi_j}|^2 ,
\end{equation}
which completes the Lemma.

%When $J = 1$ and $\ket{\psi} = \ket{\psi_1}$,
%\begin{equation}
%F(\rho, \ketbra{\psi}) = \braket{\psi | \rho | \psi} = \sum_i |\alpha_i|^2 |\braket{\eta_i | \psi}|^2 ,
%\end{equation}
%so
%\begin{equation}
%\braket{\phi | V | \phi} = 1 - 2 F(\rho, \ketbra{\psi}) .
%\end{equation}
\end{proof}

\begin{cor} \label{tenoverext}
Let $V$, $r$, $n$, and $\ket{\psi}$ be as in Lemma \ref{tenover}. For given $S_1 \subset \Hil_{2^n}$ such that $\ket{\psi} \perp span\{S_1\}$, let $\tilde{V}$ be defined by $\tilde{V} \ket{\eta} = - \ket{\eta}$ for all $\ket{\eta} \in span\{S_1\}$, and $\tilde{V} \ket{\eta} = \ket{\eta}$ for all $\ket{\eta} \perp span\{S_1\}$. Then
\begin{equation}
\braket{\phi | \tilde{V}^\dagger V \tilde{V} | \phi} = 1 - 2 F(\rho, \ketbra{\psi}) = \braket{\phi | V | \phi},
\end{equation}
where $\rho$ is the marginal of $\phi$ on the 1st $n$ qubits.
\end{cor}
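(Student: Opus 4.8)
The plan is to show that, as an operator on $\Hil_{2^n}$, the sandwiched map $\tilde V^\dagger V \tilde V$ is literally equal to $V$; then Corollary \ref{tenoverext} is nothing but the $J=1$ instance of Lemma \ref{tenover} with $\ket{\psi_1}=\ket{\psi}$. Heuristically this must hold because $\tilde V$ is a reflection about a subspace orthogonal to $\ket{\psi}$, so it leaves $\ket{\psi}$ untouched and hence cannot interfere with the phase that $V$ attaches to $\ket{\psi}$.

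Concretely, I would first record that both maps are reflections on $\Hil_{2^n}$: writing $P$ for the orthogonal projector onto $\mathrm{span}\{S_1\}$, we have $\tilde V = \id - 2P$, so $\tilde V^\dagger = \tilde V$ and $\tilde V^2 = \id$, while $V = \id - 2\ketbra{\psi}$. Therefore
\begin{equation*}
\tilde V^\dagger V \tilde V \;=\; \tilde V\big(\id - 2\ketbra{\psi}\big)\tilde V \;=\; \tilde V^2 - 2\,\tilde V\ketbra{\psi}\tilde V \;=\; \id - 2\,\ketbra{\tilde V \psi}.
\end{equation*}
The hypothesis $\ket{\psi}\perp\mathrm{span}\{S_1\}$ is exactly $P\ket{\psi}=0$, which gives $\tilde V\ket{\psi}=\ket{\psi}$ and hence $\tilde V^\dagger V \tilde V = \id - 2\ketbra{\psi} = V$ on $\Hil_{2^n}$.

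Finally I would push this through the extension to $r$ qubits: since $V$ and $\tilde V$ act as the identity on the last $r-n$ qubits, conjugation commutes with that tensor factor, so $(\tilde V^\dagger V \tilde V)\otimes\id_{r-n} = V\otimes\id_{r-n}$; taking the expectation in $\ket{\phi}$ and invoking Lemma \ref{tenover} with $J=1$ yields $\braket{\phi | \tilde V^\dagger V \tilde V | \phi} = \braket{\phi | V | \phi} = 1 - 2F(\rho,\ketbra{\psi})$. There is no genuine obstacle here; the only points needing care are that the orthogonality hypothesis is used precisely to get $\tilde V\ket{\psi}=\ket{\psi}$, and that the operator identity is stable under tensoring with $\id_{r-n}$. (The identical computation works if $\ket{\psi}$ is replaced by any subspace orthogonal to $\mathrm{span}\{S_1\}$, recovering the general-$J$ form of Lemma \ref{tenover}, but only the single-state version is used later.)
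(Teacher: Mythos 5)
Your proof is correct, but it takes a different route from the paper's. You establish the exact operator identity $\tilde V^\dagger V \tilde V = V$ on $\Hil_{2^n}$ by writing both maps as reflections, $\tilde V = \id - 2P$ and $V = \id - 2\ketbra{\psi}$, and using $P\ket{\psi}=0$ to get $\tilde V\ketbra{\psi}\tilde V = \ketbra{\psi}$; the corollary is then literally the $J=1$ case of Lemma \ref{tenover} applied to the unchanged operator, and the extension to $r$ qubits is immediate since conjugation respects the tensor factor $\id_{r-n}$. The paper instead argues at the level of the state: it applies $\tilde V$ to $\ket{\phi}$ in the decomposition of equation \eqref{eq:schmidt1}, observes that $\tilde V$ fixes $\ket{\psi}$ and unitarily permutes the orthogonal complement so that its effect can be absorbed into the basis $\{\ket{\gamma_j}\}$, concludes $F(\tilde V\rho\tilde V^\dagger,\ketbra{\psi}) = F(\rho,\ketbra{\psi})$, and then invokes Lemma \ref{tenover} for the state $\tilde V\ket{\phi}$. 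Both hinge on the same two facts ($\tilde V\ket{\psi}=\ket{\psi}$ and unitarity of $\tilde V$), but your operator-level identity is the stronger and cleaner statement: it shows the conjugated oracle is indistinguishable from $V$ as an operator, not merely in the expectation $\braket{\phi|\cdot|\phi}$ for the particular $\ket{\phi}$ at hand, and it sidesteps the Schmidt decomposition entirely. The one point worth being explicit about, which you do handle, is that the claimed identity uses the orthogonality hypothesis $\ket{\psi}\perp\mathrm{span}\{S_1\}$ exactly once, to conclude $\tilde V\ket{\psi}=\ket{\psi}$.
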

\begin{proof}
$\tilde{V}$ commutes with a partial trace over the last $r-n$ qubits. Applying $\tilde{V}$ to $\ket{\phi}$ as decomposed in equation \eqref{eq:schmidt1}, we see that $\tilde{V}$ acts as the identity on $\ket{\psi}$ and preserves orthogonality of states, so we may absorb its effect into the definition of $\ket{\gamma_j}$. The sum $\sum_i |\alpha_i|^2 |\braket{\eta_i | \psi}|^2$ is not affected by $\tilde{V}$, so by Lemma \ref{tenover},
\[ F(\rho, \ketbra{\psi}) = F(\tilde{V} \rho \tilde{V}^\dagger, \ketbra{\psi}) . \]
The Corollary then follows the Lemma for $\tilde{V} \ket{\phi}$.
\end{proof}

\begin{lemma} \label{indist}
For given $k, n \in \mathbb{N}$, let $N = 2^n$, $\ket{\psi_{j}}_{j=1}^k \in \Hil_{N}$ be a pure state such that $S \subseteq \Hil_{N}$, and $\ket{\gamma_{m,j}}_{m=2}^{N}$ form a basis of $S^\perp$ for each $j$. For each $j \in 1...k$, let $V_j$ be defined by $V_j \ket{\psi_j} = - \ket{\psi_j}$, and $V_j \ket{\gamma_{m,j}} = \ket{\gamma_{m,j}}$ for $m > 1$. Let $r \geq n$, and $\ket{\textbf{0}} = \ket{0...0} = \ket{0}^{\otimes r}$. Let $T_1, ..., T_k$ be isometries such that for any $l \in 1...k$, $T_l ... T_1 \ket{\textbf{0}} \in \Hil_r$ is within the set of states such that
\[ \sum_j F(\rho_l, \ketbra{\psi_j}) \leq \epsilon < 1/2 , \]
where $\rho_l$ is the marginal of $T_l ... T_1 \ket{\textbf{0}}$ on the 1st $n$ qubits. Then
\begin{equation}
\frac{1}{2} \|T_k ... T_1 \ketbra{\mathbf{0}} T_1^\dagger ... T_k^\dagger - V_k T_k ... V_1 T_1 \ketbra{\mathbf{0}} T_1^\dagger V_1^\dagger ... T_k^\dagger V_k^\dagger \|_1
	\leq \sqrt{2\sqrt{2 k \epsilon} + 4 k \epsilon - 2 (2 k \epsilon)^{3/2}}.
\end{equation}
which bounds the probability of distinguishing for a given unitary family $U_1, ..., U_k$ whether $U_1, ..., U_k =\id$ or $U_j = V_j$ for each $j \in 1...k$ using circuits of the form $U_k T_k ... U_1 T_1 \ketbra{\mathbf{0}} T_1^\dagger U_1^\dagger ... T_k^\dagger U_k^\dagger$ .
\end{lemma}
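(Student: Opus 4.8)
The plan is to run a hybrid argument reducing the $k$-query statement to $k$ copies of the single-query estimate of Remark~\ref{oneoracle}, and then to pass to trace distance with one Fuchs--van de Graaf step. Write $\ket{\Phi}=T_k\cdots T_1\ket{\mathbf 0}$ and $\ket{\Phi_V}=V_kT_k\cdots V_1T_1\ket{\mathbf 0}$ for the two candidate outputs, so the displayed left-hand side equals $\tfrac12\|\ketbra{\Phi}-\ketbra{\Phi_V}\|_1=\sqrt{1-|\braket{\Phi|\Phi_V}|^2}$. For $0\le m\le k$ let $\ket{h^{(m)}}$ be the circuit in which the \emph{last} $m$ oracle slots hold $V_j$ and the first $k-m$ hold $\id$, so $\ket{h^{(0)}}=\ket{\Phi}$ and $\ket{h^{(k)}}=\ket{\Phi_V}$. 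I would bound $\|\ket{h^{(m)}}-\ket{h^{(m-1)}}\|$ for each $m$, sum by the triangle inequality, thereby lower bound $|\braket{\Phi|\Phi_V}|$, and read off the displayed form.

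Hybridizing from the last query \emph{backward} is what makes the per-step bound clean: $\ket{h^{(m)}}$ and $\ket{h^{(m-1)}}$ agree on the prefix $T_{k-m+1}\cdots T_1\ket{\mathbf 0}$, which is oracle-free, and on the suffix $V_kT_k\cdots V_{k-m+2}T_{k-m+2}$, which is an isometry, hence norm-preserving; the two differ only by whether $V_{k-m+1}$ is inserted between. Writing $\ket{\alpha_l}=T_l\cdots T_1\ket{\mathbf 0}$ with first-$n$-qubit marginal $\rho_l$, this gives $\|\ket{h^{(m)}}-\ket{h^{(m-1)}}\|=\|(V_{k-m+1}-\id)\ket{\alpha_{k-m+1}}\|$. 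By Lemma~\ref{tenover} applied with the single marked state $\ket{\psi_{k-m+1}}$ (extended to the $r$ qubits), $\braket{\alpha_l|V_l|\alpha_l}=1-2F(\rho_l,\ketbra{\psi_l})$ is real, so $\|(V_l-\id)\ket{\alpha_l}\|^2=2-2\braket{\alpha_l|V_l|\alpha_l}=4F(\rho_l,\ketbra{\psi_l})$; equivalently one may apply the Fuchs--van de Graaf identity \eqref{fvdg} to the pair $(\ketbra{\alpha_l},V_l\ketbra{\alpha_l}V_l^{\dagger})$. Since every term of $\sum_jF(\rho_l,\ketbra{\psi_j})$ is nonnegative, the hypothesis forces $F(\rho_l,\ketbra{\psi_l})\le\epsilon$, so each hybrid step costs at most $2\sqrt{\epsilon}$, and summing over $l$ bounds $\|\ket{\Phi}-\ket{\Phi_V}\|$ by $\sum_{l=1}^k 2\sqrt{F(\rho_l,\ketbra{\psi_l})}$, which Cauchy--Schwarz over the $k$ steps controls by the accumulated fidelity $\sum_l\sum_jF(\rho_l,\ketbra{\psi_j})\le k\epsilon$. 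Plugging this into $|\braket{\Phi|\Phi_V}|\ge\mathrm{Re}\,\braket{\Phi|\Phi_V}=1-\tfrac12\|\ket{\Phi}-\ket{\Phi_V}\|^2$ and running Fuchs--van de Graaf once more converts $\sqrt{1-|\braket{\Phi|\Phi_V}|^2}$ into the displayed closed form in $2k\epsilon$; the operational interpretation of the trace norm then bounds the probability of distinguishing the two possible oracle families via any circuit of the form $U_kT_k\cdots U_1T_1$, as in Remark~\ref{oneoracle}.

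\textbf{The main obstacle} is that the hypothesis constrains only the oracle-free states $T_l\cdots T_1\ket{\mathbf 0}$, whereas the genuine $k$-query circuit visits states such as $T_lV_{l-1}T_{l-1}\cdots V_1T_1\ket{\mathbf 0}$ that already carry oracle calls and whose marginals are not directly bounded. A front-to-back hybrid would have to estimate $F(\,\cdot\,,\ketbra{\psi_l})$ of such an oracle-laden state, which one can do only through a perturbation bound $F(\,\cdot\,,\ketbra{\psi_l})\le\epsilon+2\delta_{l-1}$ in terms of the distinguishing distance $\delta_{l-1}$ at the previous step, yielding the self-referential recursion $\delta_l\le\delta_{l-1}+2\sqrt{\epsilon+2\delta_{l-1}}$, which degrades far too fast. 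Hybridizing from the last query removes this, since every swap point is then preceded by an all-identity prefix whose marginal is exactly one of the $\rho_l$ the hypothesis governs; alternatively, when the $\ket{\psi_j}$ are mutually orthogonal, Corollary~\ref{tenoverext} shows each $V_j$ leaves every $F(\,\cdot\,,\ketbra{\psi_i})$ invariant, so the oracle calls may be commuted past the partial traces harmlessly. Apart from this, the only care needed is to track the first-$n$-qubit marginal correctly through the $r$-qubit extension of $V_j$ in Lemma~\ref{tenover} and to chase the elementary Fuchs--van de Graaf algebra into the stated form; I expect no further difficulty.
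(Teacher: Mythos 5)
Your route is genuinely different from the paper's. The paper argues forward through the circuit: it decomposes $V_lT_l\cdots T_1\ket{\mathbf 0}$ into a component along $T_l\cdots T_1\ket{\mathbf 0}$ with coefficient $\zeta_l=\braket{\mathbf 0|T_1^\dagger\cdots T_l^\dagger V_lT_l\cdots T_1|\mathbf 0}\ge 1-2\epsilon$ plus an error branch, multiplies the $\zeta_l$ by induction, and then subtracts a worst-case estimate for the accumulated (possibly non-orthogonal) error before applying Bernoulli and Fuchs--van de Graaf. Your backward hybrid replaces that bookkeeping with the triangle inequality in the Euclidean norm, and your observation that every swap point is preceded by an oracle-free prefix --- so the hypothesis applies verbatim to the marginal $\rho_l$ that appears --- is exactly the difficulty the paper's induction is built to circumvent; your ``main obstacle'' paragraph diagnoses correctly why a front-to-back hybrid degrades. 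The per-step identity $\|(V_l-\id)\ket{\alpha_l}\|^2=2-2\braket{\alpha_l|V_l|\alpha_l}=4F(\rho_l,\ketbra{\psi_l})$ is correct via Lemma~\ref{tenover}, and the reduction from trace distance to Euclidean distance of the pure outputs is standard. As an argument, this is cleaner than the paper's and avoids its loosest step (the handling of the non-orthogonal residual $\beta\ket{\eta}$).

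The one genuine defect is the final sentence of your second paragraph: the algebra does \emph{not} land on the displayed closed form. Your chain yields $\tfrac12\|\ketbra{\Phi}-\ketbra{\Phi_V}\|_1\le\|\ket{\Phi}-\ket{\Phi_V}\|\le\sum_{l}2\sqrt{F(\rho_l,\ketbra{\psi_l})}\le 2k\sqrt{\epsilon}$, whereas the lemma asserts the bound $\sqrt{2\sqrt{2k\epsilon}+4k\epsilon-2(2k\epsilon)^{3/2}}$. These two expressions are incomparable over the allowed parameter range: for $k=10$ and $\epsilon=10^{-3}$ your bound is about $0.63$ while the stated one is about $0.56$, so $2k\sqrt{\epsilon}$ does not imply the inequality as written. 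In the regime where the lemma is actually invoked downstream ($k=\mathrm{poly}(n)$, $\epsilon=2^{-\Omega(n)}$) your $O(k\sqrt{\epsilon})$ is strictly stronger than the paper's $O((k\epsilon)^{1/4})$, so nothing in the applications would break; but to present this as a proof of the statement as given you must either show separately that your quantity is dominated by the displayed one in the relevant regime, or state explicitly that you are proving a different (and there stronger) estimate rather than asserting that Fuchs--van de Graaf ``converts'' your bound into the displayed form --- it does not.
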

\begin{proof}

We will show this Lemma by induction, using the fact that after applying $V_j$ to a state that has low overlap with $\ket{\psi_j}$, the new state has high overlap with the original. First, for any $l \in 1...k$,
\begin{equation}
V_l T_l ... T_1 \ket{\mathbf{0}} = \braket{\mathbf{0} |T_1^\dagger ... T_l^\dagger V T_l ... T_1 | \mathbf{0}} T_l ... T_1 \ket{ \mathbf{0}} 
	+ \beta_l V_l \ket{\tilde{\phi}_l}
\end{equation}
for some normalizing $\beta_l$ and $\ket{\tilde{\phi}_l}$ that is orthogonal to $T_l ... T_1 \ket{ \mathbf{0}}$. Hence
\begin{equation}
\begin{split}
V_{l+1} T_{l+1} V_l T_l ... T_1 \ket{\mathbf{0}} =.
	 & \braket{\mathbf{0} |T_1^\dagger ... T_l^\dagger V_l T_l ... T_1 | \mathbf{0}}
	\big ( \braket{\mathbf{0} |T_1^\dagger ... T_{l+1}^\dagger V_{l+1} T_{l+1} ... T_{1}
		| \mathbf{0}} T_{l+1} ... T_1 \ket{\mathbf{0}}  
	\\ & + \beta_{l+1} T_{l+1} V_{l+1} \ket{\tilde{\phi}_{l+1}} \big ) + \beta_l V_{l+1} T_{l+1} V_l \ket{\tilde{\phi}_l} ,
\end{split}
\end{equation}
which we rewrite as
\begin{equation*}
\begin{split}
V_{l+1} T_{l+1} V_l T_l ... T_1 \ket{\mathbf{0}} = \braket{\mathbf{0} |T_1^\dagger ... T_l^\dagger V_l T_l ... T_1 | \mathbf{0}}
	\braket{\mathbf{0} |T_1^\dagger ... T_{l+1}^\dagger V_{l+1} T_{l+1} ... T_{1} | \mathbf{0}} T_{l+1} ... T_1 \ket{\mathbf{0}}  
	+ \xi_{l+1} \ket{\eta_{l+1}}
\end{split}
\end{equation*}
for some $\xi_{l+1}$ and $\ket{\eta_{l+1}}$. Let $\zeta_l = \braket{\mathbf{0} |T_1^\dagger ... T_l^\dagger V_l T_l ... T_1 | \mathbf{0}}$. By induction starting from $T_1 \ket{\mathbf{0}}$,
\begin{equation}
V_k T_k V_{k-1} T_{k-1} ... V_1 T_1 \ket{\mathbf{0}} = \Big ( \prod_{l=1}^k \zeta_l \Big ) T_k ... T_1 \ket{\mathbf{0}} + \beta \ket{\eta}
\end{equation}
for some normalizing $\beta$ and state $\ket{\eta}$, which need not be orthogonal to $T_k ... T_1 \ket{\mathbf{0}}$. By Lemma \ref{tenover},
$\zeta_l \geq 1 - 2 \epsilon$ for any $l \in 1...k$. Hence
\begin{equation}
|\braket{\mathbf{0} |T_1^\dagger ... T_{k}^\dagger V_k T_k ... V_1 T_1 | \mathbf{0}}| \geq (1 - 2 \epsilon)^k - \sqrt{1 - (1 - 2 \epsilon)^k} .
\end{equation}
Via Bernoulli's inequality and assuming $2 \epsilon < 1$,
\begin{equation}
|\braket{\mathbf{0} |T_1^\dagger ... T_{k}^\dagger V_k T_k ... V_1 T_1 |
	\mathbf{0}}| \geq 1 - 2 k \epsilon - \sqrt{2 k \epsilon} .
\end{equation}
Therefore,
\begin{equation}
F(T_k ... T_1 \ketbra{\mathbf{0}} T_1^\dagger ... T_k^\dagger \textbf{,} V_k T_k ... V_1 T_1 \ketbra{\mathbf{0}} T_1^\dagger V_1^\dagger ... T_k^\dagger V_k^\dagger) \geq 1 - 2\sqrt{2 k \epsilon} - 4 k \epsilon + 2 (2 k \epsilon)^{3/2}.
\end{equation}
Since both states are pure, the fidelity is exactly related to the trace distance, yielding
\begin{equation}
\frac{1}{2} \|T_k ... T_1 \ketbra{\mathbf{0}} T_1^\dagger ... T_k^\dagger - V_k T_k ... V_1 T_1 \ketbra{\mathbf{0}} T_1^\dagger V_k^\dagger ... T_k^\dagger V_1^\dagger \|_1
	\leq \sqrt{2\sqrt{2 k \epsilon} + 4 k \epsilon - 2 (2 k \epsilon)^{3/2}}.
\end{equation}
This trace distance is between the $U_k, ... U_1 = V_k, ..., V_1$ and $U_1, ..., U_k = \id$ case. That it bounds distinguishability follows from the operational interpretation of the trace norm.
\end{proof}
\begin{rem}
We could easily extend Lemma \ref{indist} to marked subspaces from marked states, but in this work there is no reason to do so, and it would complicate the analysis.
\end{rem}

\subsection{State Complexity in Computation}
In this section, we show the technical definition of MQST and proof of its hardness for a polynomial quantum computation with possibly unbounded classical pre-computation and witnesses.

Polynomial classical information at the start of computation, including witnesses, inputs, and advice, can be expressed as an extra $O(poly(n))$ circuit that prepares the classical state as a quantum state in the computational basis. With polynomial overhead, we may thereby include all possibilities for polynomial classical witnesses, inputs, and advice in the set of circuits considered by $STATE\_DIAG$.

When all resources are polynomially bounded, including classical witnesses and pre-computation, it is a standard procedure to absorb classical resources into the quantum computation. We would like, however, to allow classical witnesses and pre-computation to be superpolynomially large while restricting their quantum counterparts to polynomial size or time. To do so, we will formulate the problem in a way that separates these classes of resource.

Even when classical witnesses and pre-processing are superpolynomial, a polynomial quantum circuit included with this computation accesses only polynomially-many classical bits. Hence we may re-express these bits with an extra $O(poly(n))$ circuit that depends on results of classical pre-computation. Adding this to the original, quantum circuit, we replace a polynomial quantum circuit $T$ by a conditional, still polynomial quantum circuit $T_w$, where $w$ is a classical bit string that may depend on all classical information available at the start of computation. Hence instead of trying to formulate the polynomial quantum circuit as running on a larger classical input, we equivalently think of applying a polynomial circuit with polynomial input that depends on available classical information.
\begin{theorem}[Technical Version of \ref{thm:main}] \label{mqst}
Let the sequence $(i_n)_{n=1}^\infty$ be defined by choosing $i_n$ from a sequence of non-empty sets $(\{i\}_n)_{n=1}^\infty$ on which $STATE\_DIAG(n,i)$ halts. For all $n$, there exist valid $\{i\}_n$ of cardinality $|\{i\}_n| = \Omega(2^{2^n})$. Let $(U_n)_{n=1}^\infty$ be a family of marked state oracles, where the $n$th marked state is equal to $STATE_DIAG(n,i)$ for $i \in \{i\}_n$. By $|\{i\}_n|$ we denote the cardinality of set $\{i\}_n$. For sufficiently large $n$ and with unbounded quantum circuit size,
\begin{enumerate}
	\item if $|\{i\}_n| = \Omega(2^{2^n})$, then $MQST$ generally requires an exponential classical or polynomial quantum proof of case (1) to verify with fewer than $\Omega(\sqrt{2^n} / poly(n))$ queries. It is nonetheless verifiable with one query given either such proof;
	\item If $\max_i \{i\}_n = O(2^n)$, then one query may verify a (polynomially long, classical) binary representation of $i_n$ as a proof of case (1);
	\item If $\max_i \{i\}_n \leq O(poly(n))$, then a trusted, polynomial, classical description of $\{i\}_n$ $MQST(n, i_n)$ is sufficient to perform MQST with $O(poly(n))$ queries. It is similarly possible to verify an untrusted such description as a proof of case (1).
\end{enumerate}
All of the above cases require superpolynomial quantum computation.
\end{theorem}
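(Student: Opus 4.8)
The idea is to assemble the statement from three essentially independent pieces --- a count of the valid indices, a single-query phase-kickback verification, and the indistinguishability bound of Lemma~\ref{indist} --- and then to specialize the parameters of $STATE\_DIAG$ to each regime. First I would fix the universal gate set $G$, take the length threshold to be any superpolynomial $f(n)$, and choose the precision $\epsilon(n)$ according to the regime: a constant (or $1/poly(n)$) for part~(1), $\Theta(n/2^n)$ for part~(2), and $\Theta(\log n/2^n)$ for part~(3). Lemma~\ref{manystates} --- equivalently Lemma~\ref{statehalt}, using $r\le (cf(n))^{cf(n)}$ for the number of states preparable by at most $f(n)$ gates --- then gives that $STATE\_DIAG(n,i)$ halts for a set $\{i\}_n$ of size $\exp(\epsilon(n)(2^n-1))-r = 2^{\Omega(\epsilon(n)2^n)}$, which is doubly exponential in the first regime and has $\max_i\{i\}_n$ of order $2^n$ resp.\ $poly(n)$ in the other two. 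Because $f(n)$ is superpolynomial and, by the absorbing trick of the preceding subsection, every polynomial-size classical witness/advice/input folds into the circuits enumerated by $STATE\_DIAG$, each marked state $\psi_i := \ket{STATE\_DIAG(n,i)}$ has fidelity below $\epsilon(n)$ with the first-$n$-qubit marginal of the output of \emph{every} polynomial-size $G$-circuit, including one conditioned on an arbitrary polynomial classical string $w$; in particular no polynomial quantum circuit $\epsilon$-prepares $\psi_i$, which is the source of the ``superpolynomial quantum computation'' clause.

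Next I would give the one-query verifications. With an honest quantum proof --- one copy of $\psi_i$ --- the verifier puts a control qubit in $\ket{+}$, applies a single controlled-$U_n$ to the proof register, and measures the control in the Hadamard basis: in case~(1), $U_n\ket{\psi_i}=-\ket{\psi_i}$ drives the control to $\ket{-}$ and the verifier accepts, while in case~(2), $U_n=\id$ leaves the control in $\ket{+}$ for \emph{every} proof state, giving soundness; this is polynomial time and one query. With a classical proof $w$ the verifier instead spends exponential time deterministically building a state from $w$ and runs the same controlled-$U_n$ test; the honest (exponentially long) description of $\psi_i$, or --- when $\max_i\{i\}_n=O(2^n)$ --- a $poly(n)$-bit binary encoding of the index $i$ from which the verifier recomputes $\psi_i$ by running $STATE\_DIAG(n,i)$, both make the built state equal $\psi_i$ and the test accept, and case~(2) still rejects every $w$. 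When $\max_i\{i\}_n=O(poly(n))$ the index is an $O(\log n)$-bit string, so handing it over as trusted advice and reconstructing $\psi_i$ in exponential time lets one \emph{solve} MQST with a single query (and verify the same string if untrusted). Completeness throughout needs only that $\psi_i$ exists, i.e.\ that $i\in\{i\}_n$.

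The remaining, genuinely quantitative piece is the negative half of part~(1): with a polynomial classical proof $w$, any quantum computation, and $k=o(\sqrt{2^n}/poly(n))$ queries one cannot verify case~(1) for a generic $i$. If the quantum computation is polynomial-size I would write the protocol as $U_n T_k^{(w)}\cdots U_n T_1^{(w)}\ket{\mathbf 0}$ with each $T_\ell^{(w)}$ a polynomial $G$-circuit conditioned on $w$; by the first piece each partial product $T_\ell^{(w)}\cdots T_1^{(w)}\ket{\mathbf 0}$ has first-$n$-qubit marginal at fidelity below $\epsilon(n)$ with $\psi_i$, so (using Corollary~\ref{tenoverext} to see that an interleaved oracle call does not change that fidelity) the hypothesis of Lemma~\ref{indist} holds with all $\psi_j=\psi_i$, and the lemma bounds the distinguishing trace distance by $\sqrt{2\sqrt{2k\epsilon(n)}+4k\epsilon(n)-2(2k\epsilon(n))^{3/2}}=o(1)$ once $k\epsilon(n)=o(1)$; this also yields the ``superpolynomial quantum computation'' clause for all three parts. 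For \emph{unbounded} quantum circuits one can no longer assume the intermediate states stay near-orthogonal to $\psi_i$, so here I would instead invoke the standard quantum query lower bound for detecting a marked direction in a $2^n$-dimensional space \cite{boyer_tight_1998,zalka_grovers_1999} together with a counting step: there are $2^{\Omega(2^n)}$ admissible marked states but only $2^{poly(n)}$ possible proofs $w$, so averaging over the $i\in\{i\}_n$ compatible with a fixed $w$ leaves the marked direction essentially uniform and $o(\sqrt{2^n})$ queries yield $o(1)$ advantage; hence a generic $i$ defeats every such protocol, while the verifications above show an exponential classical proof or a single-query check of a polynomial quantum proof does suffice.

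I expect the main obstacle to be the interface between $STATE\_DIAG$, whose hardness guarantee is phrased against plain $G$-circuits, and Lemma~\ref{indist}, whose hypothesis must be checked for adversarial circuits that \emph{interleave} oracle calls whose outputs are not themselves $G$-circuit states; the fix is to push the overlap-preservation of Corollary~\ref{tenoverext} through the induction in Lemma~\ref{indist}, but this must be done carefully so that the reduction to states enumerated by $STATE\_DIAG$ is legitimate. A secondary, bookkeeping obstacle is to choose $(f(n),\epsilon(n))$ so that the index-set size, the $\max_i\{i\}_n$ bound, the query cutoff $\sqrt{2^n}/poly(n)$, and the superpolynomial-quantum-computation clause all hold simultaneously in the relevant regime, and to keep the exponentially large classical resources formally separated from the polynomial quantum ones via the conditional-circuit reformulation $T\mapsto T_w$.
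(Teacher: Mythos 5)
Your core argument is the same as the paper's: write Arthur's post-classical-precomputation work as an alternating product $T_{k+1,x} U T_{k,x}\cdots U T_{1,x}$ with $k=O(poly(n))$, use the $STATE\_DIAG$ guarantee that every oracle-free partial product $T_{l,x}\cdots T_{1,x}\ket{\mathbf 0}$ (with the classical witness/advice folded into the conditioned circuits $T_{l,x}$) has fidelity below $\epsilon$ with the marked state, and feed this into Lemma~\ref{indist} to bound the trace distance between the $U=V$ and $U=\id$ runs; the cardinality of $\{i\}_n$ then sorts out the three regimes. Two points of divergence are worth noting. First, you invoke Corollary~\ref{tenoverext} to argue that interleaved oracle calls preserve the low fidelity, but this is not needed and is not what the paper does: the hypothesis of Lemma~\ref{indist} concerns only the $V$-free products $T_l\cdots T_1\ket{\mathbf 0}$, and the effect of the interleaved $V$'s is absorbed by the induction inside that lemma's proof, so the $STATE\_DIAG$ guarantee already suffices. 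Second, for the regime of unbounded quantum circuits with few queries you bring in the BBBV/Zalka query lower bound plus a counting argument over proofs; the paper's written proof does not contain this step (it fixes $k=O(poly(n))$ and polynomial $T$'s throughout and handles the regimes with a one-sentence cardinality remark), so your proposal is in this respect more complete, at the cost of reintroducing exactly the query-counting machinery the paper advertises bypassing. Finally, be aware of a parameter tension that you flag but that neither you nor the paper fully resolves: $|\{i\}_n|=\Omega(2^{2^n})$ forces $\epsilon=\Omega(1)$ in Lemma~\ref{manystates}, while a nonvacuous bound $3\sqrt[4]{k\epsilon}=o(1)$ from Lemma~\ref{indist} forces $k\epsilon=o(1)$; these cannot hold simultaneously for part~(1) as literally stated, so the bookkeeping obstacle you identify is genuine rather than cosmetic.
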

\begin{proof}[(Proof of Theorem \ref{mqst})]
By Lemma \ref{statehalt}, valid marked states $\ket{\psi}$ exist for all sufficiently large $n$.

To fully formalize the task, assume the Arthur starts with the state $\ket{0...0}$ on $r = O(poly(n))$ qubits, plus a classical auxiliary systems in a configuration that may not depend on whether $U$ is the identity, but may convey any other information, including what the marked stat would be. Arthur may apply gates from the set $G$, single-qubit measurements in the computational basis, the oracle to any $n$-qubit subsystem, and partial traces to subsets of qubits.

Let $X$ denote the set of states of Arthur's classical system, in state $x \in X$ after receiving a classical proof from Merlin and an arbitrary amount of classical pre-computation, but before any quantum operations. It is possible that $x$ classically describes what the marked state would be if $U$ is not the identity.

All steps in computation after classical precomputation but before final traces have (up to equivalence) the form
\[ T_{k+1, x} U T_{k, x} U ... U T_{1, x} , \]
where $T_{1,x} ... T_{k+1,x}$ are unitaries that may depend on $x$, and $k = O(poly(n))$. By the definition of $STATE\_DIAG$ and the parameters of the problem, the sequence $T_{l,x} ... T_{1,x}$ results in a state that has at most fidelity $\epsilon$ with $\ket{\psi}$. We apply Lemma \ref{indist} to $U T_{k,x} U ... U T_{1,x}$, and by approximating away higher orders in $\epsilon$, obtain
\begin{equation} \label{eq:nb}
\frac{1}{2} \|T_{k,x} ... T_{1,x} \ketbra{\mathbf{0}} T_{1,x}^\dagger ... T_{k,x}^\dagger - V T_{k,x} ... V T_{1,x} \ketbra{\mathbf{0}} T_{1,x}^\dagger V^\dagger ... T_{k,x}^\dagger V^\dagger \|_1	\leq 3 \sqrt[4]{k \epsilon}
\end{equation}
where $V$ is the non-identity case of $U$. The operational interpretation of the 1-norm bound then shows that with access to this state, Arthur can't correctly guess whether $U = V$ or $U = \id$ with probability higher than $\epsilon$. Convexity of the 1-norm implies that a mixture of $x$ achieves no higher value, and the related monotonicity under channels shows that neither does $T_{k+1}$ nor the differed traces.

For the particular cases, the cardiality $|\{i\}_n|$ determines the number of bits needed either to enumerate all possible marked states or to specify a particular marked state classically.
\end{proof}

\begin{rem}
What if Arthur is allowed arbitrary classical computation interspersed with elements of a polynomially-bounded quantum circuit? This case is left open. The proof strategy of Theorem \ref{mqst} no longer applies, as it is not possible to dilate arbitrary, intermediate classical processing within polynomially bounded quantum space and time. Arthur could measure intermediate states and apply superpolynomial classical processing to determine subsequent unitaries, and it is not entirely clear whether these measurements could be deferred to the end without a superpolynomial expansion in the quantum circuit size.
\end{rem}

\section{Computational Complexity} \label{sec:complexity}

\begin{definition}[Marked Quantum State Oracle (MQSO)] \label{def:mqsp}
Let $L$ be a given, binary language. Let the sequence $(i_n)_{n=1}^\infty$ be defined by choosing $i_n$ from a sequence of non-empty sets $(\{i\}_n)_{n=1}^\infty$ on which $STATE\_DIAG(n,i)$ halts, and let $\ket{\psi_n} = \ket{STATE\_DIAG(n, i_n)}$, and $U_n : \Hil_{2^n} \rightarrow \Hil_{2^n}$ be defined by
\[ U_n \ket{\psi_n} \otimes \ket{x}=
	(-1)^{L(x)} \ket{\psi_n} \otimes \ket{x}, \]
and $U_n \ket{\phi} \otimes \ket{x} = \ket{\phi} \otimes \ket{x}$ for all $n$-qubit $\ket{\phi} \perp \ket{\psi_n}$. Let $MQSO(L, \{i_n\})$ denote the constructed oracle family $\{U_n\}$. We restrict to parameter ranges for which the indexed marked state exists.
\end{definition}
\begin{cor}[Technical Version of Cor \ref{cor:complexity}] \label{mqsl}
Let the sequence $(i_n)_{n=1}^\infty$ be defined by choosing $i_n$ from a sequence of non-empty sets $(\{i\}_n)_{n=1}^\infty$ on which $STATE\_DIAG(n,i)$ halts. For all $n$, there exist valid $\{i\}_n$ of cardinality $|\{i\}_n| = \Omega(2^{2^n})$. Let $L$ be a language constructed by randomly choosing $\lfloor 2^n/2 \rfloor$ of $2^n$ possible $n$-length strings for each $n$, and let $\{i\}_n$ be a sequence of index sets such that $i_n \in \{i\}_n$ for all $n$. With access to the family of marked state oracles $MQSO(L, \{i_n\})$. Then
\begin{enumerate}
	\item $L \not \in QCMA^{(U_n)}/poly$. This holds even with the promise that $|\{i\}_n |= O(1)$ for all $n$. Even if we relax the complexity class to allow unbounded, classical, untrusted witness size and unbounded, classical pre-computation, polynomial quantum circuits (including oracle queries) and polynomial, classical advice fail to decide $L$ with bounded error. 
	\item $L \in QMA^{(U_n)}$.
	\item With the promise that $|\{i\}_n| = O(poly(n))$, $L \in BQP^{(U_n)}/qpoly$.
\end{enumerate}
Hence $\exists$ a quantum oracle family $(U_n)_{n=1}^\infty$ such that $BQP^{(U_n)}/qpoly \not \subset QCMA^{(U_n)}/poly$.
\end{cor}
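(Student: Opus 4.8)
The plan is to establish items~1--3 of the corollary and then combine items~1 and~3. Write $\ket{\psi_n}=\ket{STATE\_DIAG(n,i_n)}$ and observe that, on the subspace whose query register holds $\ket{x}$, the oracle $U_n$ equals the reflection $V$ of Lemma~\ref{tenover} about $\ket{\psi_n}^{\perp}$ when $L(x)=1$ and equals the identity when $L(x)=0$; equivalently $U_n=\id-2\Pi_W$ for $\Pi_W$ the projector onto $W=\mathrm{span}\{\ket{\psi_n}\otimes\ket{x}:x\in L\}$. Items~2 and~3 hold for every $L$ produced by the $MQSO$ construction, so only item~1 restricts the choice of $L$.

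\emph{Items 2 and 3.} For $L\in QMA^{(U_n)}$: on input $x$, Arthur takes Merlin's $n$-qubit witness $\ket{w}$ (honestly $\ket{\psi_n}$), adjoins a control qubit in $\ket{+}$ and the register $\ket{x}$, applies $U_n$ to $(\text{witness})\otimes\ket{x}$ controlled on the control qubit, and measures the control in the $\{\ket{+},\ket{-}\}$ basis, accepting iff he sees $\ket{-}$. Writing $\ket{w}=\alpha\ket{\psi_n}+\beta\ket{\psi_n^{\perp}}$, a one-line computation gives acceptance probability $|\alpha|^2$ if $L(x)=1$ and $0$ if $L(x)=0$ (in the latter case $U_n$ is literally the identity on the relevant subspace). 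This is perfect completeness with the honest witness and perfect soundness, so $L\in QMA^{(U_n)}$. Item~3 is the same circuit run with trusted quantum advice $\ket{\psi_n}^{\otimes m}$ for small $m$ (one copy in fact suffices and is returned intact): it decides $L(x)$ with certainty, so $L\in BQP^{(U_n)}/qpoly$ under the stated promise $|\{i\}_n|=O(poly(n))$.

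\emph{Item 1.} Fix any machine of the (generalized) $QCMA^{(U_n)}/poly$ form: polynomial classical advice $a_n$, a polynomial quantum circuit with $k=poly(n)$ oracle queries, but unbounded classical witness and pre-computation. As in Section~\ref{sec:mqst}, all of the classical data merely turns the circuit into a conditional circuit $T_{k+1,w}U_nT_{k,w}\cdots U_nT_{1,w}$ whose blocks use only gates from $G$; hence by construction of $STATE\_DIAG$ every no-oracle partial state $T_{l,w}\cdots T_{1,w}\ket{\mathbf 0}$ has each of its $n$-qubit marginals of fidelity $\le\epsilon=g(n)^{-1}$ with $\ket{\psi_n}$, and therefore $\mathrm{Tr}(\rho_l\Pi_W)\le\epsilon$, where $\rho_l$ is the marginal of that partial state on the query register together with the $n$ qubits the oracle acts on (bound the sum over $x\in L$ by the sum over all $x$, which collapses to the fidelity of the marked-register marginal with $\ket{\psi_n}$). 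Now apply Lemma~\ref{indist} in the marked-subspace form noted in the remark following it, with the single quantity $\mathrm{Tr}(\rho_l\Pi_W)$ in place of $\sum_jF(\rho_l,\ketbra{\psi_j})$ and $k$ still the polynomial number of queries --- so that the hypothesis concerns only no-oracle partial states and $\ket{\psi_n}$ need never be hard relative to $G\cup\{U_n\}$: this bounds the trace distance between the verifier's final state with oracle $U_n$ and with the identity oracle by $3\sqrt[4]{k\epsilon}=o(1)$ for, e.g., $g(n)=\lceil2^{n/2}\rceil$. Consequently, letting $q(x,w)$ be the verifier's acceptance probability on $(x,w,a_{|x|})$ when $U_n$ is replaced by the identity, the true acceptance probability is $q(x,w)\pm o(1)$ and $q$ does not depend on $L$; so the accepting set $S_n=\{x\in\{0,1\}^n:\exists w,\ q(x,w)>1/2\}$ is fixed by the machine and its advice alone, and a correct bounded-error decision of $L$ forces $L\cap\{0,1\}^n=S_n$ for all large $n$. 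Since there are at most $2^{poly(n)}$ pairs (circuit of size $\le p(n)$, advice of length $\le q(n)$), hence $\le2^{poly(n)}$ possible sets $S_n$, while $L\cap\{0,1\}^n$ is uniform over $\binom{2^n}{\lfloor2^n/2\rfloor}=2^{2^n-o(2^n)}$ sets, a union bound plus Borel--Cantelli gives: with probability $1$ over the random $L$, no such machine decides $L$ for all large $n$, i.e.\ $L\notin QCMA^{(U_n)}/poly$, using nothing about $|\{i\}_n|$ (it may be $O(1)$). Fixing such an $L$ with $|\{i\}_n|=O(1)$ and invoking items~2 and~3 yields $BQP^{(U_n)}/qpoly\not\subset QCMA^{(U_n)}/poly$ and $QMA^{(U_n)}\not\subset QCMA^{(U_n)}/poly$.

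\emph{Main obstacle.} I expect the delicate step to be the first part of item~1: the bookkeeping that dilates unbounded classical witness/advice/pre-computation into one polynomial conditional quantum circuit \emph{without} letting any of it depend on whether $U_n$ is active, together with passing correctly from the $(n+|x|)$-qubit oracle $U_n=\id-2\Pi_W$ to the marked-subspace version of Lemma~\ref{indist} (keeping $k$ equal to the polynomial query count rather than to $\dim W$). Once the oracle is shown $o(1)$-useless, the remaining counting/Borel--Cantelli argument is routine, as is checking that $\epsilon=g(n)^{-1}$ can be taken small enough that $k\epsilon\to0$ for every polynomial $k$ while Lemma~\ref{statehalt} still guarantees a valid marked state.
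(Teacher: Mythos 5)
Your overall route is essentially the paper's: dilate the unbounded classical witness/pre-computation into a conditional polynomial circuit $T_{k+1,w}U\cdots UT_{1,w}$, bound $\sum_{y\in L}F(\rho_l,\ketbra{\psi_n}\otimes\ketbra{y})$ by dropping to the sum over all $y$ so that it collapses to $F(\tilde\rho_l,\ketbra{\psi_n})\le\epsilon$, invoke Lemma~\ref{indist} to conclude the oracle is $o(1)$-useless, and finish with a counting argument against the random language; items 2 and 3 are handled by handing Arthur the marked state. Your phase-kickback verifier is a more explicit version of what the paper leaves implicit (note it assumes controlled access to $U_n$, which the paper also tacitly assumes, e.g.\ in the proof of Theorem~\ref{mqphys}), and your union-bound-plus-Borel--Cantelli step plays the role of the paper's Lemma~\ref{advicebound}; these are cosmetic differences.

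There is one concrete omission. You analyze only queries to $U_n$ itself, writing the circuit with a single oracle $U_n=\id-2\Pi_W$ and asserting that the hypothesis of Lemma~\ref{indist} ``concerns only no-oracle partial states.'' But in the paper's oracle model Arthur may query any member $U_{m_1},\dots,U_{m_k}$ of the family, and for small indices $m$ the marked state $\ket{\psi_m}$ \emph{is} preparable by circuits of length $\le k$, so the fidelity hypothesis $\sum_j F(\rho_l,\Pi_{W_{m_j}})\le\epsilon$ of Lemma~\ref{indist} simply fails for those queries and the lemma cannot be applied as stated. The paper patches this by emulating each such ``easy'' oracle with at most $3k$ gates (run the preparation circuit in reverse, reflect about $\ket{0\cdots0}$, run it forward), replacing those oracle calls by gates and substituting $k\to 4k$ before applying Lemma~\ref{indist} to the remaining, genuinely hard queries. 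Your argument needs this step (or an explicit restriction of the access model to $U_n$ alone) to go through; without it the claim that the true acceptance probability is $q(x,w)\pm o(1)$ is not justified for verifiers that exploit low-index oracles. The remaining delicate point you flag --- that none of the classical data may depend on whether the oracle is active --- is handled exactly as you propose and as in Theorem~\ref{mqst}.
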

\noindent A \& K showed as their primary result that $QMA^{(U_n)} \not \subset QCMA^{(U_n)}$ for a related quantum oracle as theorem 1.1. Though A \& K did not claim $BQP^{(U_n)}/qpoly \not \subset QCMA^{(U_n)}/poly$, this is not surprising given their arguments for $BQP^{(U_n)}/qpoly \not \subset BQP^{(U_n)}/poly$ in theorem 3.4. Again more surprising in our formulation is the lack of reliance on query complexity.
\begin{rem} \label{noquery}
With unbounded quantum circuit size, $L$ as defined in Theorem \ref{mqsl} is decidable with $MQSO(L, \{i_n\})$ and...
\begin{enumerate}
	\item An exponential, classical or polynomial, quantum witness and one query, or exponentially many queries, if $|\{i\}_n| = \Omega(2^{2^n})$.
	\item A polynomial, untrusted witness and one query if $\max_i \{i\}_n = O(2^n)$ for all $n$.
	\item Polynomial queries and a (trusted or untrusted) description of $\{i\}_n$ if $\max_i \{i\}_n = O(poly(n))$.
	\item One query if $\max_i \{i\}_n = 1$ and $\{i\}_n$ is given by a computable function for each $n$.
\end{enumerate}
\end{rem}

\begin{lemma} \label{advicebound}
Let $L$ be a binary language, for which membership of strings $x \in \{0,1\}^n$ in $L$ is determined arbitrarily (having no pre-determined dependence on the value of $x$). For given $n$ and $z$ of length $O(poly(n))$, there are at least $\Omega(2^n/poly(n))$ values of $x$ for which $L(x)$ cannot be determined from $x \otimes z$.
\end{lemma}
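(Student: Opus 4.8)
\emph{Proof proposal for Lemma \ref{advicebound}.}

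The plan is to prove the lemma by a counting / entropy argument: a string $z$ of only polynomial length cannot carry enough information about an arbitrarily chosen language $L$ to pin down $L(x)$ for more than a small fraction of the inputs $x\in\{0,1\}^n$. First I would fix the purported ``determination'' rule, that is, whatever procedure claims to output $L(x)$ from the pair $x\otimes z$, and note that for each fixed value of $z$ this rule induces one specific function $f_z:\{0,1\}^n\to\{0,1\}$, namely $x\mapsto$ (its guess). Since $|z|=O(\mathrm{poly}(n))$, there are at most $2^{O(\mathrm{poly}(n))}$ such functions as $z$ ranges over all admissible advice strings. The reading of ``membership of strings in $L$ is determined arbitrarily, having no pre-determined dependence on $x$'' that I would use is that $L$ restricted to $\{0,1\}^n$ is generic — concretely, a uniformly random subset; the argument is unchanged, up to a harmless $\Theta(n)$ correction in the entropy, if one instead uses a uniformly random subset of size $\lfloor 2^n/2\rfloor$ as in Corollary \ref{mqsl}.

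The core estimate is then the following. For any fixed $f_z$ and random $L$, the number of inputs $x$ on which $f_z$ disagrees with $L$ is distributed as $\mathrm{Binomial}(2^n,1/2)$, so the probability that $f_z$ agrees with $L$ on all but fewer than $t$ inputs is at most $t\binom{2^n}{t}2^{-2^n}$. Taking a union bound over the $2^{O(\mathrm{poly}(n))}$ functions $f_z$, the probability that some admissible $z$ leaves fewer than $t$ ``undetermined'' inputs is at most $2^{O(\mathrm{poly}(n))}\, t\binom{2^n}{t}2^{-2^n}$, which tends to $0$ whenever $t\log(2^n/t)+\mathrm{poly}(n)<2^n$; this holds for $t=\Theta(2^n/\mathrm{poly}(n))$ and in fact, via the entropy bound $\binom{2^n}{\alpha 2^n}\le 2^{H(\alpha)2^n}$, even for $t=\Theta(2^n)$. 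Consequently, for all but a vanishing fraction of choices of $L$ — and so, a fortiori, for the random $L$ fixed once and for all in Corollary \ref{mqsl} — every $z$ of length $O(\mathrm{poly}(n))$ leaves at least $\Omega(2^n/\mathrm{poly}(n))$ inputs $x$ on which the guess extracted from $x\otimes z$ disagrees with $L(x)$; these are precisely the $x$ whose membership cannot be determined from $x\otimes z$, which is the claim.

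The step I expect to be the main obstacle is making ``cannot be determined from $x\otimes z$'' fully precise in a way that matches the downstream use in Corollary \ref{mqsl} and Theorem \ref{mqst} — in particular allowing the determination procedure to be randomized, to err with probability up to $1/2-1/\mathrm{poly}(n)$, or to be given by an existential quantifier over an auxiliary classical witness together with the $O(\mathrm{poly}(n))$ oracle queries permitted there (which, by Lemma \ref{indist}, contribute only a $\mathrm{poly}(n)/\sqrt{2^n}$ distinguishing advantage and hence negligible information about $L$). Each such relaxation inflates the effective number of ``views'' a fixed $z$ can present, and this is exactly where the $1/\mathrm{poly}(n)$ slack in the statement is spent; since the bare counting bound above already yields $\Omega(2^n)$, there is ample room. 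A secondary point to verify is that folding classical witnesses, inputs, and advice into a single string $z$ (as in the reduction just before Theorem \ref{mqst}) keeps $|z|=O(\mathrm{poly}(n))$, so that the union bound remains $2^{O(\mathrm{poly}(n))}=2^{o(2^n)}$ and is dominated by the $2^{-(1-H(\alpha))2^n}$ binomial tail.
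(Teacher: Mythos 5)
Your proposal is correct, but it takes a genuinely different route from the paper. The paper's proof is a one-step deterministic pigeonhole: there are $2^{2^n}$ possible assignments of $L$ on $\{0,1\}^n$ but only $2^{O(poly(n))}$ possible advice strings $z$, so some single value of $z$ is consistent with a superexponentially large class of assignments, and a class of that size must leave $\Omega(2^n/poly(n))$ (in fact $2^n - O(poly(n))$) coordinates $x$ unfixed --- no determination rule is ever introduced, and ``cannot be determined'' is read information-theoretically as ``not constant over the class consistent with $z$.'' You instead fix a guessing function $f_z$ for each $z$, take $L$ uniformly random, bound the number of agreements by a binomial tail, and union-bound over all $z$; this is the probabilistic-method strengthening of the same counting idea. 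What your version buys: a statement holding for \emph{almost all} $L$ rather than for at least one, which matches more directly how the lemma is consumed in Corollary \ref{mqsl} (where $L$ is sampled randomly), and a quantitatively stronger $\Omega(2^n)$ count of undetermined inputs. What the paper's version buys: brevity, no need to formalize a determination procedure, and applicability to a worst-case (existential) $L$ without invoking a distribution. Your closing worry about randomized or erring determination procedures is reasonable but not needed for the lemma itself --- downstream, the lemma is only used to isolate inputs on which $z$ carries no information about $L(x)$, and the bounded-error and oracle-query issues are handled separately by the trace-distance argument of Lemma \ref{indist}. One small caution: as literally stated the lemma cannot hold for \emph{every} fixed $L$ (e.g.\ $L=\emptyset$ is trivially determined), so your explicit grounding in the random-$L$ reading, or the paper's implicit existential reading, is genuinely necessary rather than pedantry.
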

\begin{proof}
This Lemma follows from a counting argument mirroring that of Aaronson \& Kuperberg. There are $2^{2^n}$ possible assignments of $L(x)$, and $2^{O(poly(n))}$ possible values of $z$. Hence at least 1 value of $z$ distinguishes a class of at least $2^{2^n / O(poly(n))}$ possible assignments. Within this class, at least $\Omega(2^n / poly(n))$ values of $x$ must not be fixed.
\end{proof}
In Lemma \ref{advicebound}, we do not know that conditioned on $z$, the $\Omega(2^n/poly(n))$ may vary independently. It might be that the class implied by $z$ forces strong correlations between whether particular strings are in the language. These correlations will not matter too much for the complexity of determining whether a particular $x \in L$.

\begin{proof}[(Proof of Corollary \ref{mqsl} part 1)]
Let $(U_n)$ be a quantum oracle family as in Definition \ref{def:mqsp}. Let $x$ denote a particular input string.

Let $W$ be the set of Merlin's possible, classical responses for a given $n$, and $w$ be Merlin's particular response in a given case, which may depend on the input $x$ and on randomness as well as on the language and oracle. Let $z \in Z$ analogously denote the respective particular quantum advice given for input size $n$, and set of possible advice states. As in the proof of Theorem \ref{mqst}, after Merlin's response is given, Arthur's computation can be modeled as a polynomial quantum circuit of the form $T_{k+1, w, x, z} U_{m_k} T_{k, w, x, z} U_{m_{k-1}} ... U_{m_1} T_{1, w, x, z}$ for $k = O(poly(n))$, followed by partial traces, where $m_1, ..., m_{k}$ index calls to oracles in the given family.

In the starting configuration with Merlin's witness, Arthur's state has the form $x \otimes w \otimes z \otimes \ket{0...0}$. The witness and input $x \otimes w$ don't distinguish between $x \in L$ and $x \not \in L$ when Merlin is untrusted (the binary string representing $x$ contains no information about whether $x \in L$, so with no advice or oracle access, $L$ is undecidable even with unbounded time). The advice state $z$ may depend on whether $x$ is in the language, but it is the same value for all possible $x$. By Lemma \ref{advicebound}, there are at least $\Omega(2^n/poly(n))$ possible values of $x$ for which $z$ fails to distinguish whether $x \in L$. We may restrict our attention to these values of $x$.

Formally, the trace distance between the $x \in L$ and $x \not \in L$ is zero, and it upper bounds the completeness-soundness gap. We may easily subsume $x,z,w$ into
\[T_{k+1, w, x, z} U_{m_k} T_{k, w, x, z} U_{m_{k-1}} ... U_{m_1} T_{1, w, x, z} \]
by allowing any of Arthur's non-oracle unitaries to prepare $x \otimes w$ in the computational basis if desired, so it is not necessary to retain these explicitly as separate parts of the state. Hence we aim to show that
\[ T_{k+1, w, x, z} U_{m_k} T_{k, w, x, z} U_{m_{k-1}} ... U_{m_1} T_{1, w, x, z} \ket{0}^{\otimes r} \]
has small difference as measured by the trace distance between the case in which $x \in L$ and that in which $x \not \in L$ for the same value of $w$, regardless of what $w$ actually is. Hence if Merlin can send a classical $w$ that convinces Arthur that $x \in L$, the same $w$ will probably trick Arthur if in truth $x \not \in L$.

For any $l \in 1...k$, let $\rho_l$ be the marginal on the 1st $2 n$ qubits of $ T_{l, w, x, z} ... T_{1, w, x, z} \ket{0}^{\otimes r}$, $\eta_{l,i}$ be the $i$th eigenvector of $\rho_l$, and $\lambda_i$ the corresponding eigenvalue. Let $\ket{\psi_m}$ be the marked state for any $m$. Then
\[ \sum_{y \in L} \sum_i F(\rho_l, \ketbra{\psi_m} \otimes \ketbra{y})
	= \sum_{y \in L} \sum_i \lambda_i |\braket{\psi_m, y | \eta_{l,i}}|^2 
	= \sum_{y \in L} \sum_i \lambda_i \braket{\eta_{l,i} | \psi_m, y}  \braket{\psi_m, y | \eta_{l,i}} . \]
Let \[\eta_{i,l} = \sum_j \alpha_j \ket{\sigma_{l,i,j}} \otimes \ket{\gamma_{l,i,j}} \]
in a Schmidt decomposition. We then have from the above sum,
\[ ... = \sum_i \sum_{y \in L} \lambda_i \sum_{j,a} \alpha_{l,i,j}^* \alpha_{l,i,a}
	\braket{\sigma_{l,i,j} | \psi_m} \braket{\psi_m | \sigma_{l,i,a}}
	\braket{\gamma_{l,i,j} | y } \braket{ y | \gamma_{l,i,a} } . \]
Comparing $\sum_{y \in L} \ketbra{y}$ to the identity for each $l,i,j,a$,
\[ \delta_{j,a} = \sum_{y \in \{0,1\}} \braket{\gamma_{l,i,j} | y } \braket{ y | \gamma_{l,i,a} }
	= \sum_{y \in L} \braket{\gamma_{l,i,j} | y } \braket{ y | \gamma_{l,i,a} } + \sum_{y \not \in L} \braket{\gamma_{l,i,j} | y } \braket{ y | \gamma_{l,i,a} }. \]
We note that
\[ \sum_i \sum_{y \not \in L} \lambda_i \sum_{j,a} \alpha_{l,i,j}^* \alpha_{l,i,a}
	\braket{\sigma_{l,i,j} | \psi_m} \braket{\psi_m | \sigma_{l,i,a}}
	\braket{\gamma_{l,i,j} | y } \braket{ y | \gamma_{l,i,a} }
	= \sum_{y \not \in L} \sum_i F(\rho_l, \ketbra{\psi_m} \otimes \ketbra{y}) \geq 0, \]	
so by replacing $\sum_{y \in L} \ketbra{y}$ by the identity,
\[ ... \leq \sum_i \sum_{j} \lambda_i |\alpha_{l,i,j}|^2
	\braket{\sigma_{l,i,j} | \psi_m} \braket{\psi_m | \sigma_{l,i,j}} = \sum_i \sum_j \lambda_i |\alpha_{l,i,j}|^2 |\braket{\sigma_{l,i,j} | \psi_m}|^2 . \]
Let $\tilde{\rho}_{l,i}$ be the 1st $n$-qubit marginal of $\eta_{l,i}$, which has eigenvectors $\{\sigma_{l,i,j}\}$ with associated eigenvalues $\{\alpha_{l,i,j}\}$. We rewrite
\[ ... = \sum_i \lambda_i F(\tilde{\rho}_{l,i}, \ketbra{\psi_m}) = F(\tilde{\rho}_l, \ketbra{\psi_m})) \leq \epsilon(n), \]
where $\tilde{\rho}_l$ is the marginal of $ T_{l, w, x, z} ... T_{1, w, x, z} \ket{0}^{\otimes r}$ on the 1st $n$ qubits, and $\epsilon(n) < 1/2^{\alpha n}$ for arbitrary $\alpha \in (0,1)$.

We may now apply Lemma \ref{indist} with the unitary sequence $T_{k, w, x, z}, ..., T_{1, w, x, z}$ and $U_{l_1}, ..., U_{l_k}$, replacing $n$ in the Lemma by $2n$, and assuming that $U_{m_1}, ..., U_{m_k}$ all mark states that would be hard to prepare in $k$ steps with fidelity greater than $\epsilon(n)$. The rest of the proof would then follow as does the end of Theorem \ref{mqst} after equation \eqref{eq:nb}, via the operational interpretation, convexity, and monotonicty of the 1-norm.

The one problem is that Arthur may apply $U_{l}$ for $l$ sufficiently small that the marked state for this oracle is approximable with circuits of length $k$. We then note that by running the preparation circuit in reverse, running a simple circuit that marks the state $\ket{0}^{\otimes l}$, and then running the preparation circuit again, we can emulate such an oracle in at most $3 k$ gates with no oracle calls. Replacing these oracle calls by said emulation, we replace all oracle calls with marked states aproximable with larger than $\epsilon(n)$ fidelity using $4 k$ gates by up to $3 k$ gates each. Substituting $k \rightarrow 4 k$, we may assume that no states are prepared that resemble marked states for oracle calls, including $\ket{\psi_n}$, with precision larger than $\epsilon(n)$, and we apply Lemma \ref{indist}.

\end{proof}

\begin{proof}[(Proof of Corollary \ref{mqsl} part, Remark \ref{noquery})]
The 2nd part of the Theorem follow from allowing Merlin or the advice to send the marked state.

The Remark follows from Merlin specifying the index of the marked state, or from Arthur having access to a known set of possible marked states, as well as the parameters passed to $STATE\_DIAG$. Arthur may then run the $STATE\_DIAG$ program to generate the marked state, and verify it with the oracle.
\end{proof}

\begin{prop} \label{simplybqpnp}
Let $(x_n)_{n=1}^\infty$ be a family of arbitrarily chosen binary strings, $L$ a binary language, and $(\orac_n)_{n=1}^\infty$ be a family of standard oracles on states of $2n + 1$ qubits each defined for an $n$-length computational basis string state $\ket{y}$ and qubit state $\ket{\phi}$ by
\begin{enumerate}
	\item $\orac_n \ket{x_n} \otimes \ket{y} \otimes \ket{\phi} = \ket{x_n} \otimes \ket{y} \otimes X \ket{\phi}$ if $y \in L$.
	\item $\orac_n \ket{z} \otimes \ket{y} \otimes \ket{\phi} = \ket{x_n} \otimes \ket{y} \otimes  \ket{\phi}$ for any $n$-bit string $z \neq x_n$ or if $z = x_n$ but $y \not \in L$.
\end{enumerate}
Then $NP^{(\orac_n)} \not \subset BQP^{(\orac_n)}$, and $P^{(\orac_n)}/poly \not \subset BQP^{(\orac_n)}$. Let $(U_n)_{n=1}^\infty$ be a family of quantum oracles defined by
\[ U_n := (H_n \otimes \id_{n+1}) \orac_n (H_n \otimes \id_{n+1}) , \]
where $\id_{n+1}$ is the $(n+1)$-qubit identity, and $H_n$ the $n$-qubit Hadamard transform. Then $QCMA^{(U_n)} \not \subset BQP^{(U_n)}$.
\end{prop}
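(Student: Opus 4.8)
The plan is to exhibit a single language $L$ and a single sequence $(x_n)$ that make all of the claimed separations hold at once: the ``upper bounds'' ($L\in NP^{(\orac_n)}$, $L\in P^{(\orac_n)}/poly$, $L\in QCMA^{(U_n)}$) will hold for \emph{every} $L$ and $(x_n)$, while the matching lower bound ($L\notin BQP^{(\orac_n)}$ and $L\notin BQP^{(U_n)}$) will hold for almost every random choice. Take $L$ a uniformly random language (each string placed in $L$ by an independent fair coin; the exactly-half-of-each-length variant used in Corollary~\ref{mqsl} behaves identically) and each $x_n$ uniform in $\{0,1\}^n$, reading case~(2) of the oracle definition as leaving the key register unchanged; the decision problem is ``on input $y$, is $y\in L$?''. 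For $\orac_n$ the key register only has to hold the classical string $x_n$, so $x_n$ serves both as an $NP$ witness and as $P/poly$ advice: query $\orac_n$ on $\ket{x_n}\otimes\ket{y}\otimes\ket{0}$ and output the last qubit, which becomes $\ket{L(y)}$; completeness holds with $x_n$, and soundness is perfect because $\orac_n$ leaves the target untouched for \emph{every} key-register value when $y\notin L$. For $U_n=(H_n\otimes\id_{n+1})\orac_n(H_n\otimes\id_{n+1})$, a quantum verifier given the \emph{classical} string $x_n$ prepares $H_n\ket{x_n}$ in the key register and in one query computes $U_n\big(H_n\ket{x_n}\otimes\ket{y}\otimes\ket{0}\big)=H_n\ket{x_n}\otimes\ket{y}\otimes\ket{L(y)}$; soundness is again perfect since $U_n$ acts as the identity on every state whose query register holds $\ket{y}$ when $y\notin L$. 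Hence $L\in NP^{(\orac_n)}\cap\big(P^{(\orac_n)}/poly\big)$ and $L\in QCMA^{(U_n)}$, with perfect completeness and soundness throughout.

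\textbf{The $BQP$ lower bound.} A polynomial-time $BQP$ machine without advice should be unable to decide $L$: the oracle reveals nothing about $x_n$ unless the key register already carries $x_n$ (and even then only the single bit $L(y)$ for the queried $y$), so finding $x_n$ is an unstructured search requiring $\Omega(2^{n/2})$ queries. To make this rigorous I would run the Bennett--Bernstein--Brassard--Vazirani hybrid argument, in the same style as Lemma~\ref{indist} and \cite{aaronson_quantum_2006}. The operator $\orac_n-\id$ is supported on the rank-$\le 2^{n+1}$ subspace whose key register equals $\ket{x_n}$; conjugating by $H_n$ moves this, for $U_n$, to the subspace whose key register equals $H_n\ket{x_n}$. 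The one observation that makes the Hadamard-basis case free is that, because $H_n$ is unitary, $\expec_{x_n}\!\big[\ketbra{x_n}\big]=\expec_{x_n}\!\big[H_n\,\ketbra{x_n}\,H_n\big]=2^{-n}\id$, so in either case a $T$-query algorithm run with the relevant oracle replaced by the identity places total expected weight at most $T/2^{n}$ on the active subspace; the hybrid inequality together with Cauchy--Schwarz then gives that replacing $\orac_n$ (resp.\ $U_n$) by the identity perturbs every acceptance probability by at most $2T/2^{n/2}$, uniformly in $L$. For $T=\mathrm{poly}(n)$ this is negligible, so the machine's output is essentially independent of $L$, while on a random $L$ any fixed output decides a given $n$-bit instance correctly with probability at most $\tfrac12+o(1)$.

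\textbf{Fixing the oracle.} To diagonalize, fix a $BQP$ machine $M$ with time bound $p_M$ and a fast-growing sequence $n_1<n_2<\cdots$ with $n_{k+1}>p_M(n_k)$, so that $M$ on input $0^{n_k}$ cannot touch the oracle indexed $n_{k+1}$. Because $L$ is random, $L$ restricted to $\{0,1\}^{n_k}$ is independent of every oracle in the family except the one indexed $n_k$; conditioning on everything else and applying the hybrid bound to that single oracle shows that, for all but an $\epsilon_k=O(\mathrm{poly}(n_k)/2^{n_k/2})$ fraction of $x_{n_k}$, $M$'s acceptance probability on $0^{n_k}$ lies within $\tfrac1{12}$ of a value that does not depend on $x_{n_k}$ or on $L$ at that length, so $M$ decides $0^{n_k}$ correctly with probability at most $\tfrac12+\epsilon_k$. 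An iterated-conditioning (martingale) argument over $k$, legitimate because the block structure makes the relevant correctness events nearly independent, yields $\Pr[M\text{ decides }L]\le\prod_k(\tfrac12+\epsilon_k)=0$. A union bound over the countably many $BQP$ machines, carried out for both $(\orac_n)$ and $(U_n)$, then fixes one pair $(L,(x_n))$ with $L\notin BQP^{(\orac_n)}$ and $L\notin BQP^{(U_n)}$; together with the upper bounds this gives $NP^{(\orac_n)}\not\subset BQP^{(\orac_n)}$, $P^{(\orac_n)}/poly\not\subset BQP^{(\orac_n)}$, and $QCMA^{(U_n)}\not\subset BQP^{(U_n)}$.

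\textbf{Main obstacle.} The conceptual crux is the one-line identity $\expec_{x_n}[H_n\,\ketbra{x_n}\,H_n]=2^{-n}\id$, which lets the Aaronson--Kuperberg-style hybrid argument transfer verbatim from the standard (computational-basis) oracle $\orac_n$ to its Hadamard-basis rotation $U_n$ --- the ``Hadamard-basis binary string'' marked state is still invisible to a machine that does not know $x_n$. The laborious part is bookkeeping in the diagonalization: isolating the \emph{single} oracle in the family that carries information about $L(y)$ for a given length, conditioning so that the per-length failure probabilities multiply rather than merely staying bounded, and phrasing the statement so that $BQP$ genuinely receives no advice --- with advice $x_n$ it would decide $L$, so the separation depends essentially on the absence of advice.
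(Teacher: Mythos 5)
Your proposal is correct and follows essentially the same route as the paper: trivial upper bounds from using $x_n$ as witness/advice (conjugated by $H_n$ for the $U_n$ case), plus the unstructured-search query lower bound against $BQP$, which the paper disposes of in one line by citing the optimality of Grover search. The extra material you supply --- the explicit BBBV hybrid argument, the observation that $\expec_{x_n}[H_n\ketbra{x_n}H_n]=2^{-n}\id$ transfers the bound to the Hadamard-rotated oracle, and the diagonalization over machines and input lengths --- is detail the paper omits but does not constitute a different method.
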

\begin{proof}
For each $n$ and input string $y$, $U_n$ is essentially a Grover oracle. The optimality of Grover search \cite{zalka_grovers_1999} implies that with access to $U_n$, it takes at least $O(2^{n/2})$ oracle queries to find the marked state given a fixed $y \in L$ or to determine whether $x \otimes y$ is a marked string for arbitrary $x$ and $y$. Given a copy of the marked string, trusted or not, a classical algorithm can use the standard oracle to verify an untrusted copy of $x_n$ as witness that $y \in L$.

The same bound applies to $BQP^{(\tilde{U}_n)}$. In this case, the oracle is no longer callable as a standard oracle.
\end{proof}

\section{Physical Realizability} \label{sec:easy}
As in \cite[p27]{aaronson_complexity_2016} one may construct for any $r$ qubits a bipartite quantum system of $2 r$ qubits initially prepared in the state $\ket{0}^{\otimes r}$. One then applies a random circuit of polynomial size, and measures the first $n$ qubits in the computational basis, yielding a state of the form $c \otimes m \otimes \ket{\psi_{c,m}}$, where $m$ is a classical variable storing the measurement result, and $c$ is a classical variable storing the random circuit applied. As noted in \cite{aaronson_complexity_2016}, $\ket{\psi_{c,m}}$ is thought to have complexity $2^{\Omega(n)}$.

\begin{proof}[(Proof of Theorem \ref{mqphys})]
Let $L$ be an arbitrary, unary language. Let $N = 2^n$, and $\kappa \in (0,1)$. We construct a classical-quantum channel $ \Phi_{L, \kappa, (c_1,m_1), ..., (c_n,m_n)} : \Hil_{N}^{\otimes n} \otimes \Hil_2 \rightarrow \Hil_2$
formally a completely positive, trace-preserving, linear map that extends via linearity to densities. When clear from context, we denote $\Phi := \Phi_{L, \kappa, (c_1,m_1), ..., (c_n,m_n)}$. $\Phi$ first performs a swap test on the 1st $n$ qubits of its input, which yields a random bit if $\ket{\phi} \perp \ket{\psi_{c,m}}$, and yields the value 1 if $\ket{\phi} = \ket{\psi_{c,m}}$. If at least $\kappa n$ of the swap tests yield the value 1 and $n \in L$, then $\Phi$ applies a Pauli $X$ gate to its final input qubit. In either case, $\Phi$'s output is the quantum state of the final input register. For notational convenience, let
\[ \ket{\psi} := \ket{\psi_{c_1,m_1}} \otimes ... \otimes \ket{\psi_{c_n, m_n}} \]
when it is clear from context or not important what $(c_1, m_1), ..., (c_n, m_n)$ should be.

$\Phi$ acts analogously to the quantum oracle. We see immediately that if $n \in L$ and is passed to $\Phi$, then $\Phi$ acts as a NOT gate on the final qubit. If a string of orthogonal states is passed as input or if $n \not \in L$, then $\Phi$ acts as identity. It is clear that with a copy of $\ket{\psi}$, one query to $\Phi$ is sufficient to decide $L$.

To recover the lower bound on quantum circuit complexity without a provided quantum witness or advice state, we note $\Phi$ to be less powerful than access to the quantum oracle $U$ that is the identity for a given $n$ if and only if $n$ is in the language, and when $U$ is not the identity, it inverts the phase $\ket{\psi}$ and acts as the identity on orthogonal states. Given such a $U$ and an $n^2 + 1$-qubit state $\ket{\sigma}$, we approximate $\Phi$ by the following procedure:
\begin{enumerate}
	\item Construct the extended state $\ket{\sigma} \otimes \ket{0}^{\otimes n^2} \otimes \ket{+}$, where $\ket{+} = (\ket{0} + \ket{1})/\sqrt{2}$. 
	\item Let $SWAP$ denote the operation that swaps qubits at indices $0...n^2$ qubits with qubits at $n^2+2...2n^2 + 2$. Apply SWAP conditionally (equivalent to $n^2$ conditional swap gates) on the last qubit to obtain the state
	\[ \frac{1}{\sqrt{2}} (\ket{\sigma} \otimes \ket{0}^{\otimes n^2} \otimes \ket{0} + SWAP(\ket{\sigma} \otimes \ket{0}^{\otimes n^2} \otimes \ket{1})) . \]
	\item Apply $U$ to the 1st $n^2$ qubits. If $\ket{\sigma} = \ket{\psi} \otimes \ket{\phi}$ for some qubit state $\ket{\phi}$, then we obtain the state
	\[ \frac{1}{\sqrt{2}} \big ( (-1)^{L(n)} \ket{\psi} \otimes \ket{\phi} \otimes \ket{0}^{\otimes n^2} \otimes \ket{0} 
		+ ((-1)^{L(n)} \braket{\psi | 0...0} \ket{\psi} + \braket{\eta | 0...0} \ket{\eta})  \otimes \ket{\phi} \otimes \ket{\psi} \otimes \ket{1} \big ) . \]
		for some $\ket{\eta} \perp \ket{\psi}$. If $\ket{\sigma} \perp \ket{\psi}$, then we obtain
		\[ \frac{1}{\sqrt{2}} \big ( \ket{\sigma} \otimes \ket{0}^{\otimes n^2} \otimes \ket{0} 
		+ ((-1)^{L(n)} \braket{\psi | 0...0} \ket{\psi} + \braket{\eta | 0...0} \ket{\eta}) \otimes \ket{\psi} \otimes \ket{1} \big ) . \]
		That $U$ marks the $\ket{0...0}$ state is somewhat unintentional. We see from Lemma \ref{manystates} that with probability at least $1 - O(1/2^{n/2})$, $\braket{\psi | 0...0} \leq 2^{-n/2}$. Hence with this small error, we may assume that $\braket{\psi | 0...0} = 0$, and $\ket{\eta} = \ket{0...0}$.
		
		The form of the state when $\ket{\sigma}$ is neither equal nor orthogonal to $\ket{\psi}$ follows from linearity.
	\item Again apply SWAP conditionally (equivalent to $n^2$ conditional swap gates) on the last qubit. Trace out the qubits at indices $n^2+2...2n^2 + 2$. Apply the Hadamard gate to the last qubit, obtaining when $\ket{\sigma} = \ket{\psi}$ the state
	$\ket{\psi} \otimes \ket{\phi} \otimes \ket{1 + L(n)}$, where the in-ket addition is modulo 2, and when $\ket{\sigma} \perp \ket{\psi}$, $\ket{\sigma} \otimes \ket{0}$.
	\item Apply a controlled not gate, with the last qubit as the control and index $n^2 + 1$ as target.
	\item Trace out all but the qubit at index $n^2 + 1$.
\end{enumerate}
The reason to use $n$ individual marked states rather than just 1 is so that when $\kappa > 1/2$, the probability of $\Phi$ acting non-trivially for an input having low overlap with the marked state decays exponentially with $n$. One could input an entangled state, but this will not result in higher probability of success. The parameter $\kappa$ allows the swap test to be approximate, which is important for physical realizability.

There are few notable differences from the proof of Theorem \ref{mqsl}. First, we must eliminate trusted, classical advice, which can simply tell the answer for a unary language. Hence we lose the separations with polynomial classical advice. Second, we no longer use the program $STATE\_DIAG$, but rely on the assumption that since $\ket{\psi_{c,m}}$ is probably exponentially hard to prepare, $n$ independent states of this form are at least as hard. The untrusted classical witness and input string are insufficient to distinguish whether $n \in L$, which is determined arbitrarily. The rest of the argument then follows as would Theorems \ref{mqst} and \ref{mqsl}.
\end{proof}
To emphasize the role of state complexity, the A \& K counting argument would not immediately yield Theorem \ref{mqphys}, because the polynomially encodable pair of classical variables $c \otimes m$ fully determine the marked state.

\begin{rem}
Formally, we must make the classical configurations $(c_1, m_1), ..., (c_n, m_n)$ accessible at least to the prover. This is not a problem when the channel $\Phi$ is hypothetically usable an arbitrary number of times by the prover, which is fully consistent with the theoretical separation, but physically may require an exponentially hard step to regenerate the (spent) state.

One alternative is to extend the language to $\tilde{L}$, which includes strings of the form $\ket{0...0}^{n- \log n} \otimes \text{SEP}  \otimes c \otimes \text{SEP} \otimes m$, where SEP is a special character used to separate parts of the input. Another is to allow the verifier unlimited access to the oracle. We may construct for each channel oracle a corresponding standard oracle that reveals the needed classical bits and is repeatable, and assume that this oracle is accessible whenever $\Phi$ is once-accessible.
\end{rem}

The constructions here show parallels with those of \cite{harlow_quantum_2013}. In their formulation, a black hole would be modeled as a polynomial-length random circuit. After matter is input to form the black hole, effectively random bits are re-emitted as Hawking radiation. An observer who tries to reconstruct the internal state of the black hole with knowledge of the physical process (and hence effective random circuit), input matter, and measured state of Hawking radiation may face a situation analogous to that of Theorem \ref{mqphys}.

\subsection{Experimental Realizability in Distributed Quantum Computing}
The fundamental assumption of the physically realizable channel oracle is that generating a complex marked state is possible with polynomial, random circuits and qubit measurements. Random circuit sampling is experimentally accessible \cite{arute_quantum_2019}. A quantum computer in the role of an oracle/server generates the marked state and associated classical string. After state generation, the server may publicly broadcast the classical information to a client in the role of Arthur, and a ``supercomputer" in the role of Merlin. The server further uses private, local randomness to determine whether the unary string of length $n$ is in the language. When Arthur receives $n$, the problem has been constructed. It is not necessary to solve the problem to demonstrate its physical existence.

If we do wish to solve the problem, Merlin should prepare another copy of the marked quantum state. Near-term quantum circuits lack error correction and are depth-limited, preventing a simulatable Merlin from running a purely quantum brute force search. Merlin may instead run the quantum circuit coded by $c$ a potentially exponential number of times as separate quantum computations, stopping when the measured qubits happen to match the given value of $m$. Alternatively, Merlin may attempt to trick Arthur by creating a different quantum state. One-way quantum communication from Merlin to Arthur transmits the state.

Arthur may then send Merlin's quantum state to the oracle, again using one-way quantum communication. We note that:
\begin{rem}
Though the oracle-like channel $\Phi$ is defined to perform a conditioned quantum operation in analogy with the quantum oracle, it might be easier in practice for $\Phi$ to return a classical bit, requiring only classical communication from the oracle to Arthur. This is simple to implement by replacing the final input qubit by the local state $\ket{0}$, then measuring what would be the output qubit and sending the classical result. It is not hard to see that equivalent separations in complexity result.
\end{rem}
\noindent Arthur's decision is essentially equivalent to returning the oracle's reply.

The protocol is sketched in Figure \ref{fig:prot}. The purpose of the parameter $\kappa$ is to allow for noise in the oracle and state transmission, swap tests, and computation.

\begin{figure}[htb!] \scriptsize \centering
	\includegraphics[width=0.5\textwidth]{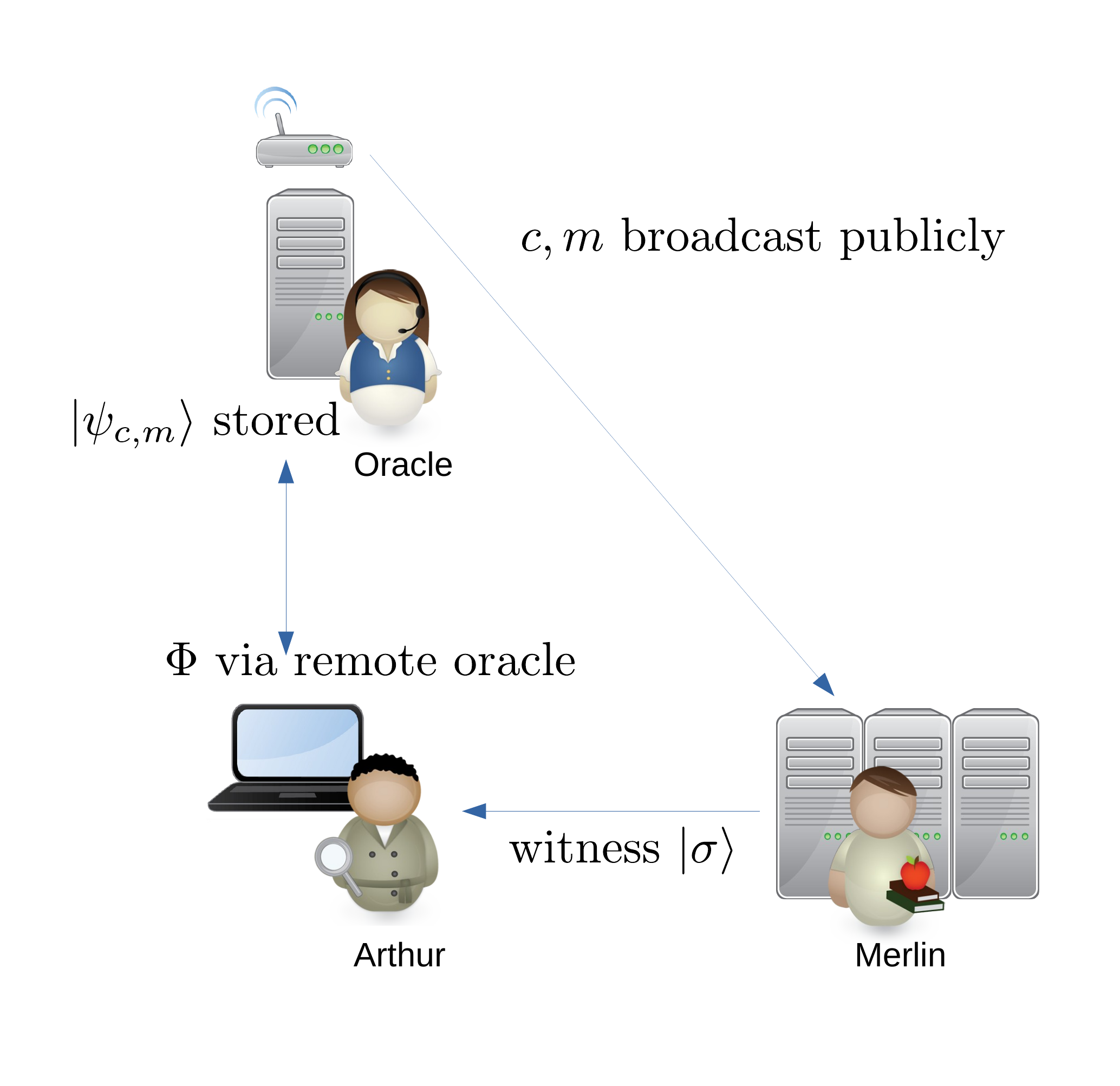}
	\caption{A diagram of the protocol to experimentally demonstrate quantum advantage of witness states. \label{fig:prot}}
\end{figure}

It is important for the interpretation of this problem that $\Phi$ be an effective black box, as we propose to enforce by physical separation. Communication between quantum computers is an emerging technology, and short-range experiments show promise to transfer single qubits between small quantum computers \cite{magnard_microwave_2020} faster than the decoherence times for well-optimized processing qubits \cite{jurcevic_demonstration_2020}. Importantly, our protocol only requires one-time, one-way quantum communication from Merlin to Arthur, and then from Arthur to the oracle. We thereby avoid the overhead of waiting for links to reset.

To solve the protocol with an honest Merlin or to check that Arthur cannot solve the protocol without quantum assistance requires exponential computation time, made somewhat more tractable by splitting into many runs of polynomial length quantum computations. Though not ideal, this cost is conventionally acceptable. Hard classical verification is standard in random circuit sampling demonstrations \cite{arute_quantum_2019, bouland_complexity_2019, pednault_leveraging_2019}. Often these experiments aim for an intermediate regime, in which classical supercomputers may repeat a problem using orders of magnitude more time and space than would a quantum computer.

The quantum witness advantage demonstration creates a scheme for proof of work, allowing one to verify that an untrusted party has access not only to a quantum computer, but to a large number of cycles (assuming the aforementioned separation of random vs. deterministic state complexity).

\newcommand{\NN}{\mathbb{N}}
\newcommand{\RR}{\mathbb{R}}
\newcommand{\EE}{\mathbb{E}}
\newcommand{\sep}{\square}

\section{Quantum Sampling Advantage with Oracles} \label{sec:tsir}

\begin{prob}[Rotated Linear Cross-Entropy Heavy Output Generation (RXHOG)] \label{rxhog}
Given a random unitary $U \in U(n)$ distributed by probability measure $\mu_n : U(n) \rightarrow [0,1]$, a unitary $u_n \in U(n)$ for each $n \in \NN$, and $f : \NN \rightarrow \RR^+$, output $k(n)$ distinct samples $(z_j \in \{0,1\}^n)_{j=1}^k$ such that $\EE_{U,j,r}[|\braket{z_{j, U}|u_n U | 0^n}|^2] \geq f(n)/2^n$, where $r$ denotes any randomness generated by or available to the algorithm.

XHOG is RXHOG with $u_n = \id$ for all $n$.
\end{prob}

\begin{theorem}[Classical Hardness of RXHOG] \label{rxhogclass}
Let a classical algorithm attempt to solve RXHOG with access to:
\begin{itemize}
	\item Time, space, and randomness bounded by any computable, finite function of input size.
	\item $P(n) = O(poly(n))$ calls to the oracle $O_\psi$, where $\ket{\psi} = U \ket{0^n}$ for each given $U$ as distributed according to the Haar measure on $n$-qubit unitaries.
	\item A complete table of coefficients of $\psi_Z$ in the computational basis to any precision allowable within space and time constraints.
\end{itemize}
For $u_n = H^{\otimes n}$ and such a classical algorithm,
\[\EE_{U,j}[|\braket{z_j|u_n U | 0^n}|^2] \leq 1/2^n + O(poly(n)/2^{2n}) . \]
\end{theorem}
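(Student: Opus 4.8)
The plan is to reduce the classical algorithm's performance on RXHOG with $u_n = H^{\otimes n}$ to a statement about how much a classical algorithm can learn about the Hadamard-basis amplitudes of a Haar-random state $\ket{\psi}$ from polynomially many calls to the state-preparation oracle $O_\psi$ together with the computational-basis coefficient table. The key observation is that $O_\psi$ is a \emph{standard} state-preparation oracle: it reveals information only in the computational basis (bits and phases of $\ket{\psi}$ as indexed by computational-basis strings), whereas the target quantity $|\braket{z_j | H^{\otimes n} U | 0^n}|^2 = |\braket{z_j | H^{\otimes n} | \psi}|^2$ is an overlap in the complementary (Hadamard) basis. This is exactly the obstruction exploited in Proposition \ref{npbqpintro}/\ref{simplybqpnp}: classical access in one basis says essentially nothing about amplitudes in a mutually unbiased basis, up to the trivial average value $1/2^n$.

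First I would fix the output strings $z_1,\dots,z_{k(n)}$ produced by the algorithm and condition on all of the algorithm's oracle responses and the coefficient table; since the algorithm is deterministic given its randomness $r$, and we are taking an expectation over $U$ (equivalently over $\ket{\psi}$ Haar-random) as well as over $j$, it suffices to bound $\EE_{U}\big[|\braket{z | H^{\otimes n}|\psi}|^2 \mid \text{algorithm's view}\big]$ for each fixed $z$ and then average. Second, I would argue that conditioning on $P(n)=O(\mathrm{poly}(n))$ calls to $O_\psi$ restricts $\ket{\psi}$ only to a subset of the sphere of measure $1 - 2^{-\Omega(2^n)}$-close to the full Haar measure in the relevant statistic — here Lemma \ref{manystates} (or equivalently equation \eqref{eq:fidb} on the distribution of fidelities) does the work: for a fixed pure state $\ket{\chi_z} := H^{\otimes n}{}^\dagger \ket{z}$, a Haar-random $\ket{\psi}$ has $\EE_U[|\braket{\chi_z|\psi}|^2] = 1/2^n$ exactly, and the conditioning on a polynomial amount of computational-basis information perturbs this expectation by at most $O(\mathrm{poly}(n)/2^{2n})$, because each bit of information learned can shift the conditional expectation of a quantity bounded in $[0,1]$ with fixed mean $2^{-n}$ by only $O(2^{-n})$ on a set carrying all but exponentially small probability mass, and the variance of $|\braket{\chi_z|\psi}|^2$ is $\Theta(2^{-2n})$. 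Third, I would sum/average over the $k(n) = O(\mathrm{poly}(n))$ choices of $z_j$ and over $j$ uniformly, and collect the error terms, obtaining $\EE_{U,j}[|\braket{z_j|H^{\otimes n}U|0^n}|^2] \leq 1/2^n + O(\mathrm{poly}(n)/2^{2n})$. Finally, the quantum lower bound $2/2^n$ claimed in Theorem \ref{rxhogclassintro} follows by noting a quantum algorithm can simply prepare $\ket{\psi}$ with one oracle call, apply $H^{\otimes n}$, and measure, achieving heavy output generation value comparable to genuine XHOG.

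The main obstacle I expect is making the second step — "polynomially many computational-basis oracle queries plus the coefficient table cannot bias a Hadamard-basis overlap beyond $O(\mathrm{poly}(n)/2^{2n})$" — fully rigorous rather than heuristic. The subtlety is that the coefficient table is given "to any precision allowable within space and time constraints," so one cannot simply say the algorithm learns a bounded number of bits; instead one must argue that even with the \emph{entire} computational-basis description of $\ket{\psi}$, the Hadamard-basis string $z$ output by the algorithm is chosen adversarially as a function of that description, and one needs that no function from computational-basis descriptions to output strings $z$ can make $\EE_U[|\braket{\chi_z|\psi}|^2]$ large — but this is false in general (an algorithm knowing $\ket{\psi}$ exactly could in principle compute $\ket{\chi}$ maximizing the Hadamard overlap!). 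So the real content must be that the classical algorithm does \emph{not} in fact get the Hadamard-basis description: either the coefficient table is of a \emph{different, fixed} state $\psi_Z$ (note the subscript $Z$ in the theorem statement — the table is of $\psi_Z$, not $\psi$), and the oracle $O_\psi$ is a black box from which polynomially many queries extract only $O(\mathrm{poly}(n))$ bits about $\psi$ itself; under that reading the counting/concentration argument goes through cleanly. I would therefore carefully disentangle the roles of $\psi$ (known only through $P(n)$ black-box oracle calls) and $\psi_Z$ (known fully as a table), pin down that the rotation $u_n = H^{\otimes n}$ places the scoring basis outside the reach of the $\psi$-oracle, and then run the concentration bound via Lemma \ref{manystates} and equation \eqref{eq:fidb}.
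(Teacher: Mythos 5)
Your high-level diagnosis (computational-basis access cannot bias a Hadamard-basis overlap) is the right one, but the step you yourself flag as the obstacle --- that polynomially many oracle calls plus the coefficient table shift the conditional expectation of $|\braket{z|H^{\otimes n}|\psi}|^2$ by only $O(\mathrm{poly}(n)/2^{2n})$ --- is a genuine gap, and the justification you sketch for it does not work. A generic ``each bit of information shifts a $[0,1]$-valued statistic with mean $2^{-n}$ by only $O(2^{-n})$'' argument is false: a single adversarially chosen bit (e.g.\ whether the overlap exceeds its mean) can move a conditional expectation by a constant factor. Nor does Lemma \ref{manystates}/equation \eqref{eq:fidb} help, since those give the \emph{unconditional} fidelity distribution and say nothing about the conditional law given the algorithm's view. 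Your proposed resolution --- reading $\psi_Z$ as a different, fixed state --- also misreads the setup: $\psi_Z$ is the computational-basis dephasing of $\psi$, i.e.\ the table reveals all magnitudes $|\braket{b|\psi}|$ but no phases.

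The mechanism the paper actually uses, and which your proposal is missing, is an \emph{exact} conditional computation rather than an information-counting bound. Conditioned on the algorithm's view (all magnitudes, plus the phases of the polynomially many computational-basis indices $S_{ph}$ touched by oracle calls), the remaining phases of a Haar-random state are independent and uniform. Averaging over them collapses $\ketbra{\psi}$ to $\beta\ketbra{\eta} + (1-\beta)\rho$, where $\ket{\eta}$ is supported on $S_{ph}$ and $\rho$ is diagonal in the computational basis. The crucial identity is that \emph{any} $Z$-diagonal density has overlap exactly $1/2^n$ with every Hadamard-basis state, so the unknown-phase part contributes exactly $(1-\beta)/2^n$ with no error to control. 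The residual term $\beta\,|\braket{z|H^{\otimes n}|\eta}|^2$ is then bounded by amplitude anticoncentration for Haar-random states ($|\beta_l|^2 \le O(n)/2^n$, cited from Kretschmer) together with the triangle inequality applied to the $|S_{ph}| = O(\mathrm{poly}(n))$ terms of $H^{\otimes n}\ket{\eta}$, giving $|\braket{z|H^{\otimes n}|\eta}|^2 \le O(\mathrm{poly}(n))/2^{2n}$. Without the phase-averaging step and the exact $1/2^n$ identity for $Z$-diagonal states, your argument does not close; with them, the concentration machinery you were reaching for is unnecessary.
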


\subsection{Formulating the Standard State Preparation Oracle}
The purpose of this section is to formalize \cite[Proposition 3.3.5]{aaronson_complexity_2016} with rigorous error bounds and a precisely defined oracle. A reader who is not interested in these details may skip this section and read the referenced proposition's proof for intuition. We require precision to analyze RXHOG in the presence of oracles. While calling the constructed oracle in superposition should prepare the quantum state with enough fidelity and flexibility for RXHOG, classical calls should not be able to reveal heavy outputs.

For any real number $\alpha \in [0,1]$, let $\alpha = 0.\alpha[0]\alpha[1]\alpha[2]...$ denote the binary digits of $\alpha$ following the decimal point. In this notation, we represent $1 = 0.1111...$ as a repeating decimal. Let $\alpha[:k]$ denote the 1st through $k$th binary digits of $\alpha$. Let $\alpha[:k]$ denote the bit string $\alpha_1 ... \alpha_k$. Let $\alpha\{k\} = 0.\alpha[0] ... \alpha[k]$ denote $\alpha$ taken to the precision of the $k$th binary digit. For example, if $\alpha = 0.1011$, then $\alpha\{1\} = 0.1$, $\alpha\{2\} = 0.10$, and $\alpha\{4\} = \alpha$ as real coefficients, while $\alpha[:1] = 1$, $\alpha[:2] = 10$, and $\alpha[:4] = 1011$.

For a binary string $b \in \{0,1\}^n$, we denote by $b[j:k] = b_j...b_k$ the $j$th through $k$th bits. By $\ket{b}$ we denote the $n$-qubit ket vector of $b$ in the computational basis. By $0.b[1:k] = 0.b_1...b_k$ we denote the real number given by binary digits of $b$ to the specified precision.

To denote strings in classical computation, we concatenate numbers. For example, we may write ``$s = 01\sep010$" to denote the 5-symbol string of a 2-bit binary number, a separarating symbol ``$\sep$", and a 3-bit binary number. Letters may denote strings, for instance we could write $a = 01, b = 010$ and have $s = a \sep b$. For a pair of binary strings of equal length, we denote by ``$\oplus$" the bitwise XOR. For quantum computations, we use braket notation with ``$\otimes$" denoting the tensor product.
\begin{definition}[Controlled Rotation/Phase]
For a binary number $b \in \{0,1\}^n$ be an $n$-bit binary number. Let $C_{\text{rot},n} \in U(n+1)$ denote the controlled rotation gate given by
\begin{equation*}
\begin{split}
C_{\text{rot},n} \ket{b} \otimes \ket{0} & = \ket{b} \otimes (\sqrt{1 - (0.b_1...b_n)^2} \ket{0} + 0.b_1...b_n \ket{1}) \\
C_{\text{rot},n} \ket{b} \otimes \ket{1} & = \ket{b} \otimes (\sqrt{1 - (0.b_1...b_n)^2} \ket{1} - 0.b_1...b_n \ket{0}) .
\end{split}
\end{equation*}
Let $C_{\text{ph},n} \in U_n$ denote the controlled phase gate given by
\[C_{\text{ph},n} \ket{b} = \exp(2 \pi i \times 0.b_1 ... b_n) \ket{b}. \]
\end{definition}
\begin{lemma} \label{crotph}
For given $n \in \NN$ and universal gate set $G$, $C_{\text{rot},n}$ and $C_{\text{ph},n}$ can be implemented to overlap error at most $1 - poly(n)/2^{2 n}$ each with $O(n^\kappa)$ elementary gates for some $\kappa \in \RR^+$.
\end{lemma}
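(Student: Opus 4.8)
\noindent\textit{Proof sketch.}
The plan is to write each of $C_{\mathrm{rot},n}$ and $C_{\mathrm{ph},n}$ as a short product of single-qubit rotations and phases with dyadic parameters — together with one reversible classical arithmetic subroutine in the case of $C_{\mathrm{rot},n}$ — and then invoke the Solovay-Kitaev theorem to realize each elementary piece over $G$ to inverse-exponential precision using only polynomially many gates.

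The phase gate $C_{\mathrm{ph},n}$ factors exactly: since $0.b_1\ldots b_n = \sum_{j=1}^{n} b_j 2^{-j}$, we have $e^{2\pi i\, 0.b_1\ldots b_n} = \prod_{j=1}^{n}\big(e^{2\pi i 2^{-j}}\big)^{b_j}$, so $C_{\mathrm{ph},n}$ is literally the tensor product over the $n$ input qubits of the single-qubit phase gates $P_j := \mathrm{diag}(1,\, e^{2\pi i 2^{-j}})$. By Solovay-Kitaev each $P_j$ is realizable to operator-norm error $\eta$ with $O(\mathrm{polylog}(1/\eta))$ gates of $G$; operator errors are subadditive, so the assembled circuit is within $n\eta$ of $C_{\mathrm{ph},n}$ in operator norm and hence within overlap error $(n\eta)^2$ on every input. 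Taking $\eta = 2^{-n}$ gives overlap error $O(n^{2}/2^{2n}) = O(\mathrm{poly}(n)/2^{2n})$ with a polynomial number of gates.

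For $C_{\mathrm{rot},n}$, write $a = 0.b_1\ldots b_n$, which lies in $[0,\,1-2^{-n}]$ since $b$ is a finite string; then $C_{\mathrm{rot},n}$ sends $\ket{b}\otimes\ket{\chi}$ to $\ket{b}\otimes R_y(2\arcsin a)\ket{\chi}$, where $R_y(\theta)\ket{0} = \cos(\theta/2)\ket{0} + \sin(\theta/2)\ket{1}$. The obstruction is that $\arcsin a$ is nonlinear in the bits $b_j$, so $C_{\mathrm{rot},n}$ is not simply a product of singly-controlled rotations. To get around this, allocate $m = O(n)$ ancilla qubits and apply a reversible classical circuit $A$ that computes into them the $m$-bit truncation $t(b)$ of $\arcsin(a)/\pi \in [0,\tfrac12)$, so $|0.t(b) - \arcsin(a)/\pi| \le 2^{-m}$; then, for $k=1,\ldots,m$, apply the micro-rotation $R_y(2\pi 2^{-k})$ on the target controlled on the $k$th ancilla qubit. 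Since $y$-axis rotations compose additively, this block enacts $R_y\big(2\pi\, 0.t(b)\big) = R_y(2\arcsin a)$ up to an angle error $\le 2\pi 2^{-m}$; finally run $A^{-1}$ to clear the ancilla. The reversible $\arcsin$ routine is routine fixed-point numerics — a low-degree polynomial approximation on the bulk of $[0,1)$, together with an argument-reduction identity such as $\arcsin x = \tfrac{\pi}{2} - 2\arcsin\sqrt{(1-x)/2}$ to handle $x$ within $2^{-n}$ of $1$, where the series converges rapidly on the small argument $\sqrt{(1-x)/2}$ — so it uses $\mathrm{poly}(n)$ Toffoli/CNOT/X gates on $O(n)$-bit registers, leaves no garbage via the compute-copy-uncompute pattern, and, being a fixed permutation, is itself Solovay-Kitaev-compilable into $G$.

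Collecting the error budget, the assembled circuit for $C_{\mathrm{rot},n}$ is a product of $\mathrm{poly}(n)$ elementary pieces (the two copies of $A$ and the $m$ controlled micro-rotations), each compiled into $G$ to operator error $\eta$, so its operator-norm distance from $C_{\mathrm{rot},n}$ is at most $\mathrm{poly}(n)\cdot\eta + 2\pi 2^{-m}$; choosing $m = 2n$ and $\eta = 2^{-n}$ bounds this by $O(\mathrm{poly}(n)/2^{n})$, hence the overlap error by $O(\mathrm{poly}(n)/2^{2n})$, with a total of $\mathrm{poly}(n) = O(n^{\kappa})$ gates; taking the larger of the two exponents finishes the lemma. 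I expect the only genuinely delicate point to be the behaviour of $\arcsin$ near $1$, whose derivative blows up like $2^{n/2}$ at the worst-case argument $1-2^{-n}$; but this inflates the required arithmetic precision only polynomially in $n$ (it is absorbed by the argument-reduction step), not exponentially, so the gate count stays polynomial. Everything else is the standard two-layer approximation bookkeeping, using that an operator-norm error $\delta$ costs only an $O(\delta^{2})$ overlap error.
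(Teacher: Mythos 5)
Your treatment of $C_{\mathrm{ph},n}$ is exactly the paper's: factor the phase $e^{2\pi i\,0.b_1\ldots b_n}$ into $n$ single-qubit phase gates $\mathrm{diag}(1,e^{2\pi i 2^{-j}})$ acting on the control bits, compile each to inverse-exponential precision with polylogarithmically many gates from $G$, and add up operator-norm errors. For $C_{\mathrm{rot},n}$, however, you take a genuinely different route. The paper never computes an $\arcsin$: it decomposes $C_{\mathrm{rot},n}$ directly into bit-controlled micro-rotations $C_{\mathrm{r},j}$ by angles $\pi/2^{j+1}$, using additivity of $y$-rotations, which implicitly treats the control string $b$ as encoding the rotation \emph{angle} $\alpha=0.b_1\ldots b_n$ (so the output amplitude is $\sin(\pi\alpha/2)$ rather than the literal $0.b_1\ldots b_n$ of the gate's stated definition). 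That convention is the one actually used downstream in Definition \ref{qprepsubr}, where the oracle already returns $\alpha_{b_1\ldots b_k}=2\arcsin(\sqrt{\cdot})/\pi$, so the nonlinearity is pushed into the oracle's classical function and no quantum arithmetic is needed. You instead honor the literal amplitude-encoding definition, which forces you to insert a reversible fixed-point $\arcsin$ subroutine (compute into ancillas, apply ancilla-controlled micro-rotations, uncompute), with the attendant care about the derivative blow-up near $a=1$. Both arguments are sound and give $\mathrm{poly}(n)$ gates with $O(\mathrm{poly}(n)/2^{2n})$ overlap error; yours is more robust to the encoding convention at the cost of the arithmetic block, while the paper's is shorter but depends on reading $b$ as the angle, which is how $C_{\mathrm{rot},n}$ is in fact invoked later. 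One small bookkeeping point in your phase-gate paragraph: an operator-norm error $\delta$ gives overlap error $O(\delta^2)$ only after squaring the inner-product deficit correctly ($|\braket{\psi|U^\dagger\tilde U|\psi}|^2\ge 1-\delta^2$), which you use consistently, so the exponent $2n$ in the denominator is justified.
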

\begin{proof}
Rather than try to approximate $n$-qubit or $n+1$-qubit unitaries, we will rewrite both gates as sequences of 2-qubit gates. This lets us apply results on universal gate approximations in constant dimension.

The controlled phase gate $C_{\text{ph},n}$ is simpler and contains intuition relevant to both desired gates. We split the $n$-qubit gate into $n$ single-qubit controlled phase gates. For $j \in 1...k$, Let $C_{\text{p}, j}$ be defined by 
\begin{equation*}
\begin{split}
C_{\text{p}, j} \ket{0}  =  \ket{0} \text{, }
C_{\text{p}, j} \ket{1}  = \exp(2 \pi i / 2^j) \ket{1} .
\end{split}
\end{equation*}
Applying each $j$th gate to the $j$th qubit, the total multiplying coefficient is $\exp(2 \pi i b / 2^n)$, corresponding to the desired phase, where here $b$ is interpreted as a little-endian integer. We must however also account for the imprecision resulting from constructing these qubit gates. Via efficient approximations, we can approximate any single-qubit gate to precision $\epsilon$ using $c \log^{\tilde{\kappa}}(\epsilon)$ base gates for constants $c$ and $\tilde{\kappa}$ (where $\tilde{\kappa}> 1$ may follow from the original gate set or the particular metric) as long as $G$ obeys certain assumptions \cite{harrow_efficient_2002}. For rounding to the desired precision, we will set $\epsilon = 1/n 2^{2 n}$, requiring $c n^2 (2 n)^{\tilde{\kappa}}$ elementary gates.

The $n+1$-qubit $C_{\text{rot},n}$ conditionally rotates a state in the $X-Z$ plane of the Bloch sphere representation of a qubit. Let a qubit in half of this plane be given by $\sqrt{1 - \beta^2} \ket{0} + \beta \ket{1}$ for $\beta \in [0,1]$, and let $\alpha = \arcsin(\beta) \in [0, \pi/2)$. For a given $\ket{b} \otimes \ket{\psi}$, let $\alpha = 0.b_1...b_n$. Because rotations are additive, we may define $C_{\text{rot},n}$ as a sequence of gates $C_{\text{r}, j} \in U(2)$ for $j \in 0...n-1$, each given by
\begin{equation*}
\begin{split}
C_{\text{r}, j} \ket{1} \otimes \ket{0} & = \ket{1} \otimes (\cos(\pi / 2^{j + 1}) \ket{0} + \sin(\pi / 2^{j + 1}) \ket{1}) \\
C_{\text{r}, j} \ket{1} \otimes \ket{1} & = \ket{1} \otimes (\cos(\pi / 2^{j + 1}) \ket{1} - \sin(\pi / 2^{j + 1}) \ket{0}) \\
C_{\text{r}, j} \ket{0} \otimes \ket{\sigma} & = \ket{0} \otimes \ket{\sigma} \text{ } \forall \ket{\sigma} \\
\end{split}
\end{equation*}
and applied to the $j$th qubit in $b$ and the target qubit. As with the phase gates, we will tolerate an error up to $1/n 2^{n+1}$ in each rotation to bound the total error at $1/2^{2 n}$ at the same gate cost up to differences in the constants.
\end{proof}

\begin{definition}[Standard State Preparation Oracle]
Let $\ket{\psi}$ be a given quantum state, $b \in \{0,1\}^n$, $a \in \{0,1\}$,
\[\alpha_{b_1...b_k} = 2 \arcsin(\sqrt{tr(\ketbra{b_1...b_k 1}\ketbra{\psi})}) / \pi \in [0,1] ,\]
and $\phi_b = \braket{b|\psi}/|2 \pi \braket{b|\psi} |$ or 0 if $\braket{b|\psi} = 0$. Let $p \in \NN$ be given in binary. Define $O_\psi$ by the following classical functions:
\begin{equation*}
O_\psi(b_1 ... b_k \sep p \sep a) = 
\begin{cases} 
      b_1...b_k \sep p \sep (a \oplus \alpha_{b_1...b_k}[p]) & 0 \leq k < n \\
      b_1 ... b_k \sep p \sep (a \oplus \phi_{b}[p]) & k = n .
\end{cases}
\end{equation*}
For any input string $s$ that does not fit the above format, let $O_\psi(s) = s$.
\end{definition}
$O_\psi$ is its own inverse. Though $O_\psi$ is defined classically, $O_\psi$ trivially extends to a quantum oracle on the basis $\{\ket{0}, \ket{1}, \ket{\sep}\}$. Hence, we define a quantum state preparation routine:
\begin{definition}[Quantum Oracle State Preparation] \label{qprepsubr} \normalfont
Let $O_\psi$ be a standard state prepaparation oracle and $\epsilon > 0$ a given precision. Let $p(\epsilon) = \lceil 1/\epsilon \rceil$, where the ceiling  is taken in binary. We define a procedure in the following steps:
\begin{enumerate}
	\item Start with $\ket{\eta}$ as a ket in the 1-dimensional Hilbert space of 0 qubits. Let $b = b_1 ... b_n$.
	\item Given $\ket{\eta} = \sum_{b_1...b_k} \beta_{b_1...b_k} \ket{b_1...b_k}$, define the following subroutine:
	\begin{enumerate}
		\item For each $p \in 0...p(\epsilon)$ as a binary string, call
		\[ O_\psi(\ket{\eta} \otimes \ket{\sep} \otimes \ket{p} \otimes \ket{\sep} \otimes \ket{0}), \]
		collecting the final bits into a $p(\epsilon)$-qubit state. Discard $\ket{p}$ and the separator characters, yielding
		\[\sum_{b_1...b_k} \beta_{b_1...b_k} \ket{b_1...b_k} \otimes \ket{\alpha_{b_1...b_k}[:p(\epsilon)]}.\]
		\item Append a qubit prepared in 0 and apply bitwise controlled rotations, yielding
		\begin{equation*}
		\begin{split}
		& \sum_{b_1...b_k} \beta_{b_1...b_k} \ket{b_1...b_k} \otimes \ket{\alpha_{b_1...b_k}[:p(\epsilon)]} \\
		 & \hspace{5mm} \otimes
			\Big (\sqrt{1 - \sin(\pi \alpha_{b_1...b_k}\{p(\epsilon)\} / 2)}\ket{0}
			+ \sin(\pi \alpha_{b_1...b_k}\{p(\epsilon)\} / 2) \ket{1} \Big ) .
		\end{split}
		\end{equation*}
		\item Apply $O_\psi$ inversely to eliminate the explicitly stored value of $\alpha_{b_1...b_k}$, yielding
		\[ \sum_{b_1...b_k} \beta_{b_1...b_k} \ket{b_1...b_k} \otimes 
			\Big (\sqrt{1 - \sin(\pi \alpha_{b_1...b_k}\{p(\epsilon)\} / 2)}\ket{0}
			+ \sin(\pi \alpha_{b_1...b_k}\{p(\epsilon)\} / 2) \ket{1} \Big ) . \]
		Redefine $\ket{\eta}$ as this state.
	\end{enumerate}
	Apply the subroutine $n$ times for $k \in 0...n-1$, yielding an $n$-qubit $\ket{\eta}$.
	\item Let $\sum_{b} \beta_{b} \ket{b} = \eta$. For each $j \in 1...n$, apply the following subroutine:
	\begin{enumerate}
		\item For each $p \in 1...p(\epsilon)$, call
		\[ O_\psi(\ket{\eta} \otimes \ket{\sep} \otimes \ket{p} ) ,\]
		colleting the final bits into a $p(\epsilon)$-qbit state. Discard $\ket{p}$ and the separator characters, yielding
		\[ \sum_{b} \beta_{b} \ket{b} \otimes \ket{\phi_b[:p(\epsilon)]} . \]
		\item Apply bitwise controlled phase, yielding
		\[ \sum_{b} \beta_{b} e^{2 \pi \phi\{p(\epsilon)\} i} \ket{b} \otimes \ket{\phi_b[:p(\epsilon)]} . \]
		\item Apply $O_\psi$ inversely to eliminate the explicitly stored phase, yielding
		\[ \sum_{b} \beta_{b} e^{2 \pi \phi\{p(\epsilon)\} i} \ket{b} . \]
		Redefine $\ket{\eta}$ as this state.
	\end{enumerate}
\end{enumerate}
Let $\eta$ following this procedure be the final state.
\end{definition}

\begin{rem}
A classical, randomized algorithm with access to $O_\psi$ can prepare the mixture resulting from a computational basis measurement of $\ket{\psi}$ in much the same way that a quantum algorithm prepares the quantum state. Hence a classical algorithm can effectively sample in the computational basis, but not in the rotated basis. This motivates the rotation in the definition of the RXHOG.
\end{rem}

\begin{lemma}
For an $n$-qubit state $\ket{\psi}$, polynomial quantum queries to $O_\psi$ and elementary gates suffice to prepare an $n$-qubit state $\ket{\tilde{\psi}}$ such that $|\braket{\tilde{\psi}|\psi}|^2 \geq 1 - O(poly(n)/2^n)$.

Polynomial classical queries and elementary, classical operations suffice to prepare the computational basis-diagonal density $\tilde{\psi}_Z$ such that $F(\tilde{\psi}_Z, \psi) \geq 1 - O(poly(n)/2^n)$.
\end{lemma}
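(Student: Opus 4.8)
The plan is to obtain both claims directly from the quantum oracle state-preparation routine of Definition~\ref{qprepsubr}, instantiated with a precision that keeps everything polynomial. Concretely I would run that routine with $\epsilon$ chosen so that $p(\epsilon)=\lceil 1/\epsilon\rceil = n$ (e.g.\ $\epsilon = 1/n$), so each relevant conditional amplitude $\alpha_{b_1\cdots b_k}$ and each phase $\phi_b$ is read to $n$ binary digits, i.e.\ to additive precision $2^{-n}$. With this choice the routine performs $n$ amplitude-loading rounds and $O(n)$ phase-loading rounds; each round issues $p(\epsilon)+1 = O(n)$ queries to $O_\psi$ to write the digits into an ancilla register, the same number of inverse queries to uncompute it, and one controlled rotation $C_{\mathrm{rot},p(\epsilon)}$ or controlled phase $C_{\mathrm{ph},p(\epsilon)}$. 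Hence the routine makes $O(n^2)$ queries to $O_\psi$, and by Lemma~\ref{crotph} each $C_{\mathrm{rot},p(\epsilon)}$ and $C_{\mathrm{ph},p(\epsilon)}$ costs $O(p(\epsilon)^\kappa)=O(n^\kappa)$ elementary gates, so $O(n^{\kappa+1})$ elementary gates total --- both polynomial in $n$, giving the claimed resource bounds.

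For the fidelity estimate I would bound two error sources and add them along the (unitary) circuit. First, replacing an exact rotation angle $\tfrac{\pi}{2}\alpha_{b_1\cdots b_k}$ (resp.\ phase $2\pi\phi_b$) by its $p(\epsilon)$-digit truncation changes the angle by at most $\tfrac{\pi}{2}2^{-p(\epsilon)}$, and such a change perturbs the round's output by at most that in Hilbert-space norm, uniformly over the superposition of prefixes on which the rotation is controlled; summing over the $O(n)$ rounds gives a truncation error $O(n\,2^{-n})$. Second, by Lemma~\ref{crotph} each implemented $C_{\mathrm{rot},p(\epsilon)}$, $C_{\mathrm{ph},p(\epsilon)}$ has overlap at least $1-poly(n)/2^{2n}$ with the exact gate, hence Hilbert-space error $O(poly(n)/2^n)$ per gate, and there are $O(n)$ of them, while the $O_\psi$ calls are exact. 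By the triangle inequality for the composed unitary the output state $\ket{\tilde\psi}$ satisfies $\|\,\ket{\tilde\psi}-\ket{\psi}\,\| = O(poly(n)/2^n)$, and therefore $|\braket{\tilde\psi|\psi}|^2 \ge 1 - O(poly(n)/2^n)$, which is the first claim.

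For the classical claim I would run the classical shadow of the same routine: sample $b = b_1\cdots b_n$ one bit at a time, where, having fixed $b_1\cdots b_k$, one issues $p(\epsilon)+1$ classical (computational-basis) queries to $O_\psi$ to read the first $p(\epsilon)$ digits of $\alpha_{b_1\cdots b_k}$, computes the truncated conditional probability $q = \sin^2\!\big(\tfrac{\pi}{2}\alpha_{b_1\cdots b_k}\{p(\epsilon)\}\big)$, and sets $b_{k+1}=1$ with probability $q$; the phases $\phi_b$ are never queried, since they do not affect a computational-basis-diagonal density. This uses $O(n^2)$ classical queries and $poly(n)$ classical work, and outputs a sample from a distribution $\tilde P$ that differs from $P_\psi(b):=|\braket{b|\psi}|^2$ only through the per-step rounding of conditionals by at most $2^{-p(\epsilon)}$ each; the chain rule for total variation then gives $\|\tilde P - P_\psi\|_1 = O(n\,2^{-n})$. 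Writing $\tilde\psi_Z = \sum_b \tilde P(b)\,\ketbra{b}$ and letting $\psi_Z = \sum_b P_\psi(b)\,\ketbra{b}$ be the computational-basis-diagonal density obtained by measuring $\ketbra{\psi}$, we get $F(\tilde\psi_Z,\psi_Z) = \big(\sum_b \sqrt{\tilde P(b)\,P_\psi(b)}\big)^2 \ge \big(1-\tfrac12\|\tilde P - P_\psi\|_1\big)^2 \ge 1 - O(poly(n)/2^n)$, which is the claimed bound $F(\tilde\psi_Z,\psi)\ge 1-O(poly(n)/2^n)$ understood against the basis-measured density $\psi_Z$ (against the pure state $\ketbra{\psi}$ itself the fidelity of a diagonal density is far from $1$ for spread-out $\psi$, so this is the intended reading).

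I do not expect a genuine obstacle here, since the lemma is essentially a bookkeeping consequence of Definition~\ref{qprepsubr} and Lemma~\ref{crotph}; the one point that must be gotten right is the relation between query count and achievable precision. Because $p(\epsilon)=\lceil 1/\epsilon\rceil$ is the \emph{number} of digits extracted, a merely inverse-polynomial $\epsilon$ already buys additive precision $2^{-\Theta(n)}$, so one genuinely reaches $1-O(poly(n)/2^n)$ fidelity at polynomial cost; and one must verify that the $O(n)$ per-round errors add --- via the triangle inequality for the fixed unitary circuit on the quantum side, and the total-variation chain rule on the classical side --- rather than compounding. Everything else is routine estimation given Lemma~\ref{crotph}.
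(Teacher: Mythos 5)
Your proposal is correct and follows essentially the same route as the paper's (much terser) proof: instantiate the routine of Definition~\ref{qprepsubr} with polynomially many digits of precision per coefficient, invoke Lemma~\ref{crotph} for the per-gate cost and the $O(poly(n)/2^{2n})$ per-gate overlap error, and observe that the polynomially many per-round truncation and gate errors of size $O(poly(n)/2^n)$ accumulate only additively; your classical sampling procedure is likewise the one the paper alludes to in the remark preceding the lemma. Your reading of the second claim as $F(\tilde{\psi}_Z,\psi_Z)$, i.e.\ fidelity against the computational-basis-measured density rather than against the pure state $\ketbra{\psi}$, is the correct interpretation of the statement, since the latter quantity cannot approach $1$ for a generic $\ket{\psi}$.
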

\begin{proof}
By Lemma \ref{crotph}, each quantum $C_{rot, n}$ or $C_{ph, n}$ gate yields adds an overlap error of at most $O(poly(n)/2^{2^n})$. This error is insignificant compared to the error of $1/2^n$ from using finite precision to prepare a state with given amplitude and phase coefficients from polynomial queries. Since the subroutine of Definition \ref{qprepsubr} uses polynomial amplitude and phase preparation steps, it induces an error of $O(poly(n)/2^n)$.
\end{proof}

\subsection{Classical Hardness of RXHOG via Standard State Preparation Oracle}

\begin{proof}[(Proof of Theorem \ref{rxhogclass})]
Given the information available to it including the oracle and randomness, the classical algorithm returns a sequence of strings $(z_j)_{j=1}^k$. Let $z$ be any of these strings for a given configuration. We will prove the Theorem by averaging over possible values of $\ket{\psi}$ that would yield the same information, first assuming that the algorithm is deterministic given this info. We will then use convexity to extend the bound to randomized algorithms.

In particular, the information given can at most reveal polynomial bits of polynomially many phases in the $X$ basis, which we may replace by assuming that the algorithm has access to polynomially many phases to arbitrarily high precision. Let
\[ S_{ph} = \{l : \text{ phase of $\ket{l}$ is known } \}, \]
which we may assume have non-zero amplitude. Let
\[ \ket{\eta} = \frac{1}{\sqrt{\sum_l |\beta_l|^2}} \sum_{l} \beta_l \ket{l}
	\text{, } \rho = \frac{1}{\sum_m p_m} \sum_{m} p_m \ketbra{m} , \]
where $\beta_l$ is the amplitude of each term with known phase, and $p_m$ is the probability of each term such that $m \not \in S_{ph}$. Let $\beta = \Big ( \sum_l |\beta_l|^2 \Big )^{1/2}$, and note that $\sum_m p_m = 1 - \beta$.

Let $CPHASE_{m, \phi}$  be given by
\begin{equation}
\begin{split}
\begin{cases} 
      CPHASE_{m, \phi} \ket{m} = e^{2 \pi i \phi} \ket{m} & \\
      CPHASE_{m, \phi} \ket{l} = \ket{l} & l \neq m .
\end{cases}\end{split}
\end{equation}
We now use that for any $z$,
\begin{equation}
\begin{split}
& \EE_{\ket{\psi}}[|\braket{z|X^{\otimes H} | \psi}|^2 | \text{ all magnitudes, fixed phases for $l \in S_{ph}$ }] \\
& \text{ }	= \EE_{\phi_m : m \not \in S_{ph}} [ | \langle z | X^{\otimes H} \circ CPHASE_{m, \phi_m} |
		 \rangle |^2 ] \\
& \text{ } = tr \big (\ketbra{z} X^{\otimes H} (\beta \ketbra{\eta} + (1 - \beta) \rho) X^{\otimes H} \big ) \\
& \text{ } = \beta \big | \big \langle z \big | H^{\otimes n} \big | \eta \big \rangle \big |^2
	+ (1 - \beta) tr \big (\ketbra{z} H^{\otimes n} \rho X^{\otimes H} \big ) \\
& \text{ } = \beta \big | \big \langle z \big | H^{\otimes n} \big | \eta \big \rangle \big |^2
	+ (1 - \beta) 1/2^n ,
\end{split}
\end{equation}
where the last equality follows from the fact that $\rho$ is diagonal in the $Z$ basis, rotated into a state that is diagonal in the $X$ basis, and then compared to a $z$-basis string. By \cite[Fact 10]{kretschmer_quantum_2020}, $|\beta_l|^2 \leq O(n)/2^n$ for all $l \in S_{ph}$, so $\beta \leq |S_{ph}|O(n) / 2^n$. This is enough to bound the expectation by $1/2^n + O(poly(n))/2^n$ but not yet sufficient to achieve the desired bound.

To complete the Theorem, we must use the fact that $\ketbra{n}$ is highly concentrated in the $Z$ basis, so it can't be highly concentrated in the mutually unbiased $X$ basis. Explicitly writing out the Hadamard transform,
\begin{equation*}
H^{\otimes n} \ket{\eta} = \frac{1}{2^{n/2}} \sum_{l \in S_{ph}, x \in 0...2^{n-1}} (-1)^{x \cdot l} \beta_l \ket{x} ,
\end{equation*}
where $x \cdot l$ is the bitwise dot product of binary representations. Again by \cite[Fact 10]{kretschmer_quantum_2020}, $|\beta_l| \leq O(\sqrt{n})/2^{n/2}$ for any $l \in S_{ph}$. For any $z$, via the triangle inequality,
\[ |\braket{z | H^{\otimes n} |\eta}| \leq \frac{1}{2^n} \sum_{l \in S_{ph}} O(\sqrt{n})
	\leq \frac{O(poly(n))}{2^n} .\]
Squaring both sides completes the Theorem for deterministic algorithms. Since a classical algorithm cannot return a single value of $z$ that solves RXHOG above the given bound, it cannot return $k$ such values.

We can model a randomized algorithm as an otherwise deterministic algorithm with an initial, random bitstring, $r \in \{0,1\}^{n_r}$ for some $n_r \in \NN$ bounded by the space of the algorithm. Because $U$ is independent of $r$, it does not change the above argument.
\end{proof}

\section{Conclusions \& Outlook}
In complexity theory, one cannot necessarily replace a quantum proof or advice state by a polynomial classical description, yet it is also not obvious that a quantum proof or advice state has an exponential advantage in solving a classical problem. It is contrastingly easy to construct a fundamentally quantum task that is trivial given a particular quantum state, but for which classical assistance is not much use.

The quantum oracle construction of Aaronson \& Kuperberg \cite{aaronson_quantum_2006} adds some quantum elements to a classical complexity class, requiring quantum information to code a verifiable proof. We show an analogous construction in which quantum information isn't necessary to specify the proof, but a quantum proof is necessary to efficiently query the oracle. Consequently, even much larger classical resources are of limited power in solving the problem. That extra classical information has little effect may suggest that this construction is more analogous to making the problem in certain ways quantum than it is to a classical oracle. Nonetheless, standard oracles are bizarrely useful in preparing quantum states \cite[prop 3.3.5]{aaronson_complexity_2016}, so it is still plausible that they could have a role in this setting.

To show QMA $\neq$ QCMA without an oracle, one strategy is to seek a property of quantum states that is unlikely to hold for easily preparable states, but is easy to check for a given state. A tempting path is to ask Arthur to test the output of a polynomial circuit on a particular subset of hard-to-prepare states. Unfortunately, Arthur also can't efficiently verify that a state given by Merlin is hard or corresponds to the correct index, so this problem is probably neither in QMA nor in QCMA. If a property is sufficiently rare and sufficiently independent from the hardness of state preparation, it might nonetheless be possible to show that with non-zero probability, an instance will have only hard-to-prepare proofs. The canonical, existing candidate property is that the state be identified by a local Hamiltonian, e.g. by being a ground state. The challenge in this case would be to show that any such states are hard to prepare from any classical description.

Experimental demonstrations of quantum witness advantage should be around the corner, assuming the hardness of deterministically preparing certain randomly generated quantum states. This possibility motivates further work on the conjectured separation between random and deterministic state complexity. State complexity is fundamentally connected to the holistic nature of highly-entangled quantum systems, which underpin error correction and holographic spacetime \cite{almheiri_bulk_2015}. It is possible that a full resolution to QMA vs. QCMA will not only be a fundamental result on the power of quantum computational resources, but also reveal new physics.

\section{Acknowledgements}
I thank Bill Fefferman for his guidance on this project.
{ \scriptsize

\bibliography{states}

\begin{thebibliography}{10}

\bibitem{bernstein_quantum_1997}
Ethan Bernstein and Umesh Vazirani.
\newblock Quantum {Complexity} {Theory}.
\newblock {\em SIAM Journal on Computing}, 26(5):1411--1473, October 1997.
\newblock Publisher: Society for Industrial and Applied Mathematics.

\bibitem{boyer_tight_1998}
Michel Boyer, Gilles Brassard, Peter Høyer, and Alain Tapp.
\newblock Tight {Bounds} on {Quantum} {Searching}.
\newblock {\em Fortschritte der Physik}, 46(4-5):493--505, 1998.
\newblock \_eprint:
  https://onlinelibrary.wiley.com/doi/pdf/10.1002/\%28SICI\%291521-3978\%28199806\%2946\%3A4/5\%3C493\%3A\%3AAID-PROP493\%3E3.0.CO\%3B2-P.

\bibitem{zalka_grovers_1999}
Christof Zalka.
\newblock Grover's quantum searching algorithm is optimal.
\newblock {\em Physical Review A}, 60(4):2746--2751, October 1999.
\newblock Publisher: American Physical Society.

\bibitem{aaronson_quantum_2006}
Scott Aaronson and Greg Kuperberg.
\newblock Quantum {Versus} {Classical} {Proofs} and {Advice}.
\newblock {\em arXiv:quant-ph/0604056}, April 2006.
\newblock arXiv: quant-ph/0604056 version: 4.

\bibitem{fefferman_quantum_2018}
Bill Fefferman and Shelby Kimmel.
\newblock Quantum vs {Classical} {Proofs} and {Subset} {Verification}.
\newblock {\em arXiv:1510.06750 [quant-ph]}, June 2018.
\newblock arXiv: 1510.06750.

\bibitem{harrow_random_2009}
Aram~W. Harrow and Richard~A. Low.
\newblock Random {Quantum} {Circuits} are {Approximate} 2-designs.
\newblock {\em Communications in Mathematical Physics}, 291(1):257--302,
  October 2009.

\bibitem{aaronson_complexity_2016}
Scott Aaronson.
\newblock The {Complexity} of {Quantum} {States} and {Transformations}: {From}
  {Quantum} {Money} to {Black} {Holes}.
\newblock Technical Report 109, 2016.

\bibitem{kretschmer_quantum_2020}
William Kretschmer.
\newblock The {Quantum} {Supremacy} {Tsirelson} {Inequality}.
\newblock {\em arXiv:2008.08721 [quant-ph]}, August 2020.
\newblock arXiv: 2008.08721.

\bibitem{aaronson_complexity-theoretic_2017}
Scott Aaronson and Lijie Chen.
\newblock Complexity-theoretic foundations of quantum supremacy experiments.
\newblock In {\em Proceedings of the 32nd {Computational} {Complexity}
  {Conference}}, {CCC} '17, pages 1--67, Dagstuhl, DEU, July 2017. Schloss
  Dagstuhl--Leibniz-Zentrum fuer Informatik.

\bibitem{boixo_characterizing_2018}
Sergio Boixo, Sergei~V. Isakov, Vadim~N. Smelyanskiy, Ryan Babbush, Nan Ding,
  Zhang Jiang, Michael~J. Bremner, John~M. Martinis, and Hartmut Neven.
\newblock Characterizing quantum supremacy in near-term devices.
\newblock {\em Nature Physics}, 14(6):595--600, June 2018.
\newblock Number: 6 Publisher: Nature Publishing Group.

\bibitem{mora_algorithmic_2005}
Caterina~E. Mora and Hans~J. Briegel.
\newblock Algorithmic {Complexity} and {Entanglement} of {Quantum} {States}.
\newblock {\em Physical Review Letters}, 95(20):200503, November 2005.
\newblock Publisher: American Physical Society.

\bibitem{harrow_efficient_2002}
Aram~W. Harrow, Benjamin Recht, and Isaac~L. Chuang.
\newblock Efficient discrete approximations of quantum gates.
\newblock {\em Journal of Mathematical Physics}, 43(9):4445--4451, August 2002.
\newblock Publisher: American Institute of Physics.

\bibitem{nielsen_quantum_2010}
Michael~A. Nielsen and Isaac~L. Chuang.
\newblock {\em Quantum {Computation} and {Quantum} {Information}: 10th
  {Anniversary} {Edition}}.
\newblock Cambridge University Press, December 2010.

\bibitem{dawson_solovay-kitaev_2005}
Christopher~M. Dawson and Michael~A. Nielsen.
\newblock The {Solovay}-{Kitaev} algorithm.
\newblock {\em arXiv:quant-ph/0505030}, August 2005.
\newblock arXiv: quant-ph/0505030.

\bibitem{lord_distribution_1954}
R.~D. Lord.
\newblock The {Distribution} of {Distance} in a {Hypersphere}.
\newblock {\em Annals of Mathematical Statistics}, 25(4):794--798, December
  1954.
\newblock Publisher: Institute of Mathematical Statistics.

\bibitem{lehnen_sphere_2006}
Al~Lehnen and Gary Wesenberg.
\newblock The {Sphere} {Game} in n {Dimensions}, 2006.

\bibitem{li_concise_2011}
S.~Li.
\newblock Concise {Formulas} for the {Area} and {Volume} of a {Hyperspherical}
  {Cap}.
\newblock 2011.

\bibitem{zyczkowski_average_2005}
Karol Życzkowski and Hans-Jürgen Sommers.
\newblock Average fidelity between random quantum states.
\newblock {\em Physical Review A}, 71(3):032313, March 2005.
\newblock Publisher: American Physical Society.

\bibitem{hu_characterization_2020}
Zixuan Hu and Sabre Kais.
\newblock Characterization of quantum states based on creation complexity.
\newblock {\em arXiv:2004.13827 [quant-ph]}, August 2020.
\newblock arXiv: 2004.13827.

\bibitem{kus_universality_1988}
M.~Kus, J.~Mostowski, and F.~Haake.
\newblock Universality of eigenvector statistics of kicked tops of different
  symmetries.
\newblock {\em Journal of Physics A: Mathematical and General},
  21(22):L1073--L1077, November 1988.
\newblock Publisher: IOP Publishing.

\bibitem{harlow_quantum_2013}
Daniel Harlow and Patrick Hayden.
\newblock Quantum computation vs. firewalls.
\newblock {\em Journal of High Energy Physics}, 2013(6):85, June 2013.

\bibitem{arute_quantum_2019}
Frank Arute, Kunal Arya, Ryan Babbush, Dave Bacon, Joseph~C. Bardin, Rami
  Barends, Rupak Biswas, Sergio Boixo, Fernando G. S.~L. Brandao, David~A.
  Buell, Brian Burkett, Yu~Chen, Zijun Chen, Ben Chiaro, Roberto Collins,
  William Courtney, Andrew Dunsworth, Edward Farhi, Brooks Foxen, Austin
  Fowler, Craig Gidney, Marissa Giustina, Rob Graff, Keith Guerin, Steve
  Habegger, Matthew~P. Harrigan, Michael~J. Hartmann, Alan Ho, Markus Hoffmann,
  Trent Huang, Travis~S. Humble, Sergei~V. Isakov, Evan Jeffrey, Zhang Jiang,
  Dvir Kafri, Kostyantyn Kechedzhi, Julian Kelly, Paul~V. Klimov, Sergey Knysh,
  Alexander Korotkov, Fedor Kostritsa, David Landhuis, Mike Lindmark, Erik
  Lucero, Dmitry Lyakh, Salvatore Mandrà, Jarrod~R. McClean, Matthew McEwen,
  Anthony Megrant, Xiao Mi, Kristel Michielsen, Masoud Mohseni, Josh Mutus,
  Ofer Naaman, Matthew Neeley, Charles Neill, Murphy~Yuezhen Niu, Eric Ostby,
  Andre Petukhov, John~C. Platt, Chris Quintana, Eleanor~G. Rieffel, Pedram
  Roushan, Nicholas~C. Rubin, Daniel Sank, Kevin~J. Satzinger, Vadim
  Smelyanskiy, Kevin~J. Sung, Matthew~D. Trevithick, Amit Vainsencher, Benjamin
  Villalonga, Theodore White, Z.~Jamie Yao, Ping Yeh, Adam Zalcman, Hartmut
  Neven, and John~M. Martinis.
\newblock Quantum supremacy using a programmable superconducting processor.
\newblock {\em Nature}, 574(7779):505--510, October 2019.
\newblock https://datadryad.org/stash/dataset/doi:10.5061/dryad.k6t1rj8.

\bibitem{magnard_microwave_2020}
Paul Magnard, Simon Storz, Philipp Kurpiers, Josua Schär, Fabian Marxer, Janis
  Lütolf, Jean-Claude Besse, Mihai Gabureac, Kevin Reuer, Abdulkadir Akin,
  Baptiste Royer, Alexandre Blais, and Andreas Wallraff.
\newblock Microwave {Quantum} {Link} between {Superconducting} {Circuits}
  {Housed} in {Spatially} {Separated} {Cryogenic} {Systems}.
\newblock {\em arXiv:2008.01642 [quant-ph]}, August 2020.
\newblock arXiv: 2008.01642.

\bibitem{jurcevic_demonstration_2020}
Petar Jurcevic, Ali Javadi-Abhari, Lev~S. Bishop, Isaac Lauer, Daniela~F.
  Bogorin, Markus Brink, Lauren Capelluto, Oktay Günlük, Toshinari Itoko,
  Naoki Kanazawa, Abhinav Kandala, George~A. Keefe, Kevin Krsulich, William
  Landers, Eric~P. Lewandowski, Douglas~T. McClure, Giacomo Nannicini, Adinath
  Narasgond, Hasan~M. Nayfeh, Emily Pritchett, Mary~Beth Rothwell, Srikanth
  Srinivasan, Neereja Sundaresan, Cindy Wang, Ken~X. Wei, Christopher~J. Wood,
  Jeng-Bang Yau, Eric~J. Zhang, Oliver~E. Dial, Jerry~M. Chow, and Jay~M.
  Gambetta.
\newblock Demonstration of quantum volume 64 on a superconducting quantum
  computing system.
\newblock {\em arXiv:2008.08571 [quant-ph]}, September 2020.
\newblock arXiv: 2008.08571.

\bibitem{bouland_complexity_2019}
Adam Bouland, Bill Fefferman, Chinmay Nirkhe, and Umesh Vazirani.
\newblock On the complexity and verification of quantum random circuit
  sampling.
\newblock {\em Nature Physics}, 15(2):159--163, February 2019.
\newblock Number: 2 Publisher: Nature Publishing Group.

\bibitem{pednault_leveraging_2019}
Edwin Pednault, John~A. Gunnels, Giacomo Nannicini, Lior Horesh, and Robert
  Wisnieff.
\newblock Leveraging {Secondary} {Storage} to {Simulate} {Deep} 54-qubit
  {Sycamore} {Circuits}.
\newblock {\em arXiv:1910.09534 [quant-ph]}, October 2019.
\newblock arXiv: 1910.09534.

\bibitem{almheiri_bulk_2015}
Ahmed Almheiri, Xi~Dong, and Daniel Harlow.
\newblock Bulk locality and quantum error correction in {AdS}/{CFT}.
\newblock {\em Journal of High Energy Physics}, 2015(4):163, April 2015.

\end{thebibliography}
\bibliographystyle{unsrt}

}

\end{document}